\newcommand{\ignore}[1]{}
\newtheorem{theorem}{Theorem}[section]
\newtheorem{lemma}[theorem]{Lemma}
\newtheorem{corollary}[theorem]{Corollary}
\newtheorem{definition}[theorem]{Definition} 
\newtheorem{remark}{Remark}[section]
\newtheorem{proposition}[theorem]{Proposition}
\newcommand\ex{{\mathbb{E}}}
\newcommand{\beq}{\begin{equation}}
\newcommand{\eeq}{\end{equation}}
\newcounter{fooTH}
\newcounter{fooEQ}
\begin{document}

\title{Improved bounds for  coloring  locally sparse hypergraphs}

\author{
Fotis Iliopoulos
\thanks{This material is based upon work directly supported by the IAS Fund for Math and indirectly supported by the National Science Foundation Grant No. CCF-1900460. Any opinions, findings and conclusions or recommendations expressed in this material are those of the author(s) and do not necessarily reflect the views of the National Science Foundation. This work is also supported by the National Science Foundation Grant No. CCF-1815328.} \\ 
Institute for Advanced Study   \\ 
and Princeton University \\
{\small fotios@ias.edu}
}

\date{\empty}

\maketitle

\begin{abstract}

We  show that, for every $k \ge 2$, every $k$-uniform hypergaph of degree $\Delta$ and girth at least $5$ is efficiently $(1+o(1) )(k-1)  (\Delta / \ln \Delta )^{ 1/(k-1) } $-list colorable. As an application (and to the best of our knowledge) we obtain the currently best algorithm for list-coloring random  hypergraphs of bounded average degree.

 \end{abstract}

\thispagestyle{empty}

\newpage

\setcounter{page}{1}

\newpage
\section{Introduction}

In hypergraph coloring one is given a hypergraph $H(V,E)$ and the goal is to find an assignment of one of $q$ colors to each vertex $v \in V$ so that no hyperedge is monochromatic. In the more general \emph{list-coloring} problem, a list of $q$ allowed colors is specified for each vertex.  A graph is $q$-list-colorable if it has a list-coloring no matter how the lists are assigned to each vertex.
The \emph{list chromatic number}, $\chi_{\ell}(H) $, is the smallest $q$ for which $H$ is $q$-list colorable.

Hypergraph coloring is a fundamental constraint satisfaction problem with several applications in computer science and combinatorics, that has been studied for over 60 years. In this paper we consider the task of  coloring locally sparse hypergraphs and its connection to coloring sparse random hypegraphs.

A hypergraph is  \emph{$k$-uniform} if every hyperedge contains exactly $k$ vertices. An $i$-cycle in a $k$-uniform hypergraph is a collection of $i$ distinct hyperedges spanned by at most $i(k-1)$ vertices. We say that a $k$-uniform hypergraph has girth at least $g$ if it contains no $i$-cycles for $2 \le i < g$.  Note that if a $k$-uniform hypergraph has girth at least $3$ then every two of its hyperedges have at most one vertex in common.

The main contribution of this paper is to prove the following theorem.

\begin{theorem}\label{main_hypergraph}
Let $H$ by any $k$-uniform hypergraph, $k \ge 2$, of maximum degree $\Delta$ and girth at least $5$. For all $\epsilon > 0 $,  there exist a positive constant $\Delta_{\epsilon, k} $ such that if $ \Delta \ge \Delta_{\epsilon,k} $, then
\begin{align}\label{our_bound}
\chi_{\ell}(H) \le  (1+\epsilon)(k-1) \left( \frac{ \Delta}{ \ln  \Delta } \right)^{ \frac{1}{ k-1} }.
\end{align}
Furthermore, if $H$ is a hypergraph on $n$ vertices then there exists an algorithm that constructs such a coloring  in expected polynomial time in $n$. 
\end{theorem}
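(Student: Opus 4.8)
The plan is to color $H$ in two stages: a semi-random ``nibble'' stage that thins the color lists and the hypergraph down to a well-structured residual instance, followed by an algorithmic Lov\'asz-Local-Lemma stage --- carried out in the approximate-resampling-oracle / harmonic-walk style of algorithmic LLL set up in the paper --- that finishes the coloring and keeps everything constructive. Write $q := \lceil (1+\epsilon)(k-1)(\Delta/\ln\Delta)^{1/(k-1)}\rceil$ and fix an arbitrary assignment of lists $L(v)$ with $|L(v)| = q$. A single random coloring together with the symmetric Local Lemma only yields a bound of order $\Delta^{1/(k-1)}$ with no logarithmic saving, so the entire content of the theorem is to win the $\ln\Delta$ factor in the denominator, and that gain is extracted in the nibble.

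For the nibble I would run $T$ rounds, with $T$ a slowly growing function of $\Delta$ (for instance $T = \Theta(\ln\ln\Delta)$). In round $i$ every still-uncolored vertex ``wakes up'' independently with a small probability $p$; a woken vertex draws a color uniformly from its current list of still-usable colors and keeps it unless some incident hyperedge is thereby completed monochromatically, in which case the woken vertices of that hyperedge are uncolored again; finally each uncolored vertex $v$ deletes from its list every color $c$ for which some incident hyperedge now has all of its other $k-1$ vertices permanently colored $c$. The analysis maintains, for suitable decreasing sequences $\ell_i$ and $d_i$, the invariants that every uncolored vertex keeps at least $\ell_i$ usable colors and that for every pair $(v,c)$ at most $d_i$ incident hyperedges are ``one color away'' from forcing $c$ out of $L(v)$. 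Girth at least $5$ is essential here: it makes the radius-two neighborhood of any vertex or hyperedge essentially a tree, so that the colors governing whether $v$ retains $c$, or whether $c$ survives in $L(v)$, behave as though chosen independently, and a bounded-differences / Talagrand concentration inequality then shows the invariants propagate from round $i$ to round $i+1$ with the right constants, down to a residual instance in which $\ell_T$ is still large while $d_T \le \ell_T^{k-1}/\poly(\ln\Delta)$. A union bound, or a secondary Local Lemma over the few vertices where concentration fails, makes this succeed with probability bounded away from $0$, hence in expected polynomial time with restarts.

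On the residual instance I would invoke the algorithmic Local Lemma. Take as flaws, for each surviving hyperedge $e$, the event that $e$ is monochromatic, and for each surviving vertex $v$ the event that its number of usable colors has dropped below $\ell_T/2$; under the product distribution that recolors every vertex uniformly from its current usable list, the separation $d_T \le \ell_T^{k-1}/\poly(\ln\Delta)$ forces each flaw to have probability only $1/\poly(\ln\Delta)$, while dependency degrees remain polynomial in $\Delta$, so the convergence criterion of the (lopsided, commutative) algorithmic Local Lemma is met --- verified through an appropriate choice of flaw charges and a potential-causality analysis --- and this simultaneously shows that a proper list-coloring exists and that the harmonic walk which repeatedly resamples the vertices of a flawed hyperedge reaches one in expected time polynomial in $n$. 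The main obstacle is the nibble bookkeeping: one must choose $p$, $T$ and the sequences $\ell_i, d_i$ so that, on one hand, the per-round shrinkage of $d_i$ outpaces that of $\ell_i$ by just enough to keep $d_i/\ell_i^{k-1}$ decreasing --- this is where the leading constant $(k-1)$ together with the $\ln\Delta$ in the denominator are shown to be exactly what the process can afford --- and, on the other hand, every deviation from the invariants over all rounds, vertices and colors is rare enough to be absorbed without destroying the residual structure. Reconciling these two demands, i.e.\ squeezing the constants as hard as the girth-$5$ concentration permits, is the delicate heart of the argument.
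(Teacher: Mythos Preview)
Your high-level plan --- a semi-random nibble that drives the residual instance into a regime where a single algorithmic Local Lemma application finishes the coloring --- is exactly the paper's plan, and your use of girth $5$ for tree-like local independence and Talagrand-type concentration is the right mechanism. But the proposal has a genuine gap at the heart of the nibble analysis.

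You track a single sequence $d_i$, the number of hyperedges that are ``one color away'' from killing $c$ at $v$. For graphs ($k=2$) that is indeed the only relevant quantity, and Kim's proof works with one parameter. For $k\ge 3$ it does not suffice. A hyperedge adjacent to $v$ that threatens color $c$ can be in any of $k-1$ states, indexed by the number $r\in\{1,\dots,k-1\}$ of still-uncolored vertices (other than $v$) that still carry $c$ in their lists, the remaining $k-1-r$ vertices already being permanently colored $c$. As the nibble runs, edges migrate from state $j$ to state $r<j$ whenever some of their uncolored vertices get colored $c$, so the evolution of your $d_i$ (which is the $r=1$ count, or perhaps the $r=k-1$ count --- your description is ambiguous) is coupled to the evolution of all the other $r$-counts. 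The paper tracks the full family $T_{i,r}$, $r\in[k-1]$, and the recursion for $T_{i+1,r}$ involves every $T_{i,j}$ with $j\ge r$; likewise the probability $\mathrm{Keep}_i$ that a color survives in a list is a product over all $r$. Controlling this interacting system of $k-1$ parameters so that they \emph{all} shrink at the right rate is precisely the technical content that separates the hypergraph case from the graph case, and your sketch does not address it. In particular, the finishing Local Lemma needs $T_{i^*,r}\le c\,L_{i^*}^{\,r}$ for every $r$ simultaneously, not just one inequality of the form $d_T\le \ell_T^{k-1}/\mathrm{poly}(\ln\Delta)$.

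Two smaller points. First, with the activation probability the analysis actually requires (of order $1/\ln\Delta$, so that lists do not collapse), the number of rounds is $\Theta(\ln\Delta\cdot\ln\ln\Delta)$, not $\Theta(\ln\ln\Delta)$. Second, the algorithmic side here is plain Moser--Tardos in the variable setting; there is no need for approximate resampling oracles, harmonic walks, flaw charges, or potential-causality digraphs, and invoking that machinery suggests you are importing a different framework than the one the paper uses.
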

\begin{remark}\label{det_remark}
Theorem~\ref{main_hypergraph} makes no assumption about how $\Delta$ relates to $n$. However,  if $\Delta$ is assumed to be constant, then the algorithm of Theorem~\ref{main_hypergraph} can be efficiently derandomized, i.e., there exists a \emph{deterministic} algorithm that constructs the promised coloring in polynomial time in $n$.
\end{remark}

 Theorem~\ref{main_hypergraph} is interesting for a number of reasons. First, it generalizes a well-known result of Kim~\cite{kim1995brooks} for coloring graphs of degree $\Delta$ and girth $5$, and it implies the classical theorem of Ajtai, Koml\'{o}s, Pintz, Spencer and Szemer\'{e}di~\cite{ajtai1982extremal} regarding the independence number of $k$-uniform hypergraphs of degree $\Delta$ and girth $5$. The latter is a seminal result in combinatorics, with applications in geometry and coding theory~\cite{komlos1982lower,kostochka2001chromatic,lefmann2005sparse}.
 Second, Theorem~\ref{main_hypergraph}  is tight up to a $k$-dependent constant~\cite{bohman2010coloring}. Note also  that, without the girth assumption, the best possible bound~\cite{LLL} on the chromatic number of  $k$-uniform hypergraphs is $O( \Delta^{1/(k-1) }  )$, i.e., it is asymptotically worse than the one of Theorem~\ref{main_hypergraph}. For example, there exist graphs of degree $\Delta$ whose chromatic number is exactly $\Delta+1$.  Third, when it applies, Theorem~\ref{main_hypergraph} improves upon a result of Frieze and Mubayi~\cite{frieze2013coloring} regarding the chromatic number of simple hypergraphs, who showed~\eqref{our_bound} with an unspecified large  leading constant (of order at least $\Omega(k^4)$).     Finally,    Theorem~\ref{main_hypergraph}  can be used to provide the currently best algorithm for list-coloring  random $k$-uniform hypergarphs of bounded average degree (to the best of our knowledge).  We discuss the connection between locally sparse hypergraphs and  sparse random hypergraphs with respect to the task of coloring in the following section.

  \subsection{Application to coloring pseudo-random hypergraphs}

  The random  $k$-uniform hypergraph $H(k,n,p)$ is obtained by choosing each of the  ${ n \choose k } $ $k$-element subsets of a vertex set $V$ ($|V| = n$ ) independently with probability $p$. The chosen subsets are the hyperedges of the hypergraph. 
Note that for $k = 2 $ we have the usual definition of the random graph $G(n,p) $. We say that $H(k,n,p)$  has a certain property $A$ \emph{asymptotically almost surely} or \emph{with high probability}, if the probability that $H \in H(k,n,p) $  has $A$ tends to $1$ as $n \to \infty$.  

In this paper we are  interested in $H(k ,n, d /  {n \choose k-1 }   )$, i.e., the family of random $k$-uniform hypergraphs of bounded average degree $d$. Specifically, we use Theorem~\ref{main_hypergraph}  to prove the following theorem.

 \begin{theorem}\label{randomized_result}
 For any constants $\delta   \in (0,1)$, $k \ge 2$,  there exists $d_{\delta,k} > 0$ such that for every constant $d \ge d_{\delta,k}$,  the random hypergraph $H(k ,n, d /  {n \choose k-1 }   )$ can be   $(1+\delta) (k-1) ( d / \ln d )^{1/(k-1)}$-list-colored by a deterministic algorithm whose running time is polynomial in $n$ asymptotically almost surely. 
 \end{theorem}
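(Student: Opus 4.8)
The plan is to deduce Theorem~\ref{randomized_result} from Theorem~\ref{main_hypergraph} by splitting the random hypergraph $H\sim H(k,n,d/\binom{n}{k-1})$ into a locally sparse ``bulk'' on which Theorem~\ref{main_hypergraph} applies directly and a leftover set $W$ that a palette-reservation argument absorbs. First fix the parameters: since $x\mapsto(x/\ln x)^{1/(k-1)}$ is increasing, for $d$ larger than a suitable $d_{\delta,k}$ one may choose $\epsilon=\epsilon(\delta,k)>0$, a constant $\Delta^{*}=\Delta^{*}(\delta,k)$ of order $d$ with $\Delta^{*}\ge\Delta_{\epsilon,k}$, and a constant integer $R=R(\delta,k)$ exceeding the absolute constants produced below, so that, writing $q:=(1+\delta)(k-1)(d/\ln d)^{1/(k-1)}$ for the target number of colors,
\[
(1+\epsilon)(k-1)\bigl(\Delta^{*}/\ln\Delta^{*}\bigr)^{1/(k-1)}\ \le\ q-R .
\]
This is the only place where the hypothesis that $d$ is large is used.

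\textbf{The decomposition.} Let $W\subseteq V$ be obtained by starting from all vertices lying on an $i$-cycle of $H$ with $2\le i\le 4$ together with all vertices of degree greater than $\Delta^{*}$, and then closing up under the rule: if $u\notin W$ lies in more than $R$ hyperedges that are not entirely contained in $V\setminus W$, add $u$ to $W$. I would prove, by first and second moment estimates together with a bootstrap-percolation analysis of the closure, that a.a.s.\ $W$ has three properties: (a) the induced sub-hypergraph $H[V\setminus W]$ (consisting of the hyperedges of $H$ contained in $V\setminus W$) is $k$-uniform, has maximum degree at most $\Delta^{*}$, and has girth at least $5$; (b) $|W|\le\alpha n$ for a small absolute constant $\alpha$, so the hyperedges of $H$ contained in $W$ form an $O(1)$-degenerate hypergraph; and (c) every vertex of $V\setminus W$ lies in at most $R$ hyperedges meeting $W$. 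Property (a) is immediate from the construction; the substance of (b) and (c) is that, although $W$ has linear size — for constant $d$ the degree sequence of $H$ is essentially $\mathrm{Poisson}(d)$, so the high-degree vertices alone number $\Theta(n)$ — it is still locally sparse and the closure does not snowball, its per-step activation probability remaining far below the starting density $\Pr[\mathrm{Poisson}(d)>\Delta^{*}]$ once $R$ is a constant and $\Delta^{*}$ is of order $d$.

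\textbf{The coloring.} Given lists $(L(v))_{v\in V}$ with $|L(v)|=q$: (1) color $W$ first — by (b) the hyperedges inside $W$ form an $O(1)$-degenerate hypergraph, so a greedy pass along a degeneracy order, using the full lists (whose size $q$ exceeds the degeneracy since $d$ is large), produces a proper list-coloring $\phi_W$ of $H[W]$; (2) for each hyperedge $e$ of $H$ with $1\le|e\cap W|\le k-1$ designate an arbitrary vertex $u^{*}(e)\in e\setminus W$, and for $u\in V\setminus W$ obtain $L'(u)$ from $L(u)$ by deleting, for every such $e$ with $u^{*}(e)=u$ on which $\phi_W$ is constant, that constant value — by (c) at most $R$ colors are deleted, so $|L'(u)|\ge q-R\ge(1+\epsilon)(k-1)(\Delta^{*}/\ln\Delta^{*})^{1/(k-1)}$; (3) apply Theorem~\ref{main_hypergraph} to $H[V\setminus W]$, with parameter $\Delta^{*}$, and the lists $(L'(u))$, and derandomize via Remark~\ref{det_remark} — legitimate because $\Delta^{*}$ is an absolute constant — to obtain a proper list-coloring $\phi_G$. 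Then $\phi_W\cup\phi_G$ is a proper $q$-list-coloring of $H$: a hyperedge contained in $W$, or in $V\setminus W$, is non-monochromatic by the properness of $\phi_W$, resp.\ $\phi_G$, and for a hyperedge $e$ straddling $W$ either $\phi_W$ is non-constant on $e\cap W$, or it is constant there with some value $c$, in which case $\phi_G(u^{*}(e))\ne c$ by step (2); either way $e$ is non-monochromatic. Locating the short cycles, computing the degrees, iterating the closure, the greedy pass, and the derandomized algorithm of Theorem~\ref{main_hypergraph} are all deterministic and run in polynomial time, which proves the theorem.

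\textbf{Main obstacle.} I expect essentially all the work to lie in the structural claim about $W$, in particular in the bound $|W|\le\alpha n$ of (b) and in (c): one must show that closing under the ``more than $R$ straddling hyperedges'' rule terminates with $|W|$ still of order $\Pr[\mathrm{Poisson}(d)>\Delta^{*}]\cdot n$, which amounts to verifying that this starting density lies comfortably below the percolation threshold of the associated bootstrap process on the sparse random hypergraph once $d$ is large. The case $k=2$ is the tightest, since then each hyperedge straddling $W$ removes a color from the list of its unique endpoint outside $W$, so the bound $|L'(u)|\ge q-R$ hinges entirely on the closure having absorbed every over-exposed vertex.
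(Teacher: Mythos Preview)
Your decomposition-then-color scheme is exactly the paper's: isolate a set $W$ (the paper's $U$) containing all short cycles and all high-degree vertices, close it under a rule that limits how much the complement sees of $W$, greedily list-color $H[W]$ via its low degeneracy, and then apply Theorem~\ref{main_hypergraph} (derandomized through Remark~\ref{det_remark}) to the girth-$5$ bulk with slightly shortened lists. The paper packages this as Definition~\ref{pseudo_random_property} ($\delta$-girth-reducibility) and Theorems~\ref{main} and~\ref{main_property}, and its handling of straddling hyperedges---delete from each bulk vertex's list the colors of all its neighbors in $U$---is just a coarser version of your designated-vertex trick.

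Where you diverge is in the quantitative choices, and these make your analysis harder than necessary. The paper takes the closure threshold to be of order $(d/\ln d)^{1/(k-1)}$ neighbors in $U$, not an absolute constant $R$, and accordingly only claims degeneracy $k(d/(\ln d)^2)^{1/(k-1)}=o(q)$ for $H[U]$; that is all the greedy step needs, and your $O(1)$-degeneracy assertion is both stronger than what the natural sparsity estimate delivers and unnecessary. More to the point, the ``main obstacle'' you anticipate---a bootstrap-percolation analysis---evaporates under the paper's larger threshold. The paper first proves (Lemma~\ref{simple_properties}) that a.a.s.\ every set of $s\le nd^{-1/(k-1)}$ vertices spans fewer than $s(d/(\ln d)^2)^{1/(k-1)}$ hyperedges. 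Then, in the closure it absorbs into $U$ not just each offending vertex $v$ but also the witnessing hyperedges; were $|U|$ to reach $nd^{-1/(k-1)}$, each absorbed vertex would have contributed on the order of $(d/(\ln d)^2)^{1/(k-1)}$ hyperedges lying entirely inside $U$, overshooting the sparsity bound. This one first-moment lemma plus a counting contradiction replaces your percolation argument entirely and simultaneously gives both $|U|\le nd^{-1/(k-1)}$ and the degeneracy bound.
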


\begin{remark}
 Note that, for $k,d$ constants, a very standard argument reveals that $H(k,n,d/{n \choose k-1 })$  is essentially equivalent to $\mathbb{H}(k,n, dn/k )$, namely the uniform distribution over $k$-uniform hypergraphs with $n$ vertices and exactly $dn/k$ hyperedges. Thus, Theorem~\ref{randomized_result} extends to that model as well.
\end{remark}

We note that previous approaches~\cite{achlioptas1997analysis,vu2000choice, frieze2013coloring}  for list-coloring random $k$-uniform hypergraphs of bounded average degree $d$ are either randomized, or require significantly larger lists of colors per vertex in order to succeed.  
Indeed, for $k=2$ the approach of Achlioptas and Molloy~\cite{achlioptas1997analysis} matches the bound of Theorem~\ref{randomized_result} but is randomized, while for $k \ge 3$, and to the best our knowledge, our algorithm uses less colors that any algorithm for  list-coloring random hypergraphs of bounded average degree that has been rigorously analyzed in the literature. Moreover, it is believed that all efficient algorithms  require lists of size at least $(1+o(1) ) ((k-1) d / \ln d)^{1/(k-1)}    $, as this bound corresponds to the so-called \emph{shattering threshold}~\cite{mitsaras_barriers,ayre2019hypergraph,gabrie2017phase}  for coloring sparse random hypergraphs, which is also often referred to as the ``algorithmic barrier"~\cite{mitsaras_barriers}. This threshold arises in a plethora of random constraint satisfaction problems, and it corresponds to a precise phase transition in the geometry set of solutions. In all of these problems, we are not aware of any efficient algorithm that works beyond the algorithmic barrier, despite the fact that solutions exist for constraint-densities larger than the one in which the shattering phenomenon appears.   We refer the reader to~\cite{mitsaras_barriers,zdeborova2007phase} for further details.

In order to prove Theorem~\ref{randomized_result}, we show that random $k$-uniform hypergraphs of bounded average degree $d$ can essentially be treated as hypergraphs of girth $5$ and maximum degree $d$ for the purposes of list-coloring, and then apply Theorem~\ref{main_hypergraph}. In particular, we identify a pseudo-random family of hypergraphs which we call \emph{girth-reducible}, and show that almost all $k$-uniform hypergraphs of bounded average degree belong in this class.  Then we show that girth-reducible hypergraphs can be colored efficiently using Theorem~\ref{main_hypergraph}.

Formally, a $k$-uniform  hypergraph $H$ is \emph{$\kappa$-degenerate} if the induced subhypergraph of all subsets of its vertex set has a vertex of degree at most $\kappa$. The \emph{degeneracy} of a hypergraph  $H$ is the smallest value of $\kappa$ for which $H$ is $\kappa$-degenerate. Note that it is known  that $\kappa$-degenerate hypergraphs are $(\kappa+1)$-list colorable and that the degeneracy of a hypergraph can be computed efficiently by an algorithm that repeatedly removes minimum degree vertices. 
Indeed, to list-color a $\kappa$-degenerate hypergraph we repeatedly find a vertex with (remaining) degree at most $\kappa$, assign to it a color that does not appear in any of its neighbors so far, and remove it from the hypergraph. Clearly, if the lists assigned to each vertex are of size at least $\kappa+1$ this procedure always terminates successfully.

\begin{definition}\label{pseudo_random_property}
For $\delta  \in (0,1)$, we say that a $k$-uniform hypergraph $H(V,E)$ of average degree $d$ is  \emph{$\delta$-girth-reducible}  if its vertex set can be partitioned in two sets, $U$ and $V\setminus U$, such that:
\begin{enumerate}[(a)]

\item  $U$ contains all cycles of length at most $4$, and all vertices of degree larger than $(1+\delta)d$;

\item subhypergraph $H[U] $ is $  \left(\frac{ d  } {\ln d } \right)^{\frac{1}{k-1} } $-degenerate;

\item every vertex in $V \setminus U$ has at most  $\delta  \left( \frac{ d }{ \ln d} \right)^{ \frac{1}{k-1} }$ neighbors in $U$.\label{def_c}

\end{enumerate}
\end{definition}

In words, a hypergraph is $\delta$-girth-reducible if its vertex set can be seen as the union of two parts: A ``low-degeneracy" part, which contains all vertices of degree more than $(1+\delta)d$ and all cycles of lengths at most $4$, and a ``high-girth" part, which induces a hypergraph of maximum degree at most $(1+\delta)d$ and girth $5$. Moreover, each vertex in the ``high-girth" part has only a few neighbors in the ``low-degeneracy" part.

Note   that given a $\delta$-girth-reducible hypergraph we can  efficiently find the promised partition $(U, V\setminus U)$ as follows. We start with $U := U_0$, where $U_0$ is the set of vertices that either have degree at least $(1+\delta) d$, or they are contained in a cycle of length at most $4$. Let $\partial U$ denote the vertices in $V \setminus U$ that violate property~\eqref{def_c}.  While $\partial U \ne \emptyset$, update $U$ as $U: = U \cup \partial U$.
The correctness of the process lies in the fact that in each step we add to the current  $U$ a set of vertices that must be in the low-degeneracy part of the hypergraph.  Observe also that this process allows us to  efficiently check whether a hypergraph is $\delta$-girth-reducible.

 We prove the following theorem regarding the list-chromatic number of girth-reducible hypergraphs.

\begin{theorem}\label{main}
For any constants $\delta \in ( 0,1)$ and $k \ge 2$, there exists $d_{\delta,k} > 0$ such that if $H$ is a $\delta$-girth-reducible,  $k$-uniform hypergraph of average degree $d \ge d_{\delta,k}$, then
\begin{align*}
\chi_{\ell}(H) \le  (1+\epsilon)(k-1) \left( \frac{ d}{ \ln d } \right)^{\frac{1}{k-1} },
\end{align*}
where $\epsilon = 4 \delta = O(\delta)$.  Furthermore, if $H$ is a hypergraph on $n$ vertices then there exists a deterministic algorithm that constructs such a coloring  in time polynomial in $n$.
\end{theorem}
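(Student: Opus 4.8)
The plan is to use the partition $(U,W)$ with $W := V \setminus U$ guaranteed by $\delta$-girth-reducibility, and to color $H$ in two stages: first greedily list-color the ``low-degeneracy'' part $H[U]$, then list-color the ``high-girth'' part $H[W]$ by invoking Theorem~\ref{main_hypergraph}, after slightly shrinking the lists of the vertices of $W$ so as to neutralize the hyperedges straddling $U$ and $W$. Write $\kappa := \left( \frac{d}{\ln d} \right)^{1/(k-1)}$, so that property (b) states that $H[U]$ is $\kappa$-degenerate and property~\eqref{def_c} states that every $v \in W$ has at most $\delta\kappa$ neighbors in $U$; set $q := \lceil (1+4\delta)(k-1)\kappa \rceil$ and assume every vertex of $H$ receives a list of $q$ colors, our goal being to exhibit a proper coloring from these lists via a deterministic polynomial-time algorithm. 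First I would compute $(U,W)$ in polynomial time using the procedure described after Definition~\ref{pseudo_random_property}, and then list-color $H[U]$ by the standard degeneracy argument: compute an ordering of $U$ by repeatedly deleting a vertex that lies in the fewest hyperedges, and color the vertices in the reverse of this order; when a vertex $u$ is colored, the hyperedges of $H[U]$ through $u$ all of whose other vertices are already colored number at most $\kappa$, so at most $\kappa$ colors of $L(u)$ are forbidden and, since $q \ge \kappa + 1$ for $d$ large, a valid color remains. This deterministic step produces a coloring $\phi_U$ of $U$ under which no hyperedge contained in $U$ is monochromatic.

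For the second stage I would first handle the straddling hyperedges by list-shrinking: for each $v \in W$, let $B(v)$ be the set of colors $c$ such that some hyperedge $e \ni v$ with $e \cap U \ne \emptyset$ has $\phi_U(w) = c$ for all $w \in e \cap U$, and set $L'(v) := L(v) \setminus B(v)$. The crucial point is that each $v \in W$ lies in at most $\delta\kappa$ straddling hyperedges: since $U$ contains every $2$-cycle (property (a)), any two hyperedges through $v$ intersect only in $v$, so distinct straddling hyperedges through $v$ contain disjoint, nonempty sets of $U$-neighbors of $v$, of which there are at most $\delta\kappa$ by property~\eqref{def_c}. Hence $|B(v)| \le \delta\kappa$ and $|L'(v)| \ge q - \delta\kappa$ for every $v \in W$.

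Next I would color $H[W]$ from the reduced lists $L'$. The subhypergraph $H[W]$ is $k$-uniform, has girth at least $5$, and has maximum degree at most $(1+\delta)d$ (both facts being immediate from property (a)), so Theorem~\ref{main_hypergraph} applied with error parameter $\delta$ yields, for $d$ large enough, $\chi_\ell(H[W]) \le (1+\delta)(k-1)\left( \frac{(1+\delta)d}{\ln((1+\delta)d)} \right)^{1/(k-1)} \le (1+\delta)^2(k-1)\kappa$. A short computation (using $(k-1)(2-\delta) \ge 1$, valid for all $k \ge 2$ and $\delta \in (0,1)$) then shows $q - \delta\kappa \ge (1+\delta)^2(k-1)\kappa \ge \chi_\ell(H[W])$, so $H[W]$ admits a proper list-coloring $\phi_W$ from the lists $L'$; and since $d,\delta,k$ are constants, $\Delta := (1+\delta)d$ is constant, so by Remark~\ref{det_remark} this step is deterministic and runs in time polynomial in $n$. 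Finally $\phi := \phi_U \cup \phi_W$ is the desired coloring: hyperedges contained in $U$ or in $W$ are non-monochromatic by $\phi_U$ and $\phi_W$ respectively, while a straddling hyperedge $e$ is non-monochromatic because either $\phi_U$ is non-constant on $e \cap U$, or $\phi_U(w) = c$ for all $w \in e \cap U$, in which case $c \in B(v)$ for every $v \in e \cap W \ne \emptyset$ and hence no vertex of $e$ received color $c$.

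The step I expect to be the main obstacle is the straddling-hyperedge bookkeeping in the second stage, namely the observation that a vertex of $W$ lies in only $O(\delta\kappa)$ straddling hyperedges --- rather than in up to its full degree, which is of order $d$ --- and this is exactly what makes the list-shrinking affordable; everything else is routine verification that the constant $1 + 4\delta$ leaves enough slack both for the greedy coloring of $H[U]$ (which needs $q \ge \kappa + 1$) and for the application of Theorem~\ref{main_hypergraph} to $H[W]$ (which needs $q - \delta\kappa \ge \chi_\ell(H[W])$).
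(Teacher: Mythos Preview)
Your proposal is correct and follows the same two-stage strategy as the paper: greedily color $H[U]$ using its $\kappa$-degeneracy, shrink each list in $W$ by at most $\delta\kappa$ colors to neutralize the straddling hyperedges, and invoke Theorem~\ref{main_hypergraph} (together with Remark~\ref{det_remark}) on the girth-$5$, degree-$(1+\delta)d$ hypergraph $H[W]$. The only cosmetic difference is in the list-shrinking step: the paper simply deletes from $L(v)$ the color of every $U$-neighbor of $v$, whereas you delete only the ``dangerous'' colors and then bound their number by the number of straddling hyperedges through $v$ via the $2$-cycle argument --- both routes yield the same $\delta\kappa$ bound, and the remaining arithmetic verification that $(1+4\delta)(k-1)\kappa - \delta\kappa \ge (1+\delta)^2(k-1)\kappa$ matches the paper's check that $(1+\delta)(1+\delta)^{1/(k-1)} < 1+3\delta$.
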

\begin{proof}[Proof of Theorem~\ref{main}]
 Let $\epsilon = 4 \delta$. Given lists of colors of size $(1+\epsilon)(k-1) \left(\frac{d}{\ln d} \right)^{ \frac{1}{k-1} }   $ for each vertex of $H$, we first color the vertices of $U$ using the greedy algorithm which exploits the low degeneracy of $H[U]$. Now each vertex in $V - U$ has at most $\delta \left( \frac{ d }{ \ln d} \right)^{ \frac{1}{k-1} } $ forbidden colors in its list as it has at most that many neighbors in $U$. We delete these colors from the list. Observe that if we manage to properly color the induced subgraph $H[V \setminus U]$ using colors from the updated lists, then we are done since every hyperedge with vertices both in $U$ and $V\setminus U$ will be automatically ``satisfied", i.e., it cannot be monochromatic. Notice now that the updated list of each vertex still contains at least  $  (1 +3 \delta   ) (k-1)\left(\frac{ d}{ \ln d } \right)^{ \frac{1}{k-1} }   $ colors, for sufficiently large $d$. Since the induced subgraph $H[V \setminus U ] $ is of girth at least $5$ and of maximum degree at most $(1+\delta) d$, it is  efficiently $(1+\delta)  (k-1)  \left( \frac{ (1+\delta)d}{ \ln \left((1+\delta) d \right)}\right)^{\frac{1}{k-1}}  $-list-colorable for sufficiently large $d$  per Theorem~\ref{main_hypergraph} and Remark~\ref{det_remark}. This concludes the proof since $(1 + \delta )( 1+ \delta )^{ \frac{1}{k-1} }  < (1 + 3\delta) $.
  
\end{proof}

Moreover, we show that girth-reducibility is a pseudo-random property which is admitted by almost all sparse $k$-uniform hypregraphs.

\begin{theorem}\label{main_property}
 For any constants $\delta   \in (0,1)$, $k \ge 2$,  there exists $d_{\delta,k} > 0$ such that for every constant $d \ge d_{\delta,k}$,  asymptotically almost surely, the random hypergraph $H(k ,n, d /  {n \choose k-1 }   )$ is $\delta$-girth-reducible.
 \end{theorem}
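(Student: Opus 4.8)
The plan is to exhibit, for a typical $H\sim H(k,n,d/\binom{n}{k-1})$, the partition $(U,V\setminus U)$ required by Definition~\ref{pseudo_random_property}, using exactly the closure construction described just after that definition. Let $U_0$ be the set of vertices of $H$ that lie on a cycle of length at most $4$, together with the vertices of degree more than $(1+\delta)d$, and let $U\supseteq U_0$ be obtained from $U_0$ by iterating, one vertex at a time, the rule ``add a vertex if it has more than $\delta\,(d/\ln d)^{1/(k-1)}$ neighbours in the current set''; write $t:=(d/\ln d)^{1/(k-1)}$. Since $V$ is finite the process terminates, its output is independent of the order of additions (monotonicity), and that output automatically satisfies condition (a) (because $U\supseteq U_0$) and condition (c) (this is the stopping rule). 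So it remains to prove that, with high probability, (i) $|U|\le\alpha n$ for a suitable small constant $\alpha=\alpha(\delta,k)\in(0,1)$, and (ii) $H[U]$ is $t$-degenerate.

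First I would bound $U_0$. A routine first-moment estimate shows that the expected number of vertices of $H$ lying on an $i$-cycle is $O_{d,k}(1)$ for each fixed $2\le i\le 4$, so by Markov this count is $O(\log n)$ with high probability. The degree of a fixed vertex is $\mathrm{Bin}\!\big(\binom{n-1}{k-1},\,d/\binom{n}{k-1}\big)$ with mean $(1+o(1))d$, so a Chernoff bound gives that it exceeds $(1+\delta)d$ with probability at most $e^{-c_\delta d}$; hence the expected number of such vertices is at most $\eta n$ with $\eta:=e^{-c_\delta d}$, and a bounded-differences argument upgrades this to at most $2\eta n$ with high probability. Altogether $|U_0|\le 3\eta n$ whp. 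This is exactly where the hypothesis $d\ge d_{\delta,k}$ enters: $\eta$ decays \emph{exponentially} in $d$, whereas the degeneracy budget $t/d=d^{(2-k)/(k-1)}(\ln d)^{-1/(k-1)}$, and hence the admissible value of $\alpha$, decays only \emph{polynomially} in $d$; so for $d$ large enough $\eta$ sits far below every threshold that appears, in particular $\eta\le\alpha/12$.

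The technical heart is a sparsity lemma: with high probability $H$ has the property that, for a suitably small constant $\alpha=\alpha(\delta,k)$, every vertex subset $S$ with $|S|\le\alpha n$ is met in two or more vertices by at most $\tfrac{\delta}{2k(k-1)}\,t\,|S|$ hyperedges. This is a union bound over all $S$ with $|S|=s\le\alpha n$: the number of hyperedges hitting $S$ twice is dominated by a binomial with mean of order $\tfrac{d(k-1)}{2}\cdot\tfrac{s}{n}\cdot s$, and $\binom ns$ times the corresponding Chernoff tail is below $1$ once $\alpha$ is a small enough constant (here one uses that $t$ is large and that each vertex of $S$ carries an extra factor $s/n$); the calculation should be run separately in the regime of very small $s$, where it is extremely strong, and in the regime up to $\alpha n$, where it is tightest. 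Since $\delta<1\le k-1$, this bound in particular yields that at most $\tfrac{1}{2k}\,t\,|S|$ hyperedges lie fully inside $S$.

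Finally I would assemble the pieces. For (ii): for every $W\subseteq U$ we have $|W|\le|U|\le\alpha n$, so by the lemma $H[W]$ contains at most $\tfrac{1}{2k}t\,|W|$ hyperedges, hence (its degree sum being $k$ times that) a vertex of degree below $t$; as this holds for all $W\subseteq U$, the hypergraph $H[U]$ is $t$-degenerate. For (i): if $|U|>\alpha n$, let $S$ consist of $U_0$ together with the first $\lceil\alpha n\rceil-|U_0|$ vertices added by the closure, so $|S|=\lceil\alpha n\rceil$ and every vertex of $S\setminus U_0$ was added because it had more than $\delta t$ neighbours in a subset of $S$; since $|U_0|\le 3\eta n\le\tfrac12|S|$, at least $\tfrac12|S|$ vertices of $S$ each have more than $\delta t$ neighbours inside $S$, and the hyperedges realising these neighbourhoods each meet $S$ in at least two vertices while being counted for at most $k$ of their vertices, so there are strictly more than $\tfrac{\delta}{2k(k-1)}\,t\,|S|$ hyperedges meeting $S$ in two or more vertices --- contradicting the lemma. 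Hence $|U|\le\alpha n$, which is exactly what (ii) used, and the proof is complete. I expect the sparsity lemma, together with the bookkeeping in (i) that converts ``many neighbours'' into ``many hyperedges meeting $S$ twice,'' to be the main obstacle; everything rests on choosing $\alpha$, the slack in the Chernoff union bound, and the threshold $d_{\delta,k}$ so that the exponentially small density of $U_0$ genuinely fits inside the only-polynomially-small room left by the target degeneracy $t$.
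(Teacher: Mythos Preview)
Your approach is essentially the paper's: prove a sparsity lemma for small vertex sets, bound $|U_0|$, run the closure, and derive a contradiction from sparsity if $|U|$ ever reaches $\alpha n$. This is precisely the route of Section~\ref{properties} (Lemmas~\ref{simple_properties}, \ref{part_b_random}, \ref{part_c_random}), with only a cosmetic difference in the closure: the paper absorbs whole hyperedges so it can count edges \emph{spanned by} $U$, whereas you add single vertices and therefore count edges meeting $U$ in at least two vertices. Both variants go through by the same union bound.

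There is one genuine, if easily repaired, slip. A bounded-differences (McDiarmid) argument on the edge indicators does \emph{not} concentrate the number of high-degree vertices: flipping one edge changes that count by at most $k$, but there are $\binom{n}{k}$ indicators, so McDiarmid controls deviations of order $k\binom{n}{k}^{1/2}\gg \eta n$; already for $k=2$ the failure probability is a constant rather than $o(1)$. The correct tool is the second moment: since any two vertex degrees share only $O(d/n)$ expected common hyperedges, one gets $\mathrm{Var}(X)=O_{d,k}(n)$ and Chebyshev yields $X\le 2\eta n$ with probability $1-O(1/n)$. The paper does exactly this (Lemma~\ref{degree_dist} and Corollary~\ref{degree_dist_corollary}), and in fact splits off the very-high-degree regime with a separate first-moment argument (Lemma~\ref{part_b_random}). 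A smaller point: your $\alpha$ must be allowed to depend on $d$, not just on $(\delta,k)$; the upper end of the admissible range in your union bound is of order $t/d$, which shrinks with $d$ (the paper's choice is $\alpha=d^{-1/(k-1)}$).
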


Theorem~\ref{main_property}  follows  by simple, although somewhat technical, considerations on properties of sparse random  hypergraphs, which are mainly inspired by the  results of Alon, Krivelevich and Sudakov~\cite{alon1999list} and  {\L}uczak~\cite{luczak1991chromatic}.  Observe that combining Theorem~\ref{main_property} with Theorem~\ref{main} immediately implies Theorem~\ref{randomized_result}.

Overall, the task of coloring locally sparse hypergraphs is  inherently related to the average-case complexity of coloring. In particular, in this section  we showed that Theorem~\ref{main_hypergraph}  implies a \emph{robust} algorithm for hypergraph coloring, namely a deterministic procedure that applies to worst-case $k$-uniform hypergraphs, while at the same using a number of colors that is only a $(k-1)$-factor away from the algorithmic barrier for random instances (matching it for $k=2$).  We remark that this application is inspired   by  recent results that study the connection between local sparsity and efficient randomized algorithms for coloring sparse regular random graphs~\cite{molloy2019list,AIS,davies2020graph}.

\subsection{Technical overview}\label{technical}

The intuition behind the proof of Theorem~\ref{main_hypergraph} comes from the following   observation, which we explain in terms of graph coloring for simplicity. Let $G$ be a triangle-free graph of  degree $\Delta$, and assume that each of its vertices is assigned an arbitrary  list of $q$ colors. Fix a vertex $v$ of $G$, and consider the random experiment in which  the neighborhood of $v$ is properly list-colored randomly. Since $G$ contains no triangles, this amounts to assigning to each neighbor of $v$ a color from its list randomly and independently. Assuming that  $q \ge q^{*} :=  (1+\epsilon) \Delta/  \ln \Delta   $,  the expected number of \emph{available} colors for $v$, i.e., the colors from the list of $v$  that do not appear in any of its neighbors, is at least $q (1- 1/q  )^{ \Delta } = \omega( \Delta^{ \epsilon/2}) $.  In fact, a simple concentration argument reveals that the number of available colors for $v$ in the end of this experiment is at least $\Delta^{\epsilon/2 }$ with probability that goes to $1$ as $\Delta $ grows. To put it differently, as long as $q \ge q^*$, the vast majority of valid ways to list-color the neighborhood of $v$ ``leaves enough room" to color $v$ without creating any monochromatic edges.

A completely analogous observation regarding the ways to properly color the neighborhood of a vertex can be made for $k$-uniform hypergraphs.  In order to exploit it we employ the so-called  \emph{semi-random method}, which is the main tool behind some of the strongest graph coloring results, e.g.,~\cite{JO,JO2,kahnChrom,kahnListChrom,kang2021proof,molloy1998bound,vu2002general}, including the one of Kim~\cite{kim1995brooks}.  (See also the very recent survey~\cite{kang2021graph} of Kang et. al. on the subject.) The idea is to gradually color the hypergraph in iterations until we reach a point where we can finish the coloring with a simple, e.g., greedy, algorithm. In its most basic form, each iteration consists of the following simple procedure (using graph vertex coloring as a canonical example):  Assign to each vertex a color chosen uniformly at random; then uncolor any vertex that receives the same color as one of its neighbors.  Using the Lov\'{a}sz Local Lemma~\cite{LLL} and concentration inequalities, one typically shows that, with positive probability, the resulting partial coloring has useful properties that allow for the continuation of the argument in the next iteration. (In fact, using the Moser-Tardos algorithm~\cite{MT} this approach yields efficient, and often times deterministic~\cite{MT,determ,harris2023deterministic}, algorithms.) Specifically, one keeps track of certain parameters of the current partial coloring and makes sure that, in each iteration, these parameters evolve almost as if the coloring was totally random. For example, recalling the  heuristic experiment of the previous paragraph, one of the  parameters we would like  to keep track of in our case is a lower bound on  the number of available colors of each vertex in the hypergraph: If this parameter evolves ``randomly" throughout the process, then the  vertices that remain uncolored in the end  are guaranteed to have a non-trivial number of available colors.

Applications of the semi-random method tend to be technically intense and this is even more so  in our case, where we have to deal with constraints of large arity.  Large constraints  introduce several difficulties, but the most important one is that our algorithm  has to control many parameters that interact with each other. Roughly, in order to guarantee the properties that allow for the continuation of the argument in the next iteration, for each uncolored vertex $v$, each color $c$  in the list of $v$, and  each integer $r \in [k-1]$, we should keep track of a lower bound on the  number of adjacent to $v$ hyperedges that have $r$ uncolored vertices and $k-1-r$ vertices colored $c$. Clearly, these parameters are not independent of each other throughout the process, and so the main challenge is to design and analyze a coloring procedure in which all of them, simultaneously,  evolve essentially randomly.

From a technical perspective, our approach generalizes the work of Kim~\cite{kim1995brooks} from graphs to hypergraphs. This distinguishes our work from that of Frieze and Mubayi~\cite{frieze2013coloring}, who instead generalized the approach of Johansson~\cite{JO}. While Johansson’s method—which established that the chromatic number of triangle-free graphs of degree $\Delta$ is $O( \frac{\Delta}{\ln \Delta})$—can handle graphs with $4$-cycles (unlike Kim's), it is known to fall short of the constant corresponding to the algorithmic barrier for coloring sparse random graphs. Consequently, the result of Frieze and Mubayi inherits this limitation. Furthermore, while the recent result of Molloy~\cite{molloy2019list} attains this constant, its technique does not appear to generalize readily to hypergraphs. We demonstrate that Kim's technique, however, does extend to hypergraphs, yielding a bound within a $(k-1)$-factor of the algorithmic barrier—the strongest result to date.

\subsection{Organization of the paper} 
The paper is organized as follows.
In Section~\ref{Background} we present the necessary background.    In Section~\ref{main_hypergaph_proof} we present the algorithm and state the key lemmas for the proof of Theorem~\ref{main_hypergraph}, while in Section~\ref{omitted_hyper} we give the full details.  Finally, in Section~\ref{properties} we prove Theorem~\ref{main_property}.

\section{Background and preliminaries}\label{Background}

In this section we give some background on the technical tools that we will use in our proofs.
\subsection{The  Lov\'{a}sz Local Lemma}

As we have already mentioned, one of the key tools we will use in our proof is the Lov\'{a}sz Local Lemma (LLL)~\cite{LLL}.
\begin{theorem}\label{generalLLL}
Consider a set  $\mathcal{B} = \{B_1, B_2,\ldots,B_m\}$  of (bad) events. For each $B \in \mathcal{B}$, let $D(B) \subseteq	\mathcal{B} \setminus \{B\}$ be such that $\Pr[ B \mid \bigcap_{C \in S} \overline{C}] =  \Pr[B]$ for every $S \subseteq \mathcal{B} \setminus (D(B) \cup \{B\})$. If there is a function $x:\mathcal{B} \rightarrow  (0,1)$ satisfying
\begin{equation}\label{eq:LLL}
\Pr[B] \le x(B) \prod_{C \in D(B)} (1-x(C))  \enspace \text{  for all $B \in \mathcal{B}$}, 
\end{equation}
then the probability that none of the events in $\mathcal{B}$ occurs is at least $\prod_{B \in \mathcal{B}} (1- x(B))> 0$. 
\end{theorem}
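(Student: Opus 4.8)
The plan is the standard inductive argument. The heart of the matter is to establish the following claim: for every $B \in \mathcal{B}$ and every $S \subseteq \mathcal{B} \setminus \{B\}$ with $\Pr[\bigcap_{C \in S} \overline{C}] > 0$, one has $\Pr[B \mid \bigcap_{C \in S} \overline{C}] \le x(B)$. I would prove this by induction on $|S|$. The base case $|S| = 0$ is immediate from~\eqref{eq:LLL}, since $\prod_{C \in D(B)}(1 - x(C)) \le 1$. For the inductive step, split $S = S_1 \cup S_2$ where $S_1 = S \cap D(B)$ and $S_2 = S \setminus D(B)$, and write
\begin{equation*}
\Pr\Bigl[B \;\Bigm|\; \bigcap_{C \in S} \overline{C}\Bigr] = \frac{\Pr\bigl[B \cap \bigcap_{C \in S_1} \overline{C} \;\bigm|\; \bigcap_{C \in S_2} \overline{C}\bigr]}{\Pr\bigl[\bigcap_{C \in S_1} \overline{C} \;\bigm|\; \bigcap_{C \in S_2} \overline{C}\bigr]}.
\end{equation*}

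For the numerator I would bound it from above by $\Pr[B \mid \bigcap_{C \in S_2} \overline{C}]$, and then invoke the mutual independence hypothesis of the theorem: since $S_2 \subseteq \mathcal{B} \setminus (D(B) \cup \{B\})$, this equals $\Pr[B]$, which is at most $x(B) \prod_{C \in D(B)} (1 - x(C))$ by~\eqref{eq:LLL}. For the denominator, enumerate $S_1 = \{C_1, \dots, C_r\}$ and expand it as a telescoping product $\prod_{i=1}^{r} \Pr[\overline{C_i} \mid \overline{C_1} \cap \cdots \cap \overline{C_{i-1}} \cap \bigcap_{C \in S_2} \overline{C}]$; each conditioning set here has size at most $|S| - 1$, so the inductive hypothesis gives $\Pr[C_i \mid \cdots] \le x(C_i)$, hence the denominator is at least $\prod_{i=1}^{r}(1 - x(C_i)) \ge \prod_{C \in D(B)}(1 - x(C))$ because $S_1 \subseteq D(B)$ and each factor lies in $(0,1)$. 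Dividing, the $\prod_{C \in D(B)}(1-x(C))$ factors cancel and we obtain $\Pr[B \mid \bigcap_{C \in S} \overline{C}] \le x(B)$, completing the induction.

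With the claim in hand, the theorem follows by another telescoping expansion:
\begin{equation*}
\Pr\Bigl[\bigcap_{i=1}^{m} \overline{B_i}\Bigr] = \prod_{i=1}^{m} \Pr\Bigl[\overline{B_i} \;\Bigm|\; \overline{B_1} \cap \cdots \cap \overline{B_{i-1}}\Bigr] \ge \prod_{i=1}^{m} \bigl(1 - x(B_i)\bigr) > 0,
\end{equation*}
where the inequality on each factor is the claim applied with $S = \{B_1, \dots, B_{i-1}\}$, and positivity of $x$ on all of $\mathcal{B}$ gives the strict final inequality.

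The one point requiring care — and the only real obstacle — is bookkeeping around conditioning on probability-zero events: one must verify inductively that every conditioning set appearing above has positive probability, so that all the conditional probabilities are well defined. This is handled simultaneously with the main induction, since the lower bound $\prod(1 - x(C)) > 0$ on the relevant denominators is exactly what guarantees the next conditioning event has positive probability; in particular the telescoping expansion in the last display is legitimate step by step. Everything else is routine manipulation of conditional probabilities together with the two hypotheses of the theorem (the mutual independence condition and inequality~\eqref{eq:LLL}).
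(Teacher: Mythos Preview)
Your argument is correct and is the standard inductive proof of the general Lov\'asz Local Lemma (essentially the one in Erd\H{o}s--Lov\'asz and reproduced in, e.g., Alon--Spencer or Chapter~19 of~\cite{mike_book}). Note, however, that the paper does not actually prove Theorem~\ref{generalLLL}: it is stated as background with a citation to~\cite{LLL}, so there is no ``paper's own proof'' to compare against; your write-up simply supplies what the references would.
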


In particular, we will need the following two corollaries of Theorem~\ref{generalLLL}. For their proofs, the reader is referred to Chapter 19 in~\cite{mike_book} (see also Remark~\ref{llls}).

\begin{corollary}\label{MikeLLL}
Consider a set  $\mathcal{B} = \{B_1,\ldots,B_m\}$  of (bad) events. For each $B \in \mathcal{B}$, let $D(B) \subseteq	\mathcal{B} \setminus \{B\}$ be such that $\Pr[ B \mid \bigcap_{C \in S} \overline{C}] =   \Pr[B]$ for every $S \subseteq \mathcal{B} \setminus (D(B) \cup \{B\})$.  If for every $B \in \mathcal{B}$:
\begin{enumerate}[(a)]
\item   $\Pr[B] \le  \frac{1}{4} $;

\item$ \sum_{ C \in D(B)  } \Pr[C] \le \frac{1}{4} $, 
\end{enumerate}
 then the probability that none of the events in $\mathcal{B}$  occurs is strictly positive.
\end{corollary}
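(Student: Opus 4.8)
The plan is to derive Corollary~\ref{MikeLLL} from the general Lov\'{a}sz Local Lemma (Theorem~\ref{generalLLL}) by exhibiting an explicit choice of the function $x\colon \mathcal{B}\to(0,1)$ and verifying condition~\eqref{eq:LLL}. The natural guess, standard in these symmetric-flavored applications, is to take $x(B) = 2\Pr[B]$ for every $B\in\mathcal{B}$; hypothesis (a) guarantees $\Pr[B]\le \tfrac14$, so $x(B)\le\tfrac12\in(0,1)$, which makes this a legitimate choice. With this substitution, the inequality we must check becomes
\begin{equation*}
\Pr[B] \;\le\; 2\Pr[B]\prod_{C\in D(B)} \bigl(1-2\Pr[C]\bigr),
\end{equation*}
or equivalently, after dividing by $2\Pr[B]$ (and noting the statement is trivial when $\Pr[B]=0$), that $\prod_{C\in D(B)}\bigl(1-2\Pr[C]\bigr)\ge \tfrac12$.

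The second step is therefore purely an inequality about the numbers $\{\Pr[C] : C\in D(B)\}$, whose sum is at most $\tfrac14$ by hypothesis (b). The clean way to bound the product from below is to use the elementary inequality $1-t \ge e^{-2t}$, which is valid for $t\in[0,\tfrac12]$ — and indeed each $2\Pr[C]\le\tfrac12$ lies in this range since $\Pr[C]\le\tfrac14$ by (a) applied to $C$. Hence
\begin{equation*}
\prod_{C\in D(B)}\bigl(1-2\Pr[C]\bigr) \;\ge\; \prod_{C\in D(B)} e^{-4\Pr[C]} \;=\; \exp\Bigl(-4\sum_{C\in D(B)}\Pr[C]\Bigr) \;\ge\; e^{-1} \;>\; \tfrac12 .
\end{equation*}
This establishes~\eqref{eq:LLL} for every $B$, and Theorem~\ref{generalLLL} then yields that with positive probability none of the events in $\mathcal{B}$ occurs, which is exactly the conclusion. (An alternative to the exponential bound is the weighted AM–GM / Weierstrass-product estimate $\prod(1-a_i)\ge 1-\sum a_i$, giving $\prod(1-2\Pr[C])\ge 1-2\sum\Pr[C]\ge 1-\tfrac12=\tfrac12$; either route works, and the second is even tighter at the boundary, so I would probably present that one.)

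There is no real obstacle here: the only point requiring a word of care is checking that the chosen $x(B)$ genuinely lies in the open interval $(0,1)$ required by Theorem~\ref{generalLLL}, which is where hypothesis (a) is used, and confirming that the dependency sets $D(B)$ satisfy the conditional-probability mutual-independence condition — but that condition is simply inherited verbatim from the hypothesis of the corollary, since the same family $D(\cdot)$ is used. I would also remark that the proof is essentially the one given in Chapter~19 of~\cite{mike_book}, to which the statement already refers the reader; the argument is reproduced here only for completeness.
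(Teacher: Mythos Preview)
Your proposal is correct and follows exactly the standard argument: set $x(B)=2\Pr[B]$ and use the Weierstrass-type bound $\prod(1-2\Pr[C])\ge 1-2\sum\Pr[C]\ge \tfrac12$ to verify~\eqref{eq:LLL}. The paper gives no proof of its own and simply refers the reader to Chapter~19 of~\cite{mike_book}, which presents precisely this argument, so your write-up is fully in line with the intended one.
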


\begin{corollary}\label{symmetric_LLL}
Consider a set $\mathcal{B} = \{ B_1, B_2, \ldots, B_m \}$  of  (bad) events such that for each $B \in \mathcal{B}$:
\begin{enumerate}[(a)]
\item $ \Pr[ B ] \le p < 1 $; 
\item $B$ is mutually independent of a set of all but at most $\Delta$ of the other events. 
\end{enumerate}
If $4 p \Delta \le 1$ then with positive probability, none of the events in $\mathcal{B}$ occur.
\end{corollary}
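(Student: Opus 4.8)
The plan is to derive this symmetric version directly from the general Lovász Local Lemma (Theorem~\ref{generalLLL}) by taking the function $x$ to be constant. If $\Delta = 0$ the statement is trivial (every $B$ is mutually independent of all other events, so applying Theorem~\ref{generalLLL} with $x(B) = p$ and empty dependency sets finishes it), so assume $\Delta \ge 1$. For each $B \in \mathcal{B}$, hypothesis~(b) supplies a set $D(B) \subseteq \mathcal{B}\setminus\{B\}$ with $|D(B)| \le \Delta$ such that $B$ is mutually independent of the family $\mathcal{B}\setminus(D(B)\cup\{B\})$; since independence from a family of events entails independence from every event in the $\sigma$-algebra it generates, in particular from $\bigcap_{C\in S}\overline{C}$, the sets $D(B)$ satisfy the hypothesis $\Pr[B \mid \bigcap_{C\in S}\overline C] = \Pr[B]$ of Theorem~\ref{generalLLL}.

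Next I would set $x(B) := 2p$ for every $B \in \mathcal{B}$. Because $4p\Delta \le 1$ and $\Delta \ge 1$ we have $p \le 1/4$, hence $x(B) = 2p \le 1/2$, so $x$ maps $\mathcal{B}$ into $(0,1)$. It remains to verify inequality~\eqref{eq:LLL}. Using $|D(B)| \le \Delta$ and $1-x(C) = 1-2p \ge 1/2$,
\[
x(B)\prod_{C\in D(B)}(1-x(C)) \;\ge\; 2p\,(1-2p)^{\Delta},
\]
so it suffices to check $(1-2p)^{\Delta} \ge 1/2$, after which \eqref{eq:LLL} follows from $\Pr[B] \le p$. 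For this I would use $2p \le \tfrac{1}{2\Delta}$ to write $(1-2p)^{\Delta} \ge \bigl(1-\tfrac{1}{2\Delta}\bigr)^{\Delta} = \Bigl(\bigl(1-\tfrac{1}{2\Delta}\bigr)^{2\Delta}\Bigr)^{1/2} \ge (1/4)^{1/2} = 1/2$, invoking the elementary fact that $(1-1/m)^m$ is nondecreasing in $m$ with value $1/4$ at $m = 2$ (applied with $m = 2\Delta \ge 2$). Theorem~\ref{generalLLL} then yields that, with probability at least $\prod_{B\in\mathcal{B}}(1-2p) > 0$, none of the events occurs.

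I do not anticipate a genuine obstacle: the proof is essentially a one-line instantiation of Theorem~\ref{generalLLL}. The only points requiring care are translating the ``mutual independence'' phrasing of hypothesis~(b) into the conditional-probability condition of Theorem~\ref{generalLLL}, and observing that it is precisely the constant $4$ in $4p\Delta \le 1$ (as opposed to, say, $e$) that makes the uniform choice $x \equiv 2p$ satisfy~\eqref{eq:LLL}.
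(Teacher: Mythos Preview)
Your argument is correct: setting $x(B)\equiv 2p$ in Theorem~\ref{generalLLL} and verifying $(1-2p)^{\Delta}\ge 1/2$ via the monotonicity of $(1-1/m)^m$ is a clean derivation of the symmetric form. The paper itself does not give a proof of Corollary~\ref{symmetric_LLL} at all---it simply refers the reader to Chapter~19 of~\cite{mike_book}---so there is nothing to compare against; your instantiation is exactly the standard one found in that reference. (One cosmetic edge case: if $p=0$ then $x(B)=0\notin(0,1)$, but of course the conclusion is then trivial.)
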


\begin{remark}\label{llls}
Corollary~\ref{MikeLLL} follows from Theorem~\ref{generalLLL} by setting $x(B) = 2 \Pr[B]$, while Corollary~\ref{symmetric_LLL} is obtained by setting $x(B) = \frac{1}{\Delta+1}$ for all $B \in \mathcal{B}$.
\end{remark}

\subsection{The Variable Setting and the Moser-Tardos Algorithm}

While the Lov\'{a}sz Local Lemma is a powerful tool for proving the existence of combinatorial structures, the original proof was non-constructive. To address this, we utilize the algorithmic framework developed by Moser and Tardos~\cite{MT}.

We consider the \emph{variable setting}, where the probability space is determined by a collection  of $n$ mutually independent random variables  $\mathcal{V} = \{V_1, \ldots, V_n\}$. Each event $B \in \mathcal{B}$ depends only on a subset of these variables, denoted by $\text{vbl}(B) \subseteq \mathcal{V}$. In this setting, the dependency neighborhood $D(B)$ for an event $B$ is naturally defined as the set of all other events that share at least one variable with $B$. Formally, $D(B) = \{ C \in \mathcal{B} \setminus \{B\} : \text{vbl}(B) \cap \text{vbl}(C) \neq \emptyset \}$.

\subsubsection{The Randomized Algorithm}

The randomized algorithm proposed by Moser and Tardos is remarkably simple: we begin with a random assignment to the variables in $\mathcal{V}$. If any bad event $B \in \mathcal{B}$ occurs, we ``resample'' it by choosing new values for the variables in $\text{vbl}(B)$ according to their underlying distribution. This process is repeated until no bad events occur.

\begin{algorithm}[H]
\caption{Moser-Tardos Resampling Algorithm}
\begin{algorithmic}[1]
\STATE For each $V \in \mathcal{V}$, initialize $V$ with a random value independently.
\WHILE{there exists some $B \in \mathcal{B}$ such that $B$ occurs}
    \STATE Pick an arbitrary occurring event $B \in \mathcal{B}$.
    \STATE Resample all variables $V \in \text{vbl}(B)$ independently.
\ENDWHILE
\RETURN the current assignment of values to $\mathcal{V}$.
\end{algorithmic}
\end{algorithm}

Moser and Tardos proved that if the LLL condition holds, this algorithm finds a valid assignment quickly.

\begin{theorem}[\cite{MT}]\label{thm:randomized_mt}
Let $\mathcal{B}$ be a set of events defined over a collection of independent variables $\mathcal{V}$, as per the variable setting. If there exists a function $x: \mathcal{B} \rightarrow (0,1)$ such that condition \eqref{eq:LLL} holds, then there exists an assignment of values to the variables $\mathcal{V}$ not violating any of the events in $\mathcal{B}$.
Furthermore, the Moser-Tardos algorithm resamples an event $B \in \mathcal{B}$ at most an  expected $\frac{x(B)}{1-x(B)}$ times before it finds such an evaluation. Thus the expected total number of resampling steps is at most $\sum_{B \in \mathcal{B} } \frac{x(B) }{1 - x(B) } $.
\end{theorem}

\subsubsection{Deterministic Algorithms}

Moser and Tardos~\cite{MT} also developed a deterministic version of their resampling algorithm, derived via the method of conditional expectations. This version requires slightly stronger computational assumptions: specifically, we must be able to efficiently compute conditional probabilities of the bad events.

\begin{theorem}[\cite{MT}]\label{thm:deterministic_mt}
Let $\mathcal{B} = \{B_1, B_2, \ldots, B_m \}$ be a set of events defined over a collection of independent variables $\mathcal{V} = \{V_1, V_2, \ldots, V_n \}$, as per the variable setting. Suppose each variable $V_i$ takes values from a finite domain $D_i$. Consider the problem size to be $s:= m + n + \sum_{i=1}^{ n} |D_i|$. 
Suppose there exists an algorithm that can compute, for each $B \in \mathcal{B}$ and each partial evaluation $(v_i \in D_i)_{i \in I}$, $I \subseteq [n]$ the conditional probability $\Pr[B \mid \forall i \in I : V_i = v_i]$ in time polynomial in $s$. Suppose, moreover,  that the size of the dependency neighborhood of each event is bounded by a constant, that is, $\forall B \in \mathcal{B} : |D(B)| \le \lambda $ for some constant $\lambda$. If there is a constant $\varepsilon > 0$ and an assignment of reals $x : \mathcal{B} \rightarrow (0,1)$ such that
\begin{equation}
\forall B \in \mathcal{B} : \Pr[B] \le (1 - \varepsilon)x(B) \prod_{C \in D(B) }(1 - x(C)),
\end{equation}
then a deterministic algorithm can find an evaluation of the variables such that no event occurs in time polynomial in $s$.

\end{theorem}

We also employ a more recent result from~\cite{harris2023deterministic}, which achieves derandomization without requiring the efficient computation of conditional probabilities. While this approach necessitates a somewhat stronger criterion (see Theorem~\ref{harris_theorem_derandomize}), it allows for a ``black-box'' treatment of the underlying variables $\mathcal{V}$.

\begin{theorem}[\cite{harris2023deterministic}]\label{harris_theorem_derandomize}
Let $\mathcal{B} = \{B_1, B_2, \ldots, B_m \}$ be a set of events defined over a collection of independent variables $\mathcal{V} = \{V_1, V_2, \ldots, V_n \}$, as per the variable setting.  Let $d = \max_{B \in \mathcal{B}}| D(B) \cup \{B \} |$ and $p = \max_{B \in \mathcal{B}} \Pr[B]$. If there exists a constant $\epsilon > 0$ such that $\mathrm{e} p d^{1+ \epsilon} \le 1$, then an assignment of values to the variables $\mathcal{V}$ avoiding all events in $\mathcal{B}$ can be found deterministically in time polynomial in $n$ and $m$.
\end{theorem}

\subsection{Talagrand's inequality}

We will also need the following  version of Talagrand's inequality~\cite{talagrand1995concentration} whose proof can be found  in Chapter 20 of~\cite{mike_book}. 
\begin{theorem} \label{talagrand_inequality}
Let $X$ be a non-negative random variable, not identically $0$, which is determined by $n$ independent trials $T_1, \ldots, T_n$, and satisfying the following for some $\gamma > 0$:
\begin{enumerate}
\item changing the outcome of any trial can affect $X$ by at most $\gamma$, and

\item for any $s$, if $X \ge s$ then there is a set of at most $ws$ trials whose outcomes certify that $X \ge s$,
\end{enumerate}
then for any $0 \le t \le \ex[X] $,
\begin{align*}
\Pr[ | X - \ex[X] | > t + 60 \gamma \sqrt{ w\ex[X]  }  ] \le 4 \mathrm{e}^{ - \frac{t^2 }{ 8 \gamma^2 w \ex[X]} }  \enspace.
\end{align*}
\end{theorem}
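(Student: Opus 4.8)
The plan is to obtain this as a standard corollary of \emph{Talagrand's convex-distance concentration inequality} on product spaces (the derivation is the one in Chapter~20 of~\cite{mike_book}); I would treat the convex-distance inequality itself as a black box and concentrate on the reduction and the constant-chasing. Model the $n$ independent trials as a product probability space $\Omega=\prod_{i=1}^{n}\Omega_i$, and for $\omega\in\Omega$ and $A\subseteq\Omega$ define the convex distance
\[
d_T(\omega,A)\;=\;\sup_{\substack{\alpha\in\Reals_{\ge 0}^{n}\\ \|\alpha\|_2\le 1}}\ \inf_{\nu\in A}\ \sum_{i:\,\omega_i\ne\nu_i}\alpha_i .
\]
Talagrand's inequality asserts that $\Pr[A]\cdot\Pr[\,d_T(\cdot,A)\ge\lambda\,]\le e^{-\lambda^2/4}$ for every $A$ and every $\lambda\ge 0$. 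Since $X$ is non-negative and not identically $0$, we have $\ex[X]>0$, so the statement is not vacuous.

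The first real step is to convert the two hypotheses into a statement about $d_T$. Fix $0\le a<b$ and set $A=\{X\le a\}$. Take any $\omega$ with $X(\omega)\ge b$ and let $I=I(\omega)$, $|I|\le wb$, be a set of trials whose outcomes certify $X\ge b$ (hypothesis~(2)). For an arbitrary $\nu\in\Omega$, let $\nu'$ agree with $\omega$ on $I$ and with $\nu$ off $I$; then $X(\nu')\ge b$, and flipping the coordinates in $\{i\in I:\omega_i\ne\nu_i\}$ one at a time from $\nu'$ to $\nu$ and applying hypothesis~(1) gives $X(\nu)\ge b-c\,|\{i\in I:\omega_i\ne\nu_i\}|$. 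Consequently $\nu\in A$ forces $|\{i\in I:\omega_i\ne\nu_i\}|\ge (b-a)/c$, and taking $\alpha=\mathbf{1}_I/\sqrt{|I|}$ (which depends only on $\omega$) shows $d_T(\omega,A)\ge (b-a)/(c\sqrt{wb})$. Hence $\{X\ge b\}\subseteq\{d_T(\cdot,A)\ge (b-a)/(c\sqrt{wb})\}$, and Talagrand's inequality yields
\[
\Pr[X\le a]\cdot\Pr[X\ge b]\;\le\;\exp\!\left(-\frac{(b-a)^2}{4c^2\,w\,b}\right).
\]

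It remains to turn this into a one-variable tail bound around $\ex[X]$. Let $m$ be a median of $X$, so $\Pr[X\le m]\ge\tfrac12$ and $\Pr[X\ge m]\ge\tfrac12$; plugging $(a,b)=(m,m+u)$ and $(a,b)=(m-u,m)$ into the last display gives $\Pr[X\ge m+u]\le 2e^{-u^2/(4c^2w(m+u))}$ and $\Pr[X\le m-u]\le 2e^{-u^2/(4c^2wm)}$. Integrating these tails bounds $|\ex[X]-m|$ by $O(c\sqrt{w\,\ex[X]})$ (and also shows $m=\Theta(\ex[X])$ up to such an additive term). Re-centering the two tails at $\ex[X]$, absorbing the $O(c\sqrt{w\,\ex[X]})$ median--mean shift into the additive term $60c\sqrt{w\,\ex[X]}$, and using the hypothesis $t\le\ex[X]$ to guarantee $m+u=O(\ex[X])$ — so that every exponent is at least $t^2/(8c^2w\,\ex[X])$ — then taking a union bound over the upper and lower deviations (the factor $4$), produces exactly the claimed inequality.

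The deep ingredient is Talagrand's convex-distance inequality invoked at the start; its proof is a somewhat delicate induction on the number of coordinates, and I would cite it rather than reprove it. Everything after that is elementary, and the only place demanding care is the last paragraph: making the median-to-mean comparison quantitative and uniform, dealing with the mild asymmetry between the $u^2/(m+u)$ upper-tail exponent and the $u^2/m$ lower-tail exponent, and verifying that the restriction $t\le\ex[X]$ is exactly what keeps all exponents of order $t^2/(c^2 w\,\ex[X])$ after re-centering, so that the universal constants $60$ and $8$ come out as stated.
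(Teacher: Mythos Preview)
The paper does not prove this theorem at all: it is stated as background and the reader is referred to Chapter~20 of~\cite{mike_book} for the proof. Your sketch is precisely the derivation given there --- Talagrand's convex-distance inequality as a black box, the certificate/Lipschitz argument to get $\Pr[X\le a]\Pr[X\ge b]\le\exp(-(b-a)^2/(4c^2wb))$, then median-to-mean conversion --- so there is nothing to compare; you have supplied (correctly) what the paper chose to cite.
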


\section{List-coloring high-girth hypergraphs}\label{main_hypergaph_proof} 
In this  section we describe the algorithm of Theorem~\ref{main_hypergraph}. As we already explained, our approach is based on the semi-random method. For an excellent exposition both of the method and Kim's result the reader is referred to~\cite{mike_book}.

We assume without loss of generality that $\epsilon < \frac{1}{10}$. Also, it will be convenient to define the parameter $\delta:= (1+\epsilon)(k-1) - 1 $, so that the list of each vertex initially has at least $(1+\delta) (\frac{\Delta }{ \ln \Delta })^{\frac{1}{k-1}} $ colors, and assume that $k \ge 3$. (The case $k=2$ is Kim's result.)
Finally, in the statements of all lemmas and corollaries of this section we tacitly assume that $\Delta$ is a sufficiently large constant.

%

The coloring process begins with all vertices of $H$ uncolored and proceeds through a sequence of iterations. At each iteration $i$, we focus on the set of remaining uncolored vertices $V_i$, each having a  \emph{candidate list} of potential colors $L_v = L_v(i)$.  We define a \emph{state} $\sigma$ as an assignment that pairs each $v \in V_i$ with both a color from $L_v$ and an activation bit (designating the vertex as either \emph{activated} or \emph{deactivated}). During the $i$-th iteration, we select a specific state $\sigma_i$ from a probability distribution. This selection is handled using the Lovász Local Lemma to ensure that the chosen state exhibits specific desired properties.  Following this selection, the properties of $\sigma_i$ determine which vertices permanently \emph{retain} their assigned colors; subsequently, the candidate lists $L_v$ are updated, and any vertices that remain uncolored proceed to iteration $i+1$.

The above establishes the framework for the iterative coloring procedure. What remains is to formally specify the probability distribution for each iteration, the sampling mechanism, the criteria for color retention, and the transition rules for updating the candidate lists $L_v$. To this end, it is necessary to first introduce the following definitions and remarks.

We define a color $c \in L_v$ as \emph{available} for vertex $v$ under a state $\sigma$ if the assignment of $c$ to $v$ preserves the property that no hyperedge that contains $v$ is monochromatic within the set consisting of vertices activated under $\sigma$ and those assigned colors in preceding iterations. 

For each vertex $v$, color $c \in L_v$ and iteration $i$, we  define a few quantities of interest that our process will attempt to control. Let $\ell_i(v)$  be the size of $L_v$. Further, for each  $r \in [k]$, let 
$D_{i,r}(v,c)$ denote the set of hyperedges  $h$ that contain $v$ and, in the beginning of the $i$-th iteration: (i) exactly $r$ vertices $\{u_1, \ldots, u_{r} \} \subseteq h \setminus \{v \}$    are uncolored  and $c \in L_{u_j}$ for every $j \in [r]$; (ii) the rest $k-1-r$ vertices of $h$ other than $v$ are colored $c$. We define $t_{i,r}(v,c) := | D_{i,r}(v,c) |$.

As it is common in the applications of the semi-random method, we will not attempt to keep track  of the values of $\ell_i(v)$ and $t_{i,r}(v,c)$, $r \in [k-1]$, for every vertex $v$ and color $c$, but rather we will focus on their extreme values.
In particular, we will define appropriate $L_i, T_{i,r}$ such that for each $i$ the following property holds in the beginning of iteration $i$:

\smallskip
\noindent
\emph{Property P(i):} For each  vertex $v \in V_i$,  color $c \in L_v$ and $ r \in [k-1]$:
\begin{eqnarray*}
\ell_i(v) & \ge & L_i;\\
t_{i,r} (v,c) & \le & T_{i,r}.
\end{eqnarray*}

As a matter of fact, it would be helpful for our analysis (though not necessary) if the inequalities defined in $P(i)$ were actually tight. Given that $P(i)$ holds, we can always enforce this stronger property in a straightforward way as follows. First,  for each
vertex $v$ such that $\ell_i(v) > L_i$ we choose arbitrarily $\ell_i(v) - L_i$ colors from its list and remove them.  Then, for each vertex $v$ and color $c \in L_i$  such that $t_{i,r}(v,c)  < T_{i,r}$ we add  to the hypergraph $T_{i,r} - t_{i,r}(v,c)$   new hyperedges of size $r+1$ that contain $v$ and $r$ new ``dummy" vertices.  (As it will be evident from the proof, we can always assume that $L_i, T_{i,r}$  are integers, since our analysis is robust to replacing $L_i, T_{i,r} $ with $\lfloor L_i \rfloor $ and $T_{i,r} $ with $\lceil T_{i,r} \rceil$.) We assign each dummy vertex a list of $L_i$  colors: $L_i-1$ of them are  new and  do not appear in the list of any other vertex, and the last one is $c$. 

\begin{remark}
Dummy vertices are only useful for the purposes of our analysis and can be removed at the end of the iteration. Indeed, one could use the technique of ``equalizing coin flips" instead. For more details  see e.g.,~\cite{mike_book}.
\end{remark}
Overall, without loss of generality, at each iteration $i$ our goal will be to guarantee that Property $P(i+1)$ holds assuming  Property $Q(i)$.

\smallskip
\noindent
\emph{Property Q(i):} For each vertex $v \in V_i$,  color $c \in L_v$ and $ r \in [k-1]$:
\begin{eqnarray*}
\ell_i(v) & =  & L_i;\\
t_{i,r} (v,c) & =  & T_{i,r}.
\end{eqnarray*}

\paragraph{An iteration.} 
For the $i$-th iteration, consider the probability distribution induced by assigning each vertex $v \in V_i$ a color chosen uniformly at random from $L_v$ and activating $v$ with probability $\alpha = \frac{K}{\ln \Delta}$, where $K = (100 k^{3k})^{-1}$. From an existential perspective,  we will show that the Lovász Local Lemma may be applied to demonstrate that a sample from this distribution yields a state $\sigma_i$ satisfying the required conditions — specifically Property $P(i+1)$ — with strictly positive probability. Computationally, the Moser–Tardos algorithmic framework ensures that such a state $\sigma_i$ can be found in polynomial time.

When a desired state $\sigma_i$ is obtained according to the above process,  the criteria for color retention, and the transition rules for updating the candidate lists $L_v$ are as follows. The list of each vertex $v $, $L_v(i+1)$, is induced from $L_v(i)$ by removing every non-available color $c  \in L_v(i)$ for $v$ in $\sigma_i$. A vertex $v$ retains its assigned color if it is activated and its assigned color is available under $\sigma_i$.

Overall, the $i$-th iteration of our coloring algorithm can be described at a high-level as follows.

\begin{enumerate}

\item Apply the Moser–Tardos algorithm in the probability space induced by a uniform random coloring of each vertex $v \in V_i$ from $L_v(i)$ and the independent activation  of each vertex with probability $\alpha$, targeting the avoidance of a set of bad events whose non-occurrence ensures that Property $P(i+1)$ holds. Let $\sigma_i$ be the output state of the Moser-Tardos algorithm.

\item For each vertex $v \in V_i$, remove any non-available color $c\in L_v(i)$  in $\sigma_i$ to get a list $L_v(i+1)$.

\item Uncolor every vertex $v \in V_i$ that has either received a non-available color or is deactivated  under $\sigma_i$, to get a new partial list-coloring $\phi_i$.

\end{enumerate}

\paragraph{Controlling the parameters of interest.} 
Next we describe the recursive definitions for $L_i$ and $T_{i,r}$ which, as we already explained,  will determine the behavior of the parameters $\ell_i(v)$ and $t_{i,r}(v,c)$, respectively.

Initially, $L_1 = (1+\delta ) \left( \frac{ \Delta }{ \ln \Delta }  \right)^{ \frac{1}{k-1} }  $, $T_{1,k-1} = \Delta$ and $T_{1,r}  = 0 $ for every $ r \in [k-2]$. Letting
\begin{align}\label{Keep_def}
\mathrm{Keep}_i = \prod_{ r =1}^{k-1 } \left(1 -  \left(  \frac{\alpha}{L_i } \right)^{ r}   \right)^{T_{i,r} },
\end{align}
we define
\begin{eqnarray}
L_{i+1} & =  & L_i \cdot \mathrm{Keep}_i - L_i^{2/3},\label{L_def} \\
T_{i+1, r} & = &  \sum_{j =  r }^{ k-1 } \left(  T_{i,j} \cdot { j \choose  r } \left(   \mathrm{Keep}_i \left(  1- \alpha \mathrm{Keep}_i \right)   \right)^{ r }   \left(\frac{ \alpha \mathrm{Keep}_i}{L_i  }\right)^{j-r}   \right)  \nonumber \\
		&  &+  4 k^{ 2(k-r) } \alpha  (\alpha^{-1}  L_i)^{r}\ln \Delta        \sum_{\ell=1}^{k-1}\frac{T_{i,\ell} }{ L_i^{2\ell} (\ln \Delta)^{2\ell} }    +  \left( \sum_{j=r}^{k-1}  { j \choose r } \alpha^{j-r} \frac{T_{i,j}}{L_i^{j-r} } \right)^{2/3}			 \label{T_def}.
\end{eqnarray}

To get some intuition for the recursive definitions~\eqref{L_def},~\eqref{T_def}, observe that $\mathrm{Keep}_i$ is the probability that a color $c \in L_v(i)$ is present in $L_v(i+1)$ as well.  Note  further that this implies that  the expected value of $\ell_{i+1}(v)$  is   $L_i \cdot \mathrm{Keep}_i$, a fact which motivates~\eqref{L_def}. Calculations of similar flavor for $\ex[ t_{i+1,r}(v,c) ]$ motivate~\eqref{T_def}.

\paragraph{The key lemmas.} 

We are almost ready to state the main lemmas which  guarantee that our procedure eventually reaches a partial list-coloring of $H$ with favorable properties that  allow us to extend it to  a full list-coloring. Before doing so, we need to settle a subtle issue that has to do with the fact that  $t_{i+1, r}(v,c)$ is not sufficiently concentrated around its expectation, as it has high sensitivity to the color assignment of $v$. To see this, notice for example that $t_{i+1,1}(v,c)$ drops to zero if $v$ is  activated and is assigned $c$ in state $\sigma_i$. This is because then $c$ becomes unavailable for all vertices in $D_{i,1}(v,c)$, and therefore it has to be removed from their list of candidate colors for the next iteration. More generally, for $r \in \{2, \ldots, k-1\}$, if $v$ is assigned $c$ then $t_{i+1, r}(v,c)$ can be affected by a large amount. As an example, consider a hyperedge $h \in D_{i,r}(v,c)$. Suppose that among the $r$ uncolored vertices in $h \setminus \{v\}$, a specific subset of $r-1$ vertices are assigned color $c$  and are activated in $\sigma_i$. In this scenario, the status of $h$ for the next iteration depends entirely on $v$: if $v$ is also  activated and assigned color $c$, the remaining uncolored vertex in $h$ loses $c$ as a candidate, and $h$ is removed from $D_{i+1,r}(v,c)$. Because $v$ can be the ``deciding factor" for many such hyperedges simultaneously, $t_{i+1,r}(v,c)$ can fluctuate wildly based on $v$'s color assignment.

To deal with this problem, we will focus instead on variable $t_{i+1,r}'(v,c)$, i.e., the number of hyperedges $h$ that contain $v$ and (i) exactly $k -r -1$ vertices of $h \setminus \{v \}$ are colored $c$ in the end of iteration $i$ (either because they were already  colored from previous iterations, or because they were assigned color $c$ during iteration $i$ and retained it); (ii) the remaining $r$ vertices of $h \setminus \{v \} $ did not retain a color and, crucially, $c$ would be available for  them if we ignored the color of $v$ in $\sigma_i$. Observe that if $c$  is not assigned to $v$ then $t_{i+1, r}(v,c) = t_{i+1,r}^{'  } (v,c)$ and $t_{i+1,r}'(v,c) \ge  t_{i+1,r} (v,c)$ otherwise.

The first lemma that we prove  estimates the expected value of the parameters at the end of the $i$-th iteration. Its proof can be found in Section~\ref{omitted_hyper}.

\begin{lemma}\label{expectations_lemma}
Let $S_i = \sum_{\ell =1}^{k-1 } \frac{T_{i,\ell} }{ L_i^{2\ell} (\ln \Delta)^{2\ell} }  $ and $Y_{i,r} = \sum_{j = r}^{k-1} \frac{T_{i,j} }{ L_{i}^{j}} $. If $Q(i)$ holds and for all $1 < j < i, r \in [k-1],  L_j \ge (\ln \Delta)^{20(k-1)}, T_{i,r} \ge (\ln \Delta)^{20(k-1)} $, then, for every vertex $v \in V_{i+1}$ and color $ c \in L_v$:
\begin{enumerate}[(a)]
\item  $\ex[ \ell_{i+1}(v)  ]  = \ell_i(v) \cdot \mathrm{Keep}_i $;\label{ex_part_a}
\item  \begin{align*}\ex[ t_{i+1,r}' (v, c) ] \le &\sum_{j =r }^{k-1 }  \left(  T_{i,j}\cdot  { j \choose r} \left( \mathrm{Keep}_i  \left(1 - \alpha \mathrm{Keep}_i  \right)\right)^{r}   \left( \frac{ \alpha \mathrm{Keep}_i }{L_i } \right)^{j-r} \right)   \\
   &+ 4 k^{ 2(k-r) } \alpha  (\alpha^{-1}  L_i)^{r}      S_i   \ln \Delta + O(Y_{i,r} ).
\end{align*}
\label{ex_part_b} 
\end{enumerate}
\end{lemma}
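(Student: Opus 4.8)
The plan is to compute, for a fixed vertex $v \in V_{i+1}$ and color $c \in L_v(i)$, the probability that a given hyperedge through $v$ contributes to $\ell_{i+1}(v)$ or to $t_{i+1,r}'(v,c)$, and then sum (or use linearity of expectation) over the relevant family of hyperedges, exploiting the girth-$5$ hypothesis to control dependencies. Since $v \in V_{i+1}$, we may condition on the event that $v$ is deactivated or received a non-available color; but the cleanest route is to note that all the quantities we track ($\ell_{i+1}(v)$, and $t_{i+1,r}'(v,c)$ which was precisely engineered to be insensitive to the color of $v$) are measurable with respect to the trials of the \emph{other} vertices of $H$, so we can work in the unconditioned product space after fixing $v$'s outcome appropriately. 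For part~(a), a color $c\in L_v(i)$ survives into $L_v(i+1)$ iff for no $r\in[k-1]$ and no hyperedge $h\in D_{i,r}(v,c)$ do all $r$ uncolored vertices of $h$ simultaneously get colored $c$ and activated; since the hypergraph has girth at least $5$, the events ``$h$ is killed'' for distinct $h\in\bigcup_r D_{i,r}(v,c)$ depend on disjoint vertex sets (two hyperedges through a common color-$c$ configuration meeting in two vertices would give a $2$-cycle), so these events are mutually independent. Each vertex is colored $c$ with probability $1/\ell_i(u)=1/L_i$ (using $Q(i)$) and activated with probability $\alpha$, independently, so the probability $h\in D_{i,r}(v,c)$ is killed is $(\alpha/L_i)^r$, and $\mathrm{Keep}_i=\prod_{r=1}^{k-1}(1-(\alpha/L_i)^r)^{T_{i,r}}$ is exactly the survival probability; linearity of expectation over the $\ell_i(v)=L_i$ colors gives $\ex[\ell_{i+1}(v)]=\ell_i(v)\cdot\mathrm{Keep}_i$ as claimed.

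For part~(b), I would classify the hyperedges $h$ counted by $t_{i+1,r}'(v,c)$ by the value $j\in\{r,\dots,k-1\}$ of how many of the $k-1$ non-$v$ vertices of $h$ lay in $L_\bullet(i)$ and had $c$ in their list and were uncolored at the start of iteration $i$ (so $h\in D_{i,j}(v,c)$ at the start), the remaining $k-1-j$ being already colored $c$. For such an $h$ to be counted in $t_{i+1,r}'(v,c)$: exactly $r$ of the $j$ ``fresh'' vertices must fail to retain their color \emph{and} have $c$ available ignoring $v$, while the other $j-r$ must retain their color $c$ (i.e.\ be colored $c$ and activated and have $c$ available). The ${j\choose r}$ ways to choose which $r$ vertices are the survivors-of-type-$r$, times the per-vertex probabilities, gives the leading ``ideal'' term $T_{i,j}{j\choose r}\big(\mathrm{Keep}_i(1-\alpha\mathrm{Keep}_i)\big)^{r}\big(\tfrac{\alpha\mathrm{Keep}_i}{L_i}\big)^{j-r}$: here $\tfrac{\alpha\mathrm{Keep}_i}{L_i}$ is (to first order) the probability a vertex is colored $c$, activated, and keeps it (since $\mathrm{Keep}_i$ is the probability $c$ survives for a neighbor), $\mathrm{Keep}_i$ bounds the probability $c$ is available for a vertex ignoring $v$, and $(1-\alpha\mathrm{Keep}_i)$ accounts for the vertex not retaining its color. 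The subtlety is that the events governing different fresh vertices of the \emph{same} $h$, and the availability-of-$c$ events for those vertices, are \emph{not} independent once we look two steps out: whether $c$ is available for a vertex $u\in h$ depends on $u$'s own neighborhood, which by girth $5$ is disjoint from $h\setminus\{u\}$ but may overlap the neighborhoods of other vertices of $h$ only in a controlled way. This is where the two correction terms come from: the $4k^{2(k-r)}\alpha(\alpha^{-1}L_i)^r S_i\ln\Delta$ term bounds the contribution of ``bad'' configurations where a short structure (a path of length $\le 2$ between two vertices of $h$, made possible only by the bounded-degree/girth-$5$ structure when we go one step beyond the hyperedge) creates positive correlation, and the $O(Y_{i,r})$ term absorbs lower-order cross terms (e.g.\ overcounting from the $1/L_i$ vs.\ $1/\ell_i(u)$ substitutions, from conditioning on $v$, and from the second-order expansion of the availability probability).

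The main obstacle I expect is precisely the dependency bookkeeping in part~(b): one must carefully define, for each hyperedge $h\in D_{i,j}(v,c)$ and each fresh vertex $u\in h$, the event that $c$ becomes unavailable for $u$ (ignoring $v$), decompose it over the hyperedges through $u$ other than $h$, and argue that by girth $5$ these are independent \emph{within} $u$'s contribution but that across different $u\in h$ the only shared randomness sits on vertices at distance $2$ — whose total number is bounded by a function of $k$ and $\Delta$, yielding a correction proportional to $\ln\Delta$ times $\sum_\ell T_{i,\ell}/(L_i^{2\ell}(\ln\Delta)^{2\ell})=S_i$ after one plugs in $\alpha=K/\ln\Delta$ and the hypotheses $L_j,T_{i,r}\ge(\ln\Delta)^{20(k-1)}$. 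The inequality direction (we only need an upper bound on $\ex[t_{i+1,r}']$) helps: we can afford to union-bound the bad events and to replace $1-(\text{available})$ probabilities by their first-order upper estimates, so the proof is an accounting of upper bounds rather than an exact computation. I would organize it as: (i) set up the product space and reduce to a fixed $(v,c)$; (ii) prove (a) via girth-$5$ independence; (iii) for (b), write $t_{i+1,r}'=\sum_{j\ge r}\sum_{h\in D_{i,j}(v,c)} \mathbf 1[h\text{ contributes}]$, split each indicator into an ``ideal'' part and a ``defect'' part, bound $\ex$ of the ideal part exactly, and bound $\ex$ of the defect part by the two correction terms using the distance-$2$ neighborhood count and $\alpha=K/\ln\Delta$.
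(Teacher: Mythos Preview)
Your plan for part~(a) is correct and matches the paper. For part~(b) the overall skeleton---sum over $j\ge r$, write the contribution of each $h\in D_{i,j}(v,c)$ as ${j\choose r}$ times a product of per-vertex probabilities, and identify the leading term---is also right, and your reading of the $O(Y_{i,r})$ term as absorbing the effect of $v$'s own color (the gap between $t_{i+1,r}'$ and $t_{i+1,r}$) together with the small events internal to $h$ is accurate. The gap is in where you locate the $S_i$ correction.

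You attribute $S_i$ to correlations \emph{between} the events for different $u,u'\in h\setminus\{v\}$, coming from possible overlaps of their neighbourhoods ``one step beyond $h$''. But girth at least $5$ rules this out completely: a common neighbour $w\notin h$ of $u$ and $u'$ would make $h$ together with the two edges through $w$ a $3$-cycle, so the neighbourhoods of distinct vertices of $h$ are in fact disjoint outside $h$. After conditioning on a cheap event $B_c$ internal to $h$ (of probability $O(L_i^{-j})$, absorbed into $Y_{i,r}$), the per-vertex events really are mutually independent across $u\in h\setminus\{v\}$, and there is nothing at distance $2$ to union-bound. The $S_i$ term instead arises \emph{within} the single-vertex probability $\Pr[E_{u,1}\mid\overline{B_c}]$: when $u$ is activated and assigned some $\gamma\ne c$, you need $\Pr[\gamma\notin L_u(i+1)\mid c\in L_u(i+1)]$, and conditioning on $c\in L_u(i+1)$---i.e.\ on the event that for each hyperedge $g$ through $u$ not all of $(g\setminus\{u\})\cap V_i$ were activated with $c$---slightly \emph{raises} the probability that those same vertices were all activated with $\gamma$, by roughly $(\alpha/L_i)^{2\ell}$ per $g\in D_{i,\ell}(u,\cdot)$. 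Summed over $\ell$ and the $T_{i,\ell}$ hyperedges this is precisely $S_i$. Since the sign goes the wrong way for you---the true $\Pr[E_{u,1}]$ \emph{exceeds} $\mathrm{Keep}_i(1-\alpha\mathrm{Keep}_i)$---you cannot take the ideal term as an upper bound and push the remainder into a separately bounded ``defect''; you must carry the $S_i$ correction through the per-vertex estimate before multiplying out ${j\choose r}p_1^r p_2^{j-r}$.
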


The next step is to prove strong concentration around the mean for our random variables per the following lemma. Its proof can be found in Section~\ref{omitted_hyper}.

\begin{lemma}\label{concentration_lemma}
If $Q(i)$ holds, $L_i, T_{i,j} \ge (\ln \Delta)^{20(k-1)}$ for all  $j \in [k-1]$, and $T_{i,k-1} \ge \frac{1}{10k^2 }  L_j^{k-1}$,  then for every vertex $v \in V_{i+1}$, color $ c \in L_v$ and $r \in [k-1]$:
\begin{enumerate}[(a)]
\item $\Pr\left[ | \ell_{i+1}(v) - \ex[ \ell_{i+1}(v) ]      | < L_i^{2/3}    \right] <  \Delta^{ - \ln \Delta } $; \label{con_part_a}

\item $\Pr\left[  t_{i+1,r}'(v,c) - \ex[ t_{i+1,r}' (v,c) ]     > \frac{1}{2} \left( \sum_{j=r}^{k-1}  { j \choose r } \alpha^{j-r} \frac{T_{i,j}}{L_i^{j-r} } \right)^{2/3}	   \right] <  \Delta^{ - \ln \Delta } $. \label{con_part_b}
\end{enumerate}
\end{lemma}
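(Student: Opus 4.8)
The plan is to prove Lemma~\ref{concentration_lemma} by applying Talagrand's inequality (Theorem~\ref{talagrand_inequality}) separately to $\ell_{i+1}(v)$ and to $t_{i+1,r}'(v,c)$, after first re-expressing each as a non-negative random variable determined by a bounded number of independent trials with good Lipschitz and certificate properties. The underlying independent trials $T_1,\dots,T_n$ are the color choices and activation bits of the vertices involved — but a naive accounting fails because the Moser-Tardos resampling reads and rewrites these variables adaptively. So the first step is to argue that, conditioned on $Q(i)$ and the girth-$5$ hypothesis, one may restrict attention to a bounded-radius neighborhood of $v$ (or of the hyperedges counted by $t_{i+1,r}'$): girth $\ge 5$ means the relevant hyperedges pairwise intersect in at most one vertex and the ``second neighborhoods'' are tree-like, so each output quantity is in fact a function of only $O(\mathrm{poly}(k)\cdot\Delta\cdot\log^{O(1)}\Delta)$ underlying trials (the colors/activation bits of vertices within distance $2$, say), up to a negligible-probability event that some resampling cascade escapes this neighborhood. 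This localization is what makes $w$, the certificate-size parameter, polylogarithmic rather than polynomial in $\Delta$.

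Second, for part~(a): $\ell_{i+1}(v)$ counts colors $c\in L_v(i)$ that remain available, i.e.\ that are \emph{not} knocked out by any adjacent hyperedge becoming ``almost monochromatic in $c$'' with all initially-uncolored vertices activated. Changing one trial (one vertex's color or activation bit) changes the availability status of at most a bounded number of colors — here I would use girth $\ge 5$ to bound by a constant (depending on $k$) the number of colors whose availability depends on any single neighbor — so the Lipschitz constant is $c = O_k(1)$. For the certificate property: if $\ell_{i+1}(v)\ge s$, then $s$ specific surviving colors are witnessed by the fact that, for each of them, every adjacent hyperedge has some vertex not colored with that color or some initially-uncolored vertex deactivated — this is certified by $O_k(1)$ trials per surviving color times the $\le \Delta$ relevant hyperedges, giving $w = O_k(\Delta\cdot\mathrm{something})$; but since $\ex[\ell_{i+1}(v)] = L_i\cdot\mathrm{Keep}_i$ is itself $\mathrm{poly}(\Delta)$, one checks $60c\sqrt{w\,\ex[X]}$ is of lower order than $L_i^{2/3}$ and that $\Delta^{-\ln\Delta}$ dominates $4e^{-t^2/(8c^2 w\ex[X])}$ for the choice $t$ a constant fraction of $L_i^{2/3}$ — this is the routine ``plug in the numbers'' verification, using $L_i \ge (\ln\Delta)^{20(k-1)}$ and $\mathrm{Keep}_i$ not too small. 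The same template handles part~(b) for $t_{i+1,r}'(v,c)$, with the deviation scale $\bigl(\sum_{j\ge r}\binom{j}{r}\alpha^{j-r}T_{i,j}/L_i^{j-r}\bigr)^{2/3}$, which is exactly the third (error) term in the recursive definition~\eqref{T_def} of $T_{i+1,r}$; the point of passing to the primed variable $t_{i+1,r}'$ (rather than $t_{i+1,r}$) is precisely to restore the bounded-Lipschitz property that $t_{i+1,r}$ lacks because assigning $c$ to $v$ can change it drastically.

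The main obstacle I expect is the first step: rigorously controlling the Moser-Tardos execution so that the output state is (with all but $\Delta^{-\omega(\ln\Delta)}$ probability) a bounded-radius-local function of the initial random trials, and then phrasing $\ell_{i+1}(v)$ and $t_{i+1,r}'(v,c)$ as genuine Talagrand-admissible variables on those trials. Concretely one must (i) bound the probability that the resampling process, when run, touches variables far from $v$ — via a witness-tree / cluster-expansion argument showing such an event needs a large ``explanation'' whose probability is $\Delta^{-\Omega(\ln\Delta)}$ given the LLL slack built into the bad-event definitions and the $L_i,T_{i,r}\ge(\ln\Delta)^{20(k-1)}$ hypotheses; and (ii) handle the mild conditioning this introduces, absorbing it into the failure probability. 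Once localization is in hand, the Lipschitz bounds follow from girth $\ge 5$ (sparse local structure), the certificate bounds follow by exhibiting the explicit witnessing trials, and the tail estimate is a direct substitution into Theorem~\ref{talagrand_inequality} with $t$ chosen as a small constant times the stated deviation width and $w,c = \mathrm{polylog}(\Delta)\cdot O_k(1)$, which comfortably yields the bound $\Delta^{-\ln\Delta}$.
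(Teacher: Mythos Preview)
Your proposal rests on a misreading of the algorithm's structure. The concentration lemma is \emph{not} about the output of the Moser--Tardos procedure; it is about the raw product distribution in which each vertex of $V_i$ independently receives a uniform color from $L_v(i)$ and an activation bit. The Moser--Tardos algorithm is invoked only afterwards, in Lemma~\ref{final_corollary}, with the bad events being precisely the failures of concentration (the events $A_v$ and $B_{v,c}^r$). So your entire ``localization of the Moser--Tardos cascade'' step is fighting a phantom: there is no adaptive resampling to control here, the trials are genuinely independent, and Talagrand's inequality applies directly.

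Once that confusion is removed, two concrete gaps remain. For part~(a), you apply Talagrand to $\ell_{i+1}(v)$ itself and claim a certificate of size $w = O_k(\Delta)$ per surviving color. That is fatal: with $w$ of order $\Delta$ and $\ex[X]$ of order $L_i$, the Talagrand exponent $t^2/(8c^2 w\,\ex[X])$ with $t \asymp L_i^{2/3}$ becomes $L_i^{1/3}/\Delta$, which is $o(1)$ since $L_i \le (1+\delta)(\Delta/\ln\Delta)^{1/(k-1)}$. The paper instead applies Talagrand to the \emph{complement} $\overline{\ell_v} = L_i - \ell_{i+1}(v)$, the number of colors \emph{removed}. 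A removed color is certified by a single hyperedge's worth of activation/color choices, so $w = 2k$ and $c = 1$, and the arithmetic goes through. For part~(b), you cannot apply Talagrand to $t_{i+1,r}'(v,c)$ directly either; the paper writes it as a difference $X_{i+1,r}(v,c) - Y_{i+1,r}(v,c)$ of two nonnegative variables, each of which (after one more splitting, $W^1 + W^2$, to peel off the contribution from vertices that collide with $v$'s own color) has the required bounded-difference and small-certificate properties. This decomposition is the actual content of the proof, and your sketch does not supply it.
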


Armed with Lemmas~\ref{expectations_lemma},~\ref{concentration_lemma},  a straightforward application of  the symmetric  Local Lemma, i.e.,  Corollary~\ref{symmetric_LLL}, reveals the following. 

\begin{lemma}\label{final_corollary}
With positive probability, $P(i)$ holds for every $i$ such that for all $1 < j < i: L_j, T_{j,r} \ge (\ln \Delta)^{20(k-1)}$ for all $r \in [k-1]$ and $T_{j,k-1} \ge \frac{1}{10k^2} L_j^{k-1} $.
\end{lemma}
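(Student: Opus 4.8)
The plan is to combine the two preceding lemmas with the symmetric Lov\'{a}sz Local Lemma (Corollary~\ref{symmetric_LLL}) in a routine way, and then to close an induction on $i$. First I would set up the induction: we show that if $Q(i)$ holds, then with positive probability the state $\sigma$ produced by iteration $i$ satisfies $P(i+1)$; since $P(i+1)$ together with the list/hyperedge trimming described after Property $P(i)$ can always be promoted to $Q(i+1)$, this propagates the hypothesis. The base case is $Q(1)$, which holds by fiat since $L_1,T_{1,r}$ are defined so that the initial lists and hyperedge-degree counts match them exactly (for vertices where they do not, we trim/add dummies as explained). Thus it suffices to prove the single-step statement; the conditions ``$L_j,T_{j,r}\ge(\ln\Delta)^{20(k-1)}$ for all $1<j<i$'' and ``$T_{j,k-1}\ge\frac{1}{10k^2}L_j^{k-1}$'' are exactly the regularity hypotheses needed to invoke Lemmas~\ref{expectations_lemma} and~\ref{concentration_lemma} at each step up to $i$, so I would carry them along as part of the inductive hypothesis.

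For the single-step argument, I would define, for the $i$-th iteration, the following family of bad events on the probability space generated by the $2|V_i|$ independent trials (the color and activation bit of each $v\in V_i$): for each $v\in V_{i+1}$, a bad event $B_v$ that $\ell_{i+1}(v)<\ex[\ell_{i+1}(v)]-L_i^{2/3}$, and for each $v\in V_{i+1}$, color $c\in L_v$ and $r\in[k-1]$, a bad event $B_{v,c,r}$ that $t'_{i+1,r}(v,c)>\ex[t'_{i+1,r}(v,c)]+\tfrac12\big(\sum_{j=r}^{k-1}\binom{j}{r}\alpha^{j-r}T_{i,j}/L_i^{j-r}\big)^{2/3}$. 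By Lemma~\ref{concentration_lemma}, each bad event has probability at most $\Delta^{-\ln\Delta}$, i.e., $p=\Delta^{-\ln\Delta}$ works in Corollary~\ref{symmetric_LLL}(a). Next, the dependency bound: the trials relevant to $\ell_{i+1}(v)$ and to $t'_{i+1,r}(v,c)$ are the color/activation bits of vertices within bounded graph-distance of $v$ (distance $2$ roughly, because whether a color survives at a neighbor depends on that neighbor's neighbors, and hyperedges have girth $\ge5$ so these neighborhoods are ``tree-like'' and of size polynomial in $\Delta$). Hence each bad event is mutually independent of all but at most some $\Delta^{O(k)}$ others; call this bound $d$. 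Finally, $4pd = 4\Delta^{O(k)}\Delta^{-\ln\Delta}\to 0$ as $\Delta$ grows, so for $\Delta$ large enough $4pd\le 1$ and Corollary~\ref{symmetric_LLL} gives, with positive probability, that no bad event occurs.

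When no bad event occurs, every $v\in V_{i+1}$ has $\ell_{i+1}(v)\ge \ex[\ell_{i+1}(v)]-L_i^{2/3} = \ell_i(v)\cdot\mathrm{Keep}_i - L_i^{2/3} = L_i\cdot\mathrm{Keep}_i - L_i^{2/3} = L_{i+1}$ by Lemma~\ref{expectations_lemma}(a) and $Q(i)$ and definition~\eqref{L_def}; and for every such $v$, $c$, $r$ we have $t_{i+1,r}(v,c)\le t'_{i+1,r}(v,c)\le \ex[t'_{i+1,r}(v,c)]+\tfrac12(\cdots)^{2/3}$, which by Lemma~\ref{expectations_lemma}(b) and definition~\eqref{T_def} is at most $T_{i+1,r}$ (the $\frac12(\cdots)^{2/3}$ slack plus the $O(Y_{i,r})$ error term being absorbed into the $(\cdots)^{2/3}$ term of~\eqref{T_def}, for $\Delta$ large). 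Hence $P(i+1)$ holds, completing the induction; intersecting the (positive-probability) good events over all the relevant iterations $i$ — there are only polynomially many, and in fact one can take a union bound the other way or simply note the process is finite — yields the claim. The main obstacle I anticipate is not any single inequality but the bookkeeping: one must verify that the error terms and the $(\cdots)^{2/3}$ slack in~\eqref{L_def} and~\eqref{T_def} are chosen generously enough to swallow the concentration deviations of Lemma~\ref{concentration_lemma} and the expectation slack of Lemma~\ref{expectations_lemma} simultaneously for all $r$, and to confirm that the dependency degree is genuinely $\Delta^{O(k)}$ and hence dominated by $\Delta^{\ln\Delta}$; the girth-$5$ hypothesis is what makes the neighborhood structure controllable here, and I would double-check that it is used correctly in bounding $d$.
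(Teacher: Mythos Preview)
Your proposal is correct and follows essentially the same route as the paper: induct on $i$, define bad events for the list-size drop and the degree-count overshoot, use Lemma~\ref{concentration_lemma} to bound each by $\Delta^{-\ln\Delta}$, bound the dependency degree by a polynomial in $\Delta$, apply Corollary~\ref{symmetric_LLL}, and then read off $P(i+1)$ from Lemma~\ref{expectations_lemma} together with the definitions~\eqref{L_def},~\eqref{T_def}. The ``absorption'' check you flag as the main obstacle is exactly what the paper verifies: since $\sum_{j\ge r} T_{i,j}/L_i^{j}=O(\ln\Delta)$ while $T_{i,r}\ge(\ln\Delta)^{20(k-1)}$, the $O(Y_{i,r})$ slack in Lemma~\ref{expectations_lemma}(b) is swallowed by half of the $(\cdots)^{2/3}$ term, and the remaining half matches the deviation in Lemma~\ref{concentration_lemma}(b).

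Two small corrections. First, the trials relevant to $t'_{i+1,r}(v,c)$ sit at distance up to $2$ from $v$, so two bad events can share a trial when their base vertices are at distance up to $4$; the paper accordingly bounds the dependency degree by roughly $(k\Delta)^4\cdot L_i<\Delta^5$, which is what your $\Delta^{O(k)}$ should resolve to. Second, your closing remark about ``intersecting positive-probability good events over all iterations'' is unnecessary and slightly misleading: the lemma is a per-step statement, and the induction (together with the deterministic promotion $P(i)\Rightarrow Q(i)$) already delivers it without any cross-iteration union bound.
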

The proof of Lemma~\ref{final_corollary} can be found in Section~\ref{omitted_hyper}.

In analyzing the recursive equations~\eqref{L_def},~\eqref{T_def}, it would be helpful if we could ignore the ``error terms". The next lemma shows that this is indeed possible. Its proof can be found in Section~\ref{omitted_hyper}.
\begin{lemma}\label{no_errors}
Define $L_1' = (1+\delta) \left( \frac{ \Delta}{  \ln \Delta} \right)^{ \frac{1}{k-1}}, T_{1,k-1}' = \Delta $, $T_{1,r}' = 0$ for $r \in [k-2]$,  and recursively define
\begin{eqnarray}
L_{i+1}' & =& L_i' \cdot \mathrm{Keep}_i, \label{Lprime_def} \\
T_{i+1,r}' & =& \sum_{j =  r }^{ k-1 } \left(  T_{i,j}' \cdot { j \choose r}  \left(  \mathrm{Keep}_i  \cdot \left( 1-  \alpha \mathrm{Keep}_i  \right)  \right)^{ r }    \left( \frac{ \alpha \mathrm{Keep}_i }{L_i'  } \right)^{j-r}   \right)  \nonumber \\
	       &  &+   4 k^{ 2(k-r) } \alpha  (\alpha^{-1}  L_i)^{r}\ln \Delta        \sum_{\ell=1}^{k-1}\frac{T_{i,\ell} }{ L_i^{2\ell} (\ln \Delta)^{2\ell} }  . \label{Tprime_def}
\end{eqnarray}
If for all $ 1 < j < i  $, $L_j \ge (\ln \Delta)^{20(k-1)}$, $T_{j,r} \ge (\ln \Delta)^{20(k-1)}$ for every $r \in [k-1]$,  and $T_{j,k-1} \ge \frac{  L_j^{k-1}}{ 10k^2 }$, then 
\begin{enumerate}[(a)]
\item $| L_i - L_i' | \le (L_i')^{\frac{5}{6} }$; 
\item $|T_{i,r} - T_{i,r}' | \le (T_{i,r}')^{ \frac{ 100r}{ 100r +1}   } $.
\end{enumerate}
\end{lemma}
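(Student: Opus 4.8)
The plan is to prove (a) and (b) simultaneously by induction on $i$. The base cases $i\in\{1,2\}$ are immediate: $L_1=L_1'$ and $T_{1,r}=T_{1,r}'$ for all $r$, while at $i=2$ the two recursions differ only by the explicit error terms, namely $L_2'-L_2=L_1^{2/3}$ and $T_{2,r}-T_{2,r}'=\big(\sum_{j\ge r}\binom{j}{r}\alpha^{j-r}T_{1,j}/L_1^{j-r}\big)^{2/3}$, which lie far below $(L_2')^{5/6}$ and $(T_{2,r}')^{100r/(100r+1)}$. For the inductive step the crucial structural point is that $\mathrm{Keep}_i$ in~\eqref{Keep_def} is defined through the \emph{true} quantities $L_i,T_{i,r}$, so it is literally the same multiplier appearing in~\eqref{L_def} and~\eqref{Lprime_def} (and in~\eqref{T_def},~\eqref{Tprime_def}). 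Alongside (a) and (b) I will carry two ``size'' invariants, proved directly by unwinding the clean recursions~\eqref{Lprime_def}--\eqref{Tprime_def} from the initial data: that $T_{j,r}'/(L_j')^{\,r}=O(\poly(\ln\Delta))$ for every $r\le k-1$ and every $j<i$ — equivalently, using $|L_j-L_j'|\le(L_j')^{5/6}$ and $|T_{j,r}-T_{j,r}'|\le 2T_{j,r}'$ from the inductive hypothesis, that $\sum_r T_{j,r}(\alpha/L_j)^r=o(1)$, hence $\mathrm{Keep}_j\ge\tfrac12$ — and that $L_j'$ is non-increasing with $L_j'\le L_1'$.

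For part (a): subtracting~\eqref{L_def} from~\eqref{Lprime_def} gives $L_{i}'-L_{i}=(L_{i-1}'-L_{i-1})\mathrm{Keep}_{i-1}+L_{i-1}^{2/3}$, so inductively $L_i\le L_i'$, and since $L_{t+1}'=L_t'\mathrm{Keep}_t$ one unrolls this to the telescoped identity $L_i'-L_i=L_i'\sum_{t=1}^{i-1}L_t^{2/3}/L_{t+1}'$. Using $L_t\le 2L_t'$ and $\mathrm{Keep}_t\ge\tfrac12$, each summand is at most $L_t^{2/3}/(L_t'\mathrm{Keep}_t)\le 4(L_t')^{-1/3}\le 4(L_i')^{-1/3}$ by monotonicity of $L_t'$, so $L_i'-L_i\le 4(i-1)(L_i')^{2/3}$. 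It remains to bound the number of iterations: the hypothesis $T_{t,k-1}\ge L_t^{k-1}/(10k^2)$ gives $\mathrm{Keep}_t\le(1-(\alpha/L_t)^{k-1})^{T_{t,k-1}}\le e^{-\alpha^{k-1}/(10k^2)}$, hence $L_i'=L_1'\prod_{t<i}\mathrm{Keep}_t\le L_1'e^{-(i-1)\alpha^{k-1}/(10k^2)}$, which with $L_i'\ge1$ and $\alpha=K/\ln\Delta$ yields $i-1\le \tfrac{10k^2}{\alpha^{k-1}}\ln L_1'=O\!\big((\ln\Delta)^{k}\big)$. Since the hypotheses force $L_i'=\Omega\big((\ln\Delta)^{20(k-1)}\big)$, we have $(L_i')^{1/6}=\Omega\big((\ln\Delta)^{10(k-1)/3}\big)$, and $10(k-1)/3>k$ for $k\ge3$, so $4(i-1)(L_i')^{2/3}\le (L_i')^{5/6}$ for $\Delta$ large — exactly (a).

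For part (b) I would run the same scheme on the vector of differences $\big(T_{i,1}-T_{i,1}',\dots,T_{i,k-1}-T_{i,k-1}'\big)$. Two features make this tractable. First, the ``middle'' term in~\eqref{T_def} and~\eqref{Tprime_def} is \emph{identical} — both are written with the true $L_i,T_{i,\ell}$ — so it cancels in the difference, leaving only the discrepancy of the main terms and the explicit error $E_{i,r}:=\big(\sum_{j\ge r}\binom{j}{r}\alpha^{j-r}T_{i,j}/L_i^{j-r}\big)^{2/3}$. Second, up to a factor $2^{O(k)}$ and the inductively controlled replacements $L_i\leftrightarrow L_i'$, $T_{i,j}\leftrightarrow T_{i,j}'$, the bracket inside $E_{i,r}$ is comparable to the main term of $T_{i+1,r}'$, so $E_{i,r}\le 2^{O(k)}(T_{i+1,r}')^{2/3}$. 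The transition for the difference vector is an upper-triangular matrix $M_i$ whose diagonal entries $(\mathrm{Keep}_i(1-\alpha\mathrm{Keep}_i))^r$ are $<1$ and whose off-diagonal entries carry a negligible factor $(\alpha\mathrm{Keep}_i/L_i)^{j-r}$; since $T_{s+1,r}'$ dominates its own diagonal contribution $T_{s,r}'(\mathrm{Keep}_s(1-\alpha\mathrm{Keep}_s))^r$, the $(r,r)$-entry of $\prod_{s=t+1}^{i-1}M_s$ is at most $T_{i,r}'/T_{t+1,r}'$. Unrolling, $|T_{i,r}-T_{i,r}'|$ is at most $2^{O(k)}T_{i,r}'\sum_{t<i}(T_{t+1,r}')^{-1/3}$ plus strictly smaller off-diagonal and $L$-discrepancy contributions; telescoping this sum as in (a) — using that the hypotheses keep every $T_{t+1,r}'$ above $(\ln\Delta)^{20(k-1)}$ and within a $\poly(\ln\Delta)$ factor of $T_{i,r}'$ — gives $|T_{i,r}-T_{i,r}'|\le 2^{O(k)}\,i\,(T_{i,r}')^{2/3}\,\poly(\ln\Delta)$. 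Since $\tfrac{100r}{100r+1}=1-\tfrac1{100r+1}$ with $\tfrac13-\tfrac1{100r+1}>\tfrac14$ for every $r\ge1$, and $(T_{i,r}')^{1/4}=\Omega\big((\ln\Delta)^{5(k-1)}\big)$ while $i=O((\ln\Delta)^{k})$, the inequality $2^{O(k)}i\,\poly(\ln\Delta)\le (T_{i,r}')^{1/3-1/(100r+1)}$ holds for $k\ge3$ and $\Delta$ large, which closes the induction; the constant $100$ is exactly what gives this room uniformly over $r\in[k-1]$.

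The main obstacle is the bookkeeping behind the ``size invariants'': one must maintain, along the same induction, uniform polylogarithmic \emph{upper} bounds on $T_{i,r}/L_i^{r}$ for all $r$ (needed both for $\mathrm{Keep}_i\ge\tfrac12$ and to size $E_{i,r}$), together with the \emph{lower} bounds supplied by the hypotheses and the statement that consecutive $T_{i,r}'$ change by at most a bounded factor, so that the telescoped sums in (a) and (b) really are governed by their last terms rather than by the large early values $L_1',T_{1,k-1}'=\Delta$. These facts are ``always true'' of the clean recursion started from the given initial conditions, but verifying them requires carefully unwinding~\eqref{Lprime_def}--\eqref{Tprime_def} across all $r\in[k-1]$ and through both the polynomial regime ($L_i'\approx\Delta^{\Theta(1)}$) and the near-floor regime ($L_i'\approx(\ln\Delta)^{20(k-1)}$). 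Everything else is slack: the gaps $\tfrac56-\tfrac23=\tfrac16$ and $\tfrac1{100r+1}$ in the target exponents, the exponent $20(k-1)$ in the hypotheses, and the bound $i=O((\ln\Delta)^{k})$ on the number of iterations all leave ample room once $k\ge3$ and $\Delta$ is large.
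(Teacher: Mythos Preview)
Your approach to part~(a) is correct and genuinely different from the paper's. The paper proves~(a) by a one-step ``self-improving'' induction: Corollary~\ref{bounding_keep} gives $\mathrm{Keep}_i^{5/6}-\mathrm{Keep}_i\ge K^{k-1}/\bigl(72k^2(\ln\Delta)^{k-1}\bigr)$, so
\[
L_{i+1}'=\mathrm{Keep}_iL_i'\le\mathrm{Keep}_i\bigl(L_i+(L_i')^{5/6}\bigr)
\le L_{i+1}+(L_{i+1}')^{5/6}+L_i^{2/3}-\tfrac{K^{k-1}}{72k^2(\ln\Delta)^{k-1}}(L_i')^{5/6},
\]
and the last two terms cancel since $L_i\ge(\ln\Delta)^{20(k-1)}$. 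The contraction built into $\mathrm{Keep}_i$ absorbs the fresh error at every step, so the number of iterations never enters. Your unrolling works too and is more transparent about why the exponent $5/6$ suffices, at the price of needing the bound $i=O\bigl((\ln\Delta)^k\bigr)$. For part~(b) the paper runs the same self-improving scheme, with $\bigl(\mathrm{Keep}_i(1-\alpha\mathrm{Keep}_i)\bigr)^{r/(100r+1)}\le 1-K^{k-1}/\bigl(1300k^2(\ln\Delta)^{k-1}\bigr)$ playing the role of the contraction factor.

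In your sketch of~(b) there is a real gap. The assertion that every $T_{t+1,r}'$ lies ``within a $\poly(\ln\Delta)$ factor of $T_{i,r}'$'' is not automatic: for $r=k-1$ the early values are of order $\Delta$ while the late ones can be of order $(\ln\Delta)^{20(k-1)}$, so the ratio is polynomial in $\Delta$, not polylogarithmic. What actually closes the diagonal part of your argument is combining \emph{two} bounds on the $(r,r)$-entry of $\prod_{s>t}M_s$: your telescoping bound $T_{i,r}'/T_{t+1,r}'$ and the trivial bound $1$. Taking the minimum, each step contributes at most $C_k\min\{1,T_{i,r}'/T_{t+1,r}'\}\,(T_{t+1,r}')^{2/3}\le C_k(T_{i,r}')^{2/3}$ regardless of whether $T_{t+1,r}'$ sits above or below $T_{i,r}'$, and then your exponent arithmetic goes through. (Alternatively, the polylog comparison you want \emph{is} true, but only after using the hypothesis $T_{t,k-1}\ge L_t^{k-1}/(10k^2)$ to lower-bound $T_{t+1,r}'$ through the $j=k-1$ term of~\eqref{Tprime_def}, together with the size invariant $R_{i,r}'=O(\ln\Delta)$ to upper-bound $T_{i,r}'$; this is precisely the bookkeeping you flag as the main obstacle but do not carry out.) The off-diagonal and $L$-discrepancy contributions also need more than ``strictly smaller'': they are sums of terms like $(\alpha/L_t)^{j-r}(T_{t,j}')^{100j/(100j+1)}$ and $T_{t,j}'(\alpha/L_t)^{j-r}(L_t')^{-1/6}$, and showing these are dominated by $(T_{i,r}')^{100r/(100r+1)}$ again requires the size invariants explicitly. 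The paper's self-improving induction sidesteps all of this, since the inductive hypothesis at step $i$ already packages the cumulative off-diagonal history into the single bound $(T_{i,r}')^{100r/(100r+1)}$.
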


\begin{remark}
Note that $\mathrm{Keep}_i$ in Lemma~\ref{no_errors}  is still defined in terms of $L_i,T_{i,r}$  and not $L_i',T_{i,r}'$. Note also that in the definition of $T_{i+1,r}'$, the second summand is a function of $T_{i,\ell}, L_{i}$, $\ell \in [r-1]$, and not $T_{i,\ell}', L_i'$.
\end{remark}

Using Lemma~\ref{no_errors} we are able to prove the following in Section~\ref{omitted_hyper}.
\begin{lemma}\label{target_lemma}
There exists $i^* = O (\ln \Delta   \ln \ln \Delta)$ such that
\begin{enumerate}[(a)]
\item For all $1 < i \le i^*, T_{i,r} > (\ln \Delta)^{20(k-1)}, L_i \ge \Delta^{  \frac{\epsilon/3 }{ (k-1)  (1 + \epsilon/2 ) }    } $, and $T_{i,k-1} \ge \frac{1}{10 k^2} L_i^{k-1}$; 
\item $T_{i^*+1,r} \le  \frac{1}{10k^2} L_{i^* +1 }^{r} $, for every $r \in [k-1]$ and $L_{i^*+1}  \ge \Delta^{ \frac{\epsilon/3 }{ (k-1)  (1 + \epsilon/2 ) }   }$.
\end{enumerate}
\end{lemma}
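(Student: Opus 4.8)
The plan is to run the recursions~\eqref{Lprime_def}--\eqref{Tprime_def} forward while tracking the normalized quantities $\rho_{i,r}:=T_{i,r}/L_i^{\,r}$, and to prove the estimates in (a) by induction on $i$, stopping at the first index $i^*$ after which the parameters have ``shrunk enough''. Since the hypotheses of Lemma~\ref{no_errors} are exactly the list of estimates appearing in (a), at each step of the induction I would invoke that lemma to replace $L_i,T_{i,r}$ by the clean quantities $L_i',T_{i,r}'$, at the cost of only $(1\pm o(1))$ multiplicative distortions; so it suffices to analyze the error-free recursions and then read off the corresponding statements for $L_i,T_{i,r}$. (The two ``error'' ingredients — the second summand of~\eqref{Tprime_def} and the gap between $\mathrm{Keep}_i$ and $1$ — are each $\Delta^{-\Omega(1)}$ per step under the inductive hypotheses, hence negligible over the $O(\ln\Delta\ln\ln\Delta)$ iterations we care about.)

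First I would pin down $\mathrm{Keep}_i$ and deduce the lower bound on $L_i$. Since $L_i$ is polynomially large in $\Delta$, $1-\mathrm{Keep}_i=(1+o(1))\sum_r\rho_{i,r}\alpha^r$. A one-line manipulation of~\eqref{Tprime_def} shows that $\sum_r\rho_{i,r}$ is non-increasing up to the $\Delta^{-\Omega(1)}$-per-step correction, and its initial value is $\sum_r\rho_{1,r}=\rho_{1,k-1}=\ln\Delta/(1+\delta)^{k-1}$; this bounds $\sum_r\rho_{i,r}$ by $(1+o(1))\ln\Delta/(1+\delta)^{k-1}$ for all $i$, so $\mathrm{Keep}_i=1-o(1)$ and, crucially, $\sum_{j<i}(1-\mathrm{Keep}_j)\le(1+o(1))\rho_{1,k-1}$. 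Feeding this into $L_i'=L_1'\prod_{j<i}\mathrm{Keep}_j$ gives $\log_\Delta L_i'\ge\log_\Delta L_1'-(1+o(1))/(1+\delta)^{k-1}=\tfrac1{k-1}-\tfrac1{(1+\delta)^{k-1}}-o(1)$, and since $1+\delta=(1+\epsilon)(k-1)$ and $k\ge3$ the right-hand side is at least $\tfrac1{3(k-1)}$ for $\Delta$ large — comfortably above $\tfrac{\epsilon/3}{(k-1)(1+\epsilon/2)}$. This delivers the $L_i$ lower bounds in both (a) and (b).

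Next I would analyze the $T$-parameters. From~\eqref{Tprime_def}, $\rho_{i+1,k-1}=\rho_{i,k-1}(1-\alpha\mathrm{Keep}_i)^{k-1}+\Delta^{-\Omega(1)}$, so $\rho_{i,k-1}$ decays geometrically with per-step factor $1-\Theta(\alpha)=1-\Theta(1/\ln\Delta)$; starting from $\rho_{1,k-1}=\Theta(\ln\Delta)$, it enters the $\tfrac1{10k^2}$-regime after $\Theta(\ln\Delta\ln\ln\Delta)$ steps. The remaining $\rho_{i,r}$ obey the triangular recursion $\rho_{i+1,r}=\sum_{j\ge r}\rho_{i,j}\binom jr(1-\alpha\mathrm{Keep}_i)^r(\alpha\mathrm{Keep}_i)^{j-r}+\Delta^{-\Omega(1)}$, which I would solve (mode by mode, treating $\mathrm{Keep}_i$ as essentially constant) to bound each $\rho_{i,r}$ in terms of $\rho_{i,k-1}$ and to see that they all drop below $\tfrac1{10k^2}$ within a bounded multiple of the time $\rho_{i,k-1}$ takes; taking $i^*$ to be the last index for which all the estimates of (a) hold then gives $i^*=O(\ln\Delta\ln\ln\Delta)$ and makes (b) hold at $i^*+1$ by construction. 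Finally, for the lower bounds $T_{i,r}>(\ln\Delta)^{20(k-1)}$: from the one-step recursion $\rho_{2,r}=\Theta(\alpha^{k-1-r}\rho_{1,k-1})=\Theta((\ln\Delta)^{-(k-2-r)})$, and $\rho_{i,r}$ loses at most a $\mathrm{poly}(\ln\Delta)$ factor over $i^*$ steps, so $T_{i,r}=\rho_{i,r}L_i^{\,r}\ge(\ln\Delta)^{-O(1)}\cdot\Delta^{\Omega(1)}\gg(\ln\Delta)^{20(k-1)}$ for large $\Delta$ (the case $r=k-1$ being immediate from $\rho_{i,k-1}\ge\tfrac1{10k^2}$).

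The hard part will be the coupled analysis of the $k-1$ recursions for the $\rho_{i,r}$: one has to show simultaneously that these quantities never grow large enough to spoil the $L_i$ estimate through $\mathrm{Keep}_i$ or to violate the upper bounds of (b), yet never shrink so fast as to drop below $(\ln\Delta)^{20(k-1)}$, and — the delicate point — that the index $i^*$ at which the estimates of (a) hold for the last time is also an index at which \emph{all} the bounds of (b) are in force. The telescoping control of $\sum_r\rho_{i,r}$ is the main device that keeps this bookkeeping tractable; beyond it, the remaining estimates are of exactly the routine (if laborious) flavor already present in the proofs of Lemmas~\ref{expectations_lemma}--\ref{no_errors}.
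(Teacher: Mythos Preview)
Your overall strategy matches the paper's: pass to the primed recursions via Lemma~\ref{no_errors}, work with the ratios $\rho_{i,r}=R_{i,r}$, and show geometric decay. For the decay of the $\rho_{i,r}$, the choice of $i^*$, and the lower bounds $T_{i,r}>(\ln\Delta)^{20(k-1)}$, your sketch amounts to rederiving what the paper packages as Lemma~\ref{pain}; that part is essentially the same.

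Where you genuinely diverge is in the argument for the $L_i$ lower bound. The paper feeds the pointwise bounds of Lemma~\ref{pain} into a refined estimate for $\mathrm{Keep}_i$ (inequality~\eqref{B_bound}) and then sums the resulting geometric series to obtain $\prod_{j<i}\mathrm{Keep}_j\ge\exp\bigl(-\ln\Delta/((1+\epsilon/2)(k-1))\bigr)$. Your telescoping device---observing from the recursion that $\sum_r\rho'_{i,r}-\sum_r\rho'_{i+1,r}=\sum_r\rho'_{i,r}(\alpha\mathrm{Keep}_i)^r$ and summing over $i$---is a cleaner route that bypasses Lemma~\ref{pain} for this step and actually yields a considerably stronger bound, roughly $\log_\Delta L_i'\ge \tfrac{1}{k-1}-\tfrac{1}{(1+\delta)^{k-1}}$ instead of $\tfrac{1}{k-1}-\tfrac{1}{(1+\epsilon/2)(k-1)}$.

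There is, however, one real slip. The assertion $\mathrm{Keep}_i=1-o(1)$ does not follow from $\sum_r\rho_{i,r}\le(1+o(1))\ln\Delta/(1+\delta)^{k-1}$: once mass migrates to $r=1$ (and Lemma~\ref{pain} shows $\rho_{i,1}$ can be of order $\ln\Delta$), one has $1-\mathrm{Keep}_i\approx\alpha\rho_{i,1}=\Theta(K)$, a fixed positive constant. The telescoping therefore controls $\sum_{j<i}\sum_r\rho'_{j,r}(\alpha\mathrm{Keep}_j)^r$, whereas $-\ln(L_i'/L_1')=\sum_{j<i}(-\ln\mathrm{Keep}_j)\approx\sum_{j<i}\sum_r R_{j,r}\alpha^r$; the two differ by a factor of at most $\mathrm{Keep}_j^{-(k-1)}\le C^{-(k-1)}$ with $C$ as in Corollary~\ref{bounding_keep}. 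This is not $1+o(1)$, but since $K=(100k^{3k})^{-1}$ the constant $C^{-(k-1)}=1+O(k^{-(k+3)})$ is extremely close to $1$, so your conclusion $\log_\Delta L_i'\ge 1/(3(k-1))\gg\eta$ survives with room to spare. Just replace your ``$(1+o(1))$'' factors by this explicit constant and cite Corollary~\ref{bounding_keep} (whose lower bound, as the paper notes, does not require the hypothesis $R_{i,k-1}\ge 1/(10k^2)$).
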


Lemmas~\ref{final_corollary},~\ref{target_lemma} and~\ref{second_phase} imply Theorem~\ref{main_hypergraph}. 
\begin{lemma}\label{second_phase}
Let $i^*$ be the integer promised by Lemma~\ref{target_lemma}, and assume property $P(i^*+1)$ holds for the partial list-coloring $\phi_{i^*}$, i.e., the output of the $i^*$-iteration which is the partial list-coloring in the beginning of the $(i^*+1)$-th iteration. Given $\phi_{i^*}$,  we can find a full list-coloring of $H$ in expected polynomial time in  the number of vertices of $H$.  Also, if $\Delta $ is assumed to be constant, then such a coloring can be constructed deterministically in polynomial time.
\end{lemma}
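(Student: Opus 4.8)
\medskip
\noindent\textbf{Proof proposal.} The plan is to run one more colouring step on top of $\phi$ and to finish with the Local Lemma. By hypothesis $P(i^*+1)$ holds, and Lemma~\ref{target_lemma}(b) gives $T_{i^*+1,r}\le\frac{1}{10k^2}L_{i^*+1}^{r}$ for every $r\in[k-1]$; that is, every still-uncoloured vertex $v$ has a list of size $\ge L_{i^*+1}$ while for every colour $c$ at most $\frac{1}{10k^2}L_{i^*+1}^{r}$ of the hyperedges through $v$ have $r$ uncoloured vertices (with $c$ on their lists) and $k-1-r$ vertices coloured $c$. First I would clean up the instance: delete every hyperedge of $H$ all of whose vertices are already coloured by $\phi$ --- since $\phi$ is proper such a hyperedge is non-monochromatic and imposes no constraint --- and, for each uncoloured $v$, delete from $L_v$ every colour that is already \emph{forbidden} at $v$, i.e.\ every $c$ such that some surviving hyperedge through $v$ has all of its other $k-1$ vertices coloured $c$. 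The point is that only an $o(1)$-fraction of each list is forbidden: one bounds the number of forbidden colours at $v$ by charging each of them to the iteration at which the penultimate vertex of the offending hyperedge received its colour --- at that moment the hyperedge is counted by $t_{\cdot,1}(v,c)$ and is therefore controlled via $P(\cdot)$ --- and then using both the small activation probability $\alpha=K/\ln\Delta$ (so that few iterations colour a neighbour of $v$) and the girth-$5$ hypothesis (so that the relevant hyperedges interact little). After this step every list still has size $(1-o(1))L_{i^*+1}$, and assigning to $v$ any surviving colour can only create a monochromatic hyperedge through some hyperedge of $H$ that still has at least two uncoloured vertices.

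Then I would complete the colouring by the Local Lemma: give each uncoloured vertex an independent uniformly random colour from its cleaned list, and for each hyperedge $h$ with $m_h\ge 2$ uncoloured vertices and monochromatic coloured part let $B_h$ be the event ``$h$ becomes monochromatic''. When $m_h<k$ there is at most one colour in which $h$ could turn monochromatic, so $\Pr[B_h]\le L_{i^*+1}^{-m_h}$ (and $\Pr[B_h]\le\sum_{c}\prod_{u\in h}\ell_{i^*+1}(u)^{-1}$ when $m_h=k$), and $B_h$ is determined together with those $B_{h'}$ sharing an uncoloured vertex with $h$. Grouping, for each fixed uncoloured vertex $u$, the hyperedges through $u$ by their number of uncoloured vertices and their (unique, for $m<k$) monochromatic colour, and summing using $t_{i^*+1,r}(u,c)\le\frac{1}{10k^2}L_{i^*+1}^{r}$, one gets $\sum_{h'\sim h}\Pr[B_{h'}]\le\frac14$ for every $h$ --- this is exactly what the factor $\frac{1}{10k^2}$ in Lemma~\ref{target_lemma}(b) buys --- so Corollary~\ref{MikeLLL} applies and with positive probability no $B_h$ occurs, which together with the first step yields a proper list-colouring of $H$. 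Replacing the existential step by the Moser--Tardos algorithm on the $|V_{i^*+1}|$ colour variables, and combining with the polynomial-time greedy colouring of the low-degeneracy part in the earlier iterations (Lemma~\ref{final_corollary}) and the polynomial-time computation of $i^*$ (Lemma~\ref{target_lemma}), produces the colouring in expected time polynomial in $n$.

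When $\Delta$ is a constant, every $B_h$ in this last round depends on $O_k(1)$ colour variables and shares a variable with only $O_{\Delta,k}(1)$ other bad events, so a Moser--Tardos output can be produced deterministically in polynomial time (for instance by the derandomized Moser--Tardos algorithm of~\cite{determ}, or by exhaustive search inside each bounded-radius dependency component); together with the derandomization of the earlier iterations promised by Remark~\ref{det_remark} this yields the deterministic polynomial-time algorithm.

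The step I expect to be the main obstacle is the cleanup, i.e.\ bounding the number of colours already forbidden at an uncoloured vertex. These come from hyperedges that $\phi$ leaves monochromatic except for a single uncoloured vertex --- an ``$r=0$'' type that is not literally among the tracked parameters $T_{i^*+1,r}$ --- so the charging argument above has to be carried out carefully, squeezing the bound down to $o(L_{i^*+1})$ rather than the trivial $O(\Delta)$; this is where the girth-$5$ hypothesis and the details of the construction of $\phi$ re-enter. A secondary, purely computational, point is verifying that the neighbourhood sums in the Local Lemma above are simultaneously at most $\frac14$ for all $k\ge 3$, which again uses the full strength of the inequalities packaged in $P(i^*+1)$.
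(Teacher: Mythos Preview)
Your overall strategy --- assign each uncoloured vertex a uniformly random colour from its list, define bad events for potentially monochromatic hyperedges, verify the hypothesis of Corollary~\ref{MikeLLL}, and then invoke Moser--Tardos --- is exactly the paper's approach, and your LLL calculation sketch is correct.

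However, the ``main obstacle'' you flag is not an obstacle at all: the cleanup step is vacuous. The algorithm's list-update rule already guarantees that no colour $c\in L_v(i^*+1)$ is forbidden at $v$. Indeed, suppose at the start of iteration $i+1$ some hyperedge $h$ through $v$ has all $k-1$ other vertices coloured $c$, and let $S=(h\setminus\{v\})\cap V_i$ be those that were still uncoloured at the start of iteration $i$. Each $u\in S$ retained colour $c$, so in particular each $u\in S$ was activated and assigned $c$ in that iteration (and had $c\in L_u(i)$). But then, by the definition of ``available'' (cf.\ the computation of $\Pr[c\in L_v(i+1)]$ in~\eqref{quick_calc}), the colour $c$ is \emph{not} available for $v$ in the output state of iteration $i$, and hence $c\notin L_v(i+1)$. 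By induction on $i$ there are no ``$r=0$'' hyperedges with the offending colour still in the list, so the family of bad events contains only $A_{h,c}$ with $|\mathcal U_\phi(h)|\ge 2$, and the dependency sum over $r\in[k-1]$ already covers everything. The charging argument you outline is therefore unnecessary, and the proof is considerably shorter than you anticipate.

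Two minor remarks. First, the references to ``the polynomial-time greedy colouring of the low-degeneracy part'' and ``the polynomial-time computation of $i^*$'' do not belong here: Lemma~\ref{second_phase} takes $\phi$ as given and only completes it; the assembly with the earlier phases happens in the proof of Theorem~\ref{main_hypergraph}. Second, for the $m_h=k$ case you do not need a separate bound: since bad events are indexed by pairs $(h,c)$, a fully uncoloured hyperedge contributes one event per colour in $\bigcap_{u\in h}L_u$, and these are absorbed in the $r=k-1$ term of the sum.
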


\begin{proof}[Proof of Theorem~\ref{main_hypergraph}]
We carry out  $i^*$ iterations of our procedure. If $P(i)$ fails to hold for any iteration $i$, then we halt. By Lemmas~\ref{final_corollary} and~\ref{target_lemma}, $P(i)$ (and, therefore, $Q(i)$) holds with positive probability for each iteration and so it is possible to perform $i^*$ iterations. Further,  since the application of the Lov\'{a}sz Local Lemma in the proof of Lemma~\ref{final_corollary}  is within the scope of the variable setting, Theorem~\ref{thm:randomized_mt} applies and the Moser-Tardos algorithm terminates in expected polynomial time.  In particular, recall that the proof of Lemma~\ref{final_corollary} employs the symmetric version of the Lovász Local Lemma (Corollary~\ref{symmetric_LLL}). In this application, the probability of each bad event is bounded by $p := \Delta^{-\ln \Delta}$, while the size of each dependency neighborhood is bounded by $d := \Delta^5$. Consequently, Theorem~\ref{thm:randomized_mt} and Remark~\ref{llls} provide a polynomial upper bound on the expected running time of the Moser-Tardos algorithm. Furthermore, for sufficiently large $\Delta$, Theorem~\ref{harris_theorem_derandomize} ensures that the process can be derandomized to yield a deterministic polynomial-time algorithm.

Thus, we can execute $i^*$ successful iterations in polynomial time. Following these iterations, we apply the algorithm from Lemma~\ref{second_phase} to complete the list-coloring. For constant $\Delta$, this second phase can also be efficiently derandomized.

\end{proof}

\subsection{Proof of Lemma~\ref{second_phase}}\label{second_phase_proof}

To lighten the notation, let $\phi = \phi_{i^*}$.  Let  also  $\mathcal{U}_{\phi}$ denote the set of uncolored vertices in $\phi$, and   $\mathcal{U}_{\phi}(h) $ the subset of $\mathcal{U}_{\phi}$ that belongs to a hyperedge $h$.  Our goal is to color the vertices in $\mathcal{U}_{\phi}$ to get a proper list-coloring. 

Towards that end, let $L_v = L_v(\phi)$ denote the list of colors for $v$ in $\phi$, and $D_r(v,c) := D_{i^*+1,r}(v,c)$ the set of hyperedges (of size $t_{i^*+1,r}(v,c)$) with $r$ uncolored vertices in $\phi$ whose vertices ``compete" for $c$ with $v$,  and recall the conclusion of Lemma~\ref{target_lemma}. Let $\mu$ be the probability distribution induced by giving each vertex $v \in  \mathcal{U}_{\phi}$ a color from $L_v$ uniformly at random.  For every hyperedge $h$ and color $c$ such that  (i) $c \in \bigcap_{ v \in \mathcal{U}_{\phi}(h) } L_v $; and (ii) $\phi(v) = c$ for every vertex in $ h \setminus \mathcal{U}_{\phi}(h) $,  we define $A_{h,c}$ to be the event that all vertices of $h$ are colored $c$. Let $\mathcal{A}$ be the family of these (bad) events, and observe that any elementary event (list-coloring) that does not belong in their union is a proper. In other words, if we avoid these bad events we have found a proper list-coloring of the hypergraph. Moreover, for every $A_{h,c} \in \mathcal{A}$:
\begin{align*}
\mu\left( A_{h,c} \right)  \le \frac{1}{ \prod_{ v \in  \mathcal{U}_{\phi}(h) } |L_v(\phi)|   }  < \frac{1}{4},
\end{align*}
for large enough $\Delta$, since $ L_{i^*+1}= L_{i^*+1}(\Delta) \xrightarrow{  \Delta \to +\infty}  +\infty$.

Define 
\begin{align*}
D(A_{h,c} ) := \bigcup_{v \in \mathcal{U}_{\phi}(h) } \bigcup_{ c' \in L_v}   \bigcup_{r =1}^{k-1}  \left\{A_{h',c'} : h' \in D_r(v,c')\right\}
\end{align*}
and observe that $A_{h,c}$ is mutually independent of the events in $\mathcal{A} \setminus D(A_{h,c} )$.  The existential claim of Lemma~\ref{second_phase}  follows from Corollary~\ref{MikeLLL}  as, for every $A_{h,c} \in \mathcal{A}$:
\begin{eqnarray}
\sum_{ A \in D(A_{h,c} ) }  \mu(A) 	& \le &	 \sum_{ v \in \mathcal{U}_{\phi}(h) } \sum_{ c' \in L_v } \sum_{r = 1}^{k-1} \sum_{ h'\in D_{r}(v,c')}  \mu \left( A_{ h',c'} \right) \nonumber \\
 	& =  &   \sum_{ v \in \mathcal{U}_{\phi}(h) } \sum_{ c' \in L_v } \sum_{ i=1}^{k-1} \sum_{ h' \in D_{r}(v,c')}  \frac{1}{ \prod_{u \in \mathcal{U}_{\phi} (h') } |L_u| }   \nonumber \\
 	 & \le  &   \max_{ v\in \mathcal{U}_{\phi}(h)  } \frac{k}{   | L_v|   }    \sum_{ c' \in L_v  } \sum_{ r=1 }^{k-1}  \frac{ | D_{r}(v,c') | }{  L_{i^*+1}^{r} }   \label{one_second}\\
	 & \le&   \frac{k}{10k^2} \max_{ v \in \mathcal{U}_{ \phi }(h)  }   \frac{L_{i^* +1 }^{r}  \cdot  |L_v |}{|L_v| \cdot   L_{i^*+1}^{r} }   \label{two_second} \\
	 &   \le & \frac{1}{10}   < \frac{1}{4} \label{three_second},
 \end{eqnarray}
concluding the proof. Note that in~\eqref{one_second} we used the facts that every hyperedge has at most $k$ vertices and $L_{i^*+1}  \ge \Delta^{ \frac{\epsilon/3 }{ (k-1)  (1 + \epsilon/2 ) }   }$, and in~\eqref{two_second} we used the fact that 
 $|D_{r}(v,c') | \le T_{i^*+1}^{r} \le  \frac{1}{ 10k^2} L_{i^*+1 }^{r}  $.

Regarding the algorithmic claim, since our application of the Lov\'{a}sz Local Lemma fits the variable setting, the Moser-Tardos algorithm is applicable (specifically via Theorem~\ref{thm:randomized_mt} and Remark~\ref{llls}) and terminates in expected polynomial time. Furthermore, if $\Delta$ is assumed to be constant, Theorem~\ref{thm:deterministic_mt} ensures that the process can be derandomized to yield a deterministic polynomial-time algorithm. This is applicable because the specific structure of our bad events allows for the efficient computation of conditional probabilities. Additionally,~\eqref{three_second} provides the constant slack necessary to satisfy the requirements of Theorem~\ref{thm:deterministic_mt}.

\section{Hypergraph list-coloring proofs}\label{omitted_hyper}

In this section we prove  Lemmas~\ref{expectations_lemma},~\ref{concentration_lemma},~\ref{final_corollary},~\ref{no_errors},~\ref{target_lemma}.

We start by stating a couple of  important technical lemmas that will be helpful for these proofs. To streamline the presentation, we reserve their detailed proofs for the end of the section. It will be convenient to define $R_{i,r} = \frac{T_{i,r} }{ L_i^r } $, $R_{i,r}' = \frac{T_{i,r}' }{ (L_i')^r}  $ for every $r \in [k-1]$.

\begin{lemma}\label{bounding_keep_lemma}
If  for all $1 < j < i, r \in [k-1],  L_j, T_{j,r}  \ge (\ln \Delta)^{20(k-1)}$, then  
\begin{align*}
R_{i,r}      \le    k^{2( k-1-r) }   \ln \Delta.
\end{align*}
\end{lemma}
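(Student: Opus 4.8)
The plan is to prove the bound $R_{i,r} \le k^{2(k-1-r)}\ln\Delta$ by induction on $i$, using the recursive definition~\eqref{T_def} of $T_{i+1,r}$ and the definition~\eqref{L_def} of $L_{i+1}$, rewritten in terms of the normalized quantities $R_{i,r} = T_{i,r}/L_i^r$. First I would check the base case $i=1$ (or the smallest $i$ to which the hypothesis applies): there $T_{1,k-1}=\Delta$, $L_1 = (1+\delta)(\Delta/\ln\Delta)^{1/(k-1)}$, so $R_{1,k-1} = \Delta / L_1^{k-1} = \ln\Delta / (1+\delta)^{k-1} \le \ln\Delta$, matching the claimed bound with $r=k-1$ (where the exponent of $k^2$ is zero); and $R_{1,r}=0$ for $r<k-1$, trivially fine. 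For $k=2$ this is Kim's result anyway, so I only need $k\ge 3$.

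For the inductive step, assume $R_{j,r'} \le k^{2(k-1-r')}\ln\Delta$ holds for all $j<i$ and all $r'\in[k-1]$; I want the same for $i$. Dividing~\eqref{T_def} by $L_i^r$ and using $L_{i+1} = L_i\,\mathrm{Keep}_i - L_i^{2/3} \ge \tfrac12 L_i\,\mathrm{Keep}_i$ (valid since $L_i$ is large and $\mathrm{Keep}_i$ is bounded below — here I would invoke $\mathrm{Keep}_i \ge$ some constant, which follows from $\alpha = K/\ln\Delta$ being small together with the inductive bounds on $R_{i-1,\cdot}$ controlling the exponents $T_{i,r}(\alpha/L_i)^r$), I get an upper bound on $R_{i+1,r}$ in terms of $R_{i,j}$ for $j\ge r$. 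The first (main) sum in~\eqref{T_def} contributes, after dividing by $L_{i+1}^r$, roughly $\sum_{j\ge r}\binom{j}{r} R_{i,j}\,\mathrm{Keep}_i^{-r}(\mathrm{Keep}_i(1-\alpha\mathrm{Keep}_i))^r (\alpha\mathrm{Keep}_i)^{j-r} \le \sum_{j\ge r}\binom{j}{r} R_{i,j}\alpha^{j-r}$, which by the inductive hypothesis is at most $R_{i,r} + (\text{terms with extra factors }\alpha = O(1/\ln\Delta))$; since $\alpha$ is tiny the $j>r$ terms are negligible compared to the slack between consecutive bounds $k^{2(k-1-r)}\ln\Delta$ vs $k^{2(k-2-r)}\ln\Delta$. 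The second summand of~\eqref{T_def}, divided by $L_i^r$, is $4k^{2(k-r)}\alpha\,\alpha^{-r} S_i \ln\Delta$; here $S_i = \sum_\ell T_{i,\ell}/(L_i^{2\ell}(\ln\Delta)^{2\ell}) = \sum_\ell R_{i,\ell}/(L_i^\ell(\ln\Delta)^{2\ell})$, and using $R_{i,\ell}\le k^{2(k-1-\ell)}\ln\Delta$ and $L_i \ge (\ln\Delta)^{20(k-1)}$ this is vanishingly small, so after multiplying by $4k^{2(k-r)}\alpha^{1-r}\ln\Delta$ and dividing by $\mathrm{Keep}_i^r$ it stays $o(1)$. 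The third (error) summand in~\eqref{T_def} is $(\sum_{j\ge r}\binom{j}{r}\alpha^{j-r} T_{i,j}/L_i^{j-r})^{2/3} = (L_i^r \sum_{j\ge r}\binom{j}{r}\alpha^{j-r}R_{i,j})^{2/3}$, which divided by $L_{i+1}^r \asymp L_i^r\mathrm{Keep}_i^r$ behaves like $L_i^{-r/3}$ times a bounded quantity — again $o(1)$. Collecting, $R_{i+1,r} \le R_{i,r}(1+o(1)) + o(1) \le k^{2(k-1-r)}\ln\Delta$, closing the induction (the constant $4$ and the powers of $k$ are chosen precisely so the small perturbations are absorbed).

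The main obstacle I anticipate is bookkeeping rather than conceptual: I need a clean uniform lower bound on $\mathrm{Keep}_i$ (and on $1-\alpha\mathrm{Keep}_i$) to justify $L_{i+1}\ge\tfrac12 L_i\mathrm{Keep}_i$ and to convert the $\mathrm{Keep}_i^{-r}$ factors that appear after dividing by $L_{i+1}^r$ into harmless constants; this lower bound must itself be extracted from the inductive hypothesis (via $\sum_r T_{i,r}(\alpha/L_i)^r = \sum_r R_{i,r}(\alpha^r/1)\cdot L_i^0$... more precisely $(\alpha/L_i)^r T_{i,r} = \alpha^r R_{i,r}$, wait — $T_{i,r}(\alpha/L_i)^r = \alpha^r T_{i,r}/L_i^r = \alpha^r R_{i,r} \le \alpha^r k^{2(k-1-r)}\ln\Delta$, and since $\alpha = K/\ln\Delta$ this is $\le K^r k^{2(k-1)}/(\ln\Delta)^{r-1}$, which is small for $r\ge 2$ and equals $K\cdot k^{2(k-2)}\ln\Delta/\ln\Delta = Kk^{2(k-2)}$ — bounded but not small — for $r=1$), so $\mathrm{Keep}_i = \prod_r (1-\alpha^r R_{i,r})^{T_{i,r}}$... hmm, the exponent $T_{i,r}$ can be as large as $\Theta(L_i^r\ln\Delta)$, so I actually need $\mathrm{Keep}_i \approx \exp(-\sum_r (\alpha/L_i)^r T_{i,r}) = \exp(-\sum_r \alpha^r R_{i,r})$ bounded below by a constant, which holds since $\sum_r \alpha^r R_{i,r} \le \alpha R_{i,1} + O(1/\ln\Delta) \le K k^{2(k-2)} + o(1) = O(1)$. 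That chain of estimates is the delicate part; once it is in place the rest is a routine term-by-term comparison against the target bound.
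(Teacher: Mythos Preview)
Your overall plan (induction on $i$; base case $R_{1,k-1}=\ln\Delta/(1+\delta)^{k-1}$, $R_{1,r}=0$ for $r<k-1$; extract a constant lower bound on $\mathrm{Keep}_i$ from the inductive hypothesis; show the second and third summands of~\eqref{T_def} contribute $o(1)$ after dividing by $L_{i+1}^r$) matches the paper's proof exactly, and your computation of $\mathrm{Keep}_i \ge \exp(-\sum_r \alpha^r R_{i,r}) \ge \exp(-Kk^{2(k-2)}-o(1))$ is precisely what the paper does in~\eqref{constant_bound}.

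There is, however, a real gap in how you close the induction on the main term. You upper-bound
\[
\sum_{j\ge r}\binom{j}{r} R_{i,j}\,\mathrm{Keep}_i^{-r}\bigl(\mathrm{Keep}_i(1-\alpha\mathrm{Keep}_i)\bigr)^r (\alpha\mathrm{Keep}_i)^{j-r}
\;\le\;
\sum_{j\ge r}\binom{j}{r} R_{i,j}\,\alpha^{j-r},
\]
i.e.\ you discard the factor $(1-\alpha\mathrm{Keep}_i)^r$. But then the $j=r+1$ term alone is $(r+1)\alpha R_{i,r+1}\le (r+1)K k^{2(k-2-r)}$, which is a fixed positive constant, \emph{not} $o(1)$. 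So your conclusion ``$R_{i+1,r}\le R_{i,r}(1+o(1))+o(1)$'' is false, and even if the additive piece were $o(1)$, in the worst case $R_{i,r}=k^{2(k-1-r)}\ln\Delta$ the bound $R_{i,r}(1+o(1))+c$ already exceeds $k^{2(k-1-r)}\ln\Delta$, so the induction does not close. The ``slack between consecutive bounds $k^{2(k-1-r)}\ln\Delta$ vs $k^{2(k-2-r)}\ln\Delta$'' you mention is slack in $r$, not in $i$; it does not help here.

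The fix, and what the paper actually does, is to \emph{keep} the contraction factor: after absorbing the $(1-L_i^{-1/3}/\mathrm{Keep}_i)^{-r}$ correction one retains $(1-\alpha\mathrm{Keep}_i/2)^r$, giving
\[
R_{i+1,r}\le \Bigl(1-\tfrac{\alpha\mathrm{Keep}_i}{2}\Bigr)\Bigl(k^{2(k-1-r)}\ln\Delta + (r+1)Kk^{2(k-2-r)}+O\bigl(\tfrac{1}{\ln\Delta}\bigr)\Bigr)+O\bigl(\tfrac{1}{\ln\Delta}\bigr).
\]
The $(1-\alpha\mathrm{Keep}_i/2)$ factor applied to $k^{2(k-1-r)}\ln\Delta$ produces a \emph{decrease} of order $\tfrac12 K\,\mathrm{Keep}_i\,k^{2(k-1-r)}$, and since $\mathrm{Keep}_i$ is close to $1$ (thanks to $K=(100k^{3k})^{-1}$) this dominates the increase $(r+1)Kk^{2(k-2-r)}$ because $\tfrac12\mathrm{Keep}_i\,k^{2}>r+1$. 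That is exactly why the constants $k^{2(k-1-r)}$ are chosen as they are: the ratio $k^2$ between consecutive levels is what makes the contraction beat the feed-in from level $r+1$. Once you restore this factor the rest of your argument goes through.
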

The proof of Lemma~\ref{bounding_keep_lemma} can be found in Subsection~\ref{bounding_keep_lemma_proof}. A straightforward corollary of Lemma~\ref{bounding_keep_lemma} is the following.
\begin{corollary}\label{bounding_keep}
If $L_i, T_{i,r} \ge (\ln \Delta)^{20(k-1)}$ and $R_{i,k-1} \ge \frac{ 1}{ 10k^2 }$, then 
\begin{align*}
C:= \mathrm{exp}\left( - \frac{ K k^{ 2(k-2)  } }{1 -  \frac{\delta}{100k} }      \right) \le \mathrm{Keep}_i \le  1 -  \frac{K^{k-1}}{12 k^2  ( \ln \Delta)^{k-1} }.
\end{align*}
\end{corollary}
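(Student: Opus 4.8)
\textbf{Proof plan for Corollary~\ref{bounding_keep}.}
The plan is to derive both bounds on $\mathrm{Keep}_i = \prod_{r=1}^{k-1}\left(1-\left(\frac{\alpha}{L_i}\right)^r\right)^{T_{i,r}}$ directly from the definition, feeding in the bound $R_{i,r} = T_{i,r}/L_i^r \le k^{2(k-1-r)}\ln\Delta$ from Lemma~\ref{bounding_keep_lemma}, the hypothesis $R_{i,k-1}\ge \frac{1}{10k^2}$, and the value $\alpha = K/\ln\Delta$ with $K = (100k^{3k})^{-1}$. For the \emph{upper} bound, I would use $\left(1-\left(\frac{\alpha}{L_i}\right)^r\right)^{T_{i,r}} \le 1$ for every $r$, and then extract a single factor, namely the $r=k-1$ term. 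Writing $x := \left(\frac{\alpha}{L_i}\right)^{k-1} = \alpha^{k-1}/L_i^{k-1}$, the $r=k-1$ factor is $(1-x)^{T_{i,k-1}}$, and since $x$ is tiny we have $(1-x)^{T_{i,k-1}} \le e^{-x T_{i,k-1}} = \exp\!\left(-\alpha^{k-1} R_{i,k-1}\right) \le \exp\!\left(-\frac{\alpha^{k-1}}{10k^2}\right)$. Substituting $\alpha = K/\ln\Delta$ gives $\exp\!\left(-\frac{K^{k-1}}{10k^2(\ln\Delta)^{k-1}}\right)$; using $e^{-y}\le 1-y/2$ for small $y>0$ (valid here since $\ln\Delta$ is large), this is at most $1 - \frac{K^{k-1}}{20k^2(\ln\Delta)^{k-1}}$, which is stronger than the claimed $1 - \frac{K^{k-1}}{12k^2(\ln\Delta)^{k-1}}$ after absorbing constants; I would just track the inequalities so the constant $12$ comes out cleanly (e.g., using $e^{-y}\le 1-y/2$ directly gives $20$, so the stated $12$ is comfortably implied; alternatively a slightly more careful $e^{-y} \le 1 - y + y^2/2$ argument tightens it further).

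For the \emph{lower} bound, I would take logarithms: $-\ln \mathrm{Keep}_i = -\sum_{r=1}^{k-1} T_{i,r}\ln\!\left(1-\left(\frac{\alpha}{L_i}\right)^r\right)$. Using the elementary inequality $-\ln(1-z) \le z + z^2 \le 2z$ for $z \in [0,1/2]$ (here $z = (\alpha/L_i)^r$ is extremely small since $\alpha/L_i \le \alpha \le K/\ln\Delta \ll 1$), we get $-\ln\mathrm{Keep}_i \le 2\sum_{r=1}^{k-1} T_{i,r}\left(\frac{\alpha}{L_i}\right)^r = 2\sum_{r=1}^{k-1}\alpha^r R_{i,r}$. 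Now plug in Lemma~\ref{bounding_keep_lemma}: $R_{i,r}\le k^{2(k-1-r)}\ln\Delta$, so $\alpha^r R_{i,r} \le (K/\ln\Delta)^r k^{2(k-1-r)}\ln\Delta = K^r k^{2(k-1-r)} (\ln\Delta)^{1-r}$. Since $K \le 1$ and $k\ge 2$, the dominant term as $\ln\Delta \to \infty$ is the $r=1$ term, which equals $K k^{2(k-2)}$; the $r\ge 2$ terms are $O((\ln\Delta)^{-1})$ and hence negligible — bounded, say, by $K k^{2(k-2)} \cdot \frac{\delta}{100k}$ for $\Delta$ large (recalling $\delta = (1+\epsilon)(k-1)-1$ is a fixed constant). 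Therefore $-\ln\mathrm{Keep}_i \le 2\left(K k^{2(k-2)} + K k^{2(k-2)}\cdot\frac{\delta}{200k}\cdot(\text{something})\right)$; I would organize the small-order terms so that the sum telescopes into $\frac{K k^{2(k-2)}}{1 - \frac{\delta}{100k}}$ (using $\frac{1}{1-t} = 1 + t + t^2 + \cdots \ge 1 + t$ to create slack that absorbs the $r\ge 2$ tail). Exponentiating yields $\mathrm{Keep}_i \ge \exp\!\left(-\frac{K k^{2(k-2)}}{1-\frac{\delta}{100k}}\right) = C$.

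I expect the main (only) obstacle to be \textbf{bookkeeping the constants}: matching the exact shape $\frac{Kk^{2(k-2)}}{1-\delta/100k}$ requires being slightly careful about where the factor of $2$ from $-\ln(1-z)\le 2z$ goes and how the higher-order $r\ge2$ terms are absorbed, as well as confirming "sufficiently large $\Delta$" suffices to make $(\ln\Delta)^{1-r}$ terms small enough — but all of this is routine once the structure above is in place. Everything else is just plugging Lemma~\ref{bounding_keep_lemma} into the product/sum defining $\mathrm{Keep}_i$ and applying standard $\ln(1-z)$ estimates. No new ideas are needed beyond those already present in the excerpt.
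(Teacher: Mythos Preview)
Your approach is the same as the paper's, and the upper bound argument is exactly right (the paper also just drops to the $r=k-1$ factor, uses $1-x\le e^{-x}$, and then $e^{-y}\le 1-y/1.2$ for small $y$).

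For the lower bound, however, there is a real constant-tracking issue that is \emph{not} just bookkeeping. If you use $-\ln(1-z)\le 2z$, the factor $2$ multiplies the \emph{dominant} $r=1$ term, giving $-\ln\mathrm{Keep}_i \le 2Kk^{2(k-2)} + o(1)$. But the target exponent is $\frac{Kk^{2(k-2)}}{1-\delta/(100k)}$, and since $\delta/(100k)<1/2$ we have $\frac{1}{1-\delta/(100k)}<2$, so $2Kk^{2(k-2)}$ simply does not fit under it. The slack $(1-\delta/(100k))^{-1}$ is only meant to absorb the $r\ge 2$ tail (which is $O((\ln\Delta)^{-1})$) and the tiny correction from $-\ln(1-z)\approx z$; it cannot eat a genuine factor of $2$ on the leading term.

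The fix is to use a sharper estimate that keeps the leading coefficient equal to $1$. Either stop at $-\ln(1-z)\le z+z^2$ and observe that here $z=(\alpha/L_i)^r\le \alpha/L_i$ is \emph{tiny} (not merely $\le 1/2$), so the $z^2$ contribution is lower order and goes into the $o(1)$ bucket; or use $-\ln(1-z)\le \frac{z}{1-z}$, i.e.\ $1-\tfrac{1}{x}\ge e^{-1/(x-1)}$ for $x\ge 2$, which is precisely what the paper invokes via \eqref{constant_bound}. Either way the sum becomes $\sum_{r}\alpha^r R_{i,r}\,(1+o(1))\le Kk^{2(k-2)}+o(1)$, and now the $(1-\delta/(100k))^{-1}$ slack does the job for large $\Delta$.
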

\begin{proof}
The lower bound follows directly from~\eqref{constant_bound} (which appears in the proof of Lemma~\ref{bounding_keep_lemma}). The upper bound follows from our assumption that  $R_{i,k-1} \ge \frac{ 1}{ 10k^2 }$ which implies  that
\begin{align*}
\mathrm{Keep}_i     \le \mathrm{e}^{ - \sum_{r=1}^{k-1}  \alpha^{r}  R_{i,r} }  \le \mathrm{e}^{-  \alpha^{k-1}R_{i,k-1}  }  \le \mathrm{e}^{ - \frac{ K^{k-1}   }{ 10 k^2   (\ln \Delta)^{k-1} } }  <  1 -  \frac{K^{k-1}}{12 k^2  (\ln \Delta )^{k-1 } },
\end{align*}
for sufficiently large $\Delta$. 
\end{proof}

The proof of the following lemma can be found in Subsection~\ref{pain_lemma_proof}.
\begin{lemma}\label{pain}
If  $L_j, T_{j,r} \ge (\ln \Delta)^{20(k-1)}$ for all $1 < j < i$, then for every $r \in [k-1]$:
\begin{align*}
R_{i,r }' \le  (1 - \alpha C )^{r (i-1)} \ln \Delta   \cdot     \frac{   (1+  \frac{\delta}{k^{100}} )^{k-1-r}  }{ (1+\delta - \frac{\delta}{k^{99} } )^{k-1} C^{k-1-r }   } \prod_{p=r}^{ k-2}(p+1).
\end{align*}
\end{lemma}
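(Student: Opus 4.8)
\textbf{Proof plan for Lemma~\ref{pain}.}
The goal is to establish an explicit upper bound on $R_{i,r}' = T_{i,r}'/(L_i')^r$ that decays geometrically in $i$. The plan is to set up a recursion for $R_{i,r}'$ directly from the recursive definitions \eqref{Lprime_def} and \eqref{Tprime_def}, and then solve it by induction on $i$ (with a secondary induction or case analysis on $r$, proceeding from $r=k-1$ downwards, since the second summand in \eqref{Tprime_def} couples $T_{i+1,r}'$ only to the ``true'' quantities $T_{i,\ell}, L_i$ for all $\ell$, while the first summand couples it to $T_{i,j}'$ for $j \ge r$). First I would divide \eqref{Tprime_def} through by $(L_{i+1}')^r = (L_i' \cdot \mathrm{Keep}_i)^r$. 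The first (main) summand becomes
\[
\sum_{j=r}^{k-1} R_{i,j}' \binom{j}{r} (1 - \alpha\mathrm{Keep}_i)^r \alpha^{j-r},
\]
which is clean; the error summand, after division, is $4k^{2(k-r)} \alpha (\alpha^{-1} L_i)^r (\ln\Delta) S_i / (L_i'\mathrm{Keep}_i)^r$, and here I would use Lemma~\ref{bounding_keep_lemma} to bound $S_i = \sum_\ell T_{i,\ell}/(L_i^{2\ell}(\ln\Delta)^{2\ell}) = \sum_\ell R_{i,\ell}/(L_i^\ell (\ln\Delta)^{2\ell})$ by something tiny (each $R_{i,\ell} \le k^{2(k-1)}\ln\Delta$ and there are extra powers of $L_i^{-\ell} \le (\ln\Delta)^{-20(k-1)\ell}$ to kill everything), together with the fact that $L_i/L_i'$ is within a constant factor of $1$ by Lemma~\ref{no_errors}(a) — wait, Lemma~\ref{no_errors} is proved \emph{using} this lemma, so instead I would bound $L_i, L_i'$ from below directly by the hypothesis $L_j \ge (\ln\Delta)^{20(k-1)}$, making the whole error term negligible relative to the target bound.

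Second, having reduced to the essentially-clean recursion $R_{i+1,r}' \le \sum_{j=r}^{k-1} \binom{j}{r}\alpha^{j-r}(1-\alpha\mathrm{Keep}_i)^r R_{i,j}' + (\text{negligible})$, I would prove the claimed bound by induction on $i$. The base case $i=1$: $R_{1,k-1}' = \Delta/L_1'^{k-1} = \Delta / ((1+\delta)^{k-1} \Delta/\ln\Delta) = \ln\Delta/(1+\delta)^{k-1}$ and $R_{1,r}' = 0$ for $r < k-1$, which is consistent with the claimed formula (the $(1-\alpha C)^{r\cdot 0}=1$ factor, and the constant in front is designed to dominate this). For the inductive step I would substitute the inductive hypothesis for each $R_{i,j}'$ on the right-hand side, pull out the common geometric factor $(1-\alpha C)^{r(i-1)}$ — using $\mathrm{Keep}_i \ge C$ from Corollary~\ref{bounding_keep} so that $(1-\alpha\mathrm{Keep}_i)^r \le (1-\alpha C)^r$, which supplies exactly one more factor of $(1-\alpha C)^r$ to advance $i-1$ to $i$ — and then verify that the remaining combinatorial sum $\sum_{j=r}^{k-1}\binom{j}{r}\alpha^{j-r} \cdot [\text{the }j\text{-dependent part of the bound}]$ is dominated by the $r$-dependent part of the bound, i.e., that the product $\prod_{p=r}^{k-2}(p+1)$ and the powers $(1+\delta/k^{100})^{k-1-r}/C^{k-1-r}$ absorb the telescoping/binomial overhead. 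Here the point is that $\alpha = K/\ln\Delta$ is small, so the $j > r$ terms carry extra factors of $\alpha$ that are more than compensated by the extra $(1+\delta/k^{100})/C > 1$ factors and the $(p+1)$ factors in the target; one checks $\binom{j}{r}\prod_{p=j}^{k-2}(p+1) \le \prod_{p=r}^{k-2}(p+1)$ (roughly $\binom{j}{r}\cdot(j+1)(j+2)\cdots \le (r+1)(r+2)\cdots$, which holds since $\binom{j}{r} \le \binom{j}{r}$ and the falling vs. rising factorial bookkeeping works out), and that the geometric-series-in-$\alpha$ sum is $\le 1 + O(\alpha)$, swallowed by the slack $(1+\delta/k^{100})/C$.

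The main obstacle I anticipate is the bookkeeping in the inductive step: getting the constants $\delta/k^{100}$, $\delta/k^{99}$, the product $\prod_{p=r}^{k-2}(p+1)$, and the powers of $C$ to line up so that the sum over $j$ of the inductive hypotheses, weighted by $\binom{j}{r}\alpha^{j-r}$, is genuinely $\le$ the claimed bound for $r$ — rather than merely $O(\cdot)$ of it. This is a purely computational matter of choosing the exponents $100$ and $99$ and the specific product form so that there is enough ``room,'' but it requires care because the claimed bound must be proved with the stated constant, not up to a constant factor. A secondary subtlety is confirming that the induction on $i$ interacts correctly with the fact that $\mathrm{Keep}_i$ depends on the un-primed $L_i, T_{i,r}$: I only need the lower bound $\mathrm{Keep}_i \ge C$ (valid for all $i$ in range by Corollary~\ref{bounding_keep}, whose hypothesis $R_{i,k-1} \ge 1/(10k^2)$ is part of the ambient assumptions under which this family of lemmas is invoked — though here Lemma~\ref{pain}'s own hypothesis only lists $L_j, T_{j,r} \ge (\ln\Delta)^{20(k-1)}$, so I would either add the $R_{i,k-1}$ condition or extract the needed lower bound on $\mathrm{Keep}_i$ from \eqref{constant_bound} alone), so the dependence causes no circularity.
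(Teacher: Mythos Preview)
Your setup is right: dividing \eqref{Tprime_def} by $(L_{i+1}')^r$ gives the clean recursion
\[
R_{i+1,r}' \le (1-\alpha\,\mathrm{Keep}_i)^r \sum_{j=r}^{k-1}\binom{j}{r}\alpha^{j-r} R_{i,j}' \;+\; (\text{error}),
\]
and your treatment of the error term (via Lemma~\ref{bounding_keep_lemma} and the lower bound $L_i\ge(\ln\Delta)^{20(k-1)}$) as well as your observation that only the constant lower bound $\mathrm{Keep}_i\ge C$ from \eqref{constant_bound} is needed are both correct.

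The gap is in the inductive step itself. Write $B_j$ for the claimed bound on $R_{i,j}'$, so $B_j=(1-\alpha C)^{j(i-1)}\ln\Delta\cdot C_j$ with $C_j$ the $i$-independent constant. Substituting the inductive hypothesis into the recursion and using $(1-\alpha\,\mathrm{Keep}_i)^r\le(1-\alpha C)^r$ gives
\[
R_{i+1,r}' \le (1-\alpha C)^r\Bigl[\,B_r \;+\; \sum_{j>r}\binom{j}{r}\alpha^{j-r}B_j\Bigr].
\]
But $(1-\alpha C)^r B_r = (1-\alpha C)^{ri}\ln\Delta\cdot C_r$ is \emph{already exactly} the target bound for $R_{i+1,r}'$. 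The $j>r$ terms are strictly positive, so the induction does not close. The slack you point to --- the ratio $C_r/C_j = \bigl((1+\delta/k^{100})/C\bigr)^{j-r}\prod_{p=r}^{j-1}(p+1)$ --- cannot help here, because that slack is built symmetrically into \emph{both} $B_r$ and the target; once the $j=r$ term saturates the target there is no room left. (Concretely: the sum you say you will ``verify'' reduces to $C_r + (\text{positive}) \le C_r$, which is false.)

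The paper resolves this by \emph{not} substituting the inductive bound for the diagonal term $R_{i-1,r}'$. Instead it unrolls that term repeatedly --- writing $R_{i,r}' \le (1-\alpha C)^r R_{i-1,r}' + (\text{cross terms}_{j>r})$, then expanding $R_{i-1,r}'$ the same way, and so on down to $R_{1,r}'=0$ (for $r<k-1$; the case $r=k-1$ is handled first and separately since there are no cross terms). The inductive hypothesis is applied only to the off-diagonal $R_{i',j}'$ with $j>r$ that appear along the way. This produces a double sum over $j>r$ and over the time index $\ell$; summing the geometric series $\sum_{\ell}(1-\alpha C)^{(j-r)\ell}\approx 1/((j-r)\alpha C)$ costs one factor of $\alpha^{-1}\approx(\ln\Delta)/K$, and \emph{that} is what the extra factors $\prod_{p=r}^{j-1}(p+1)$ and $((1+\delta/k^{100})/C)^{j-r}$ in the target are there to absorb. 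Your plan needs this unrolling (or an equivalent strengthened inductive hypothesis that tracks the true size of $R_{i,r}'$ rather than the final bound) for the step to go through.
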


We are now ready to prove Lemmas~\ref{expectations_lemma},~\ref{concentration_lemma},~\ref{final_corollary},~\ref{no_errors} and~\ref{target_lemma}.

\subsection{Proof of Lemma~\ref{expectations_lemma}} 

\begin{proof}[Proof of part~\eqref{ex_part_a}] 
For every color $c \in L_v(i)$,
\begin{align}\label{quick_calc}
\Pr[ c \in L_v(i+1)  ]  =  \prod_{r=1}^{k-1}  \prod_{ h \in D_{i,r}(v,c)  }  \left( 1-  \prod_{u \in (h  \setminus \{v \} )\cap V_i  } \frac{  \alpha}{ \ell_i(u)  }  	\right) =  \prod_{r=1}^{k-1}  \left(  1-  \left(\frac{   \alpha}{  L_i}\right)^r 	\right)^{T_{i,r}} = \mathrm{Keep}_i,
\end{align}
where for the second equality we used our assumption that $Q(i)$ holds. Therefore, the proof of the first part of the lemma follows from the  linearity of expectation.

\end{proof}

\begin{proof}[Proof of part~\eqref{ex_part_b} ]
Recall the definition  of $t_{i+1,r}'(v,c)$ and  note that   only hyperedges  in $\bigcup_{j = r}^{k-1}  D_{i,j}(v,c) $  can be potentially counted by $t_{i+1,r}'(v,c)$. In particular, unless $v$ and $j-1$ other uncolored vertices of a hyperedge  $h \in D_{i,j}(v,c)$, $j \ge r$, are assigned $c$ during iteration $i$, then if $h$ is counted by $t_{i+1,r}'(v,c)$, it is also counted by $t_{i+1,r}(v,c)$. Therefore,
\begin{align}\label{three_one_first}
\ex[ t_{i+1,r}'(v,c)  ] \le \ex[ t_{i+1,r}(v,c)  ]  + O\left( \sum_{j=r}^{k-1} \frac{T_{i,j}}{L_i^j}\right), 
\end{align}  
and so we  focus on bounding $\ex[ t_{i+1,r}(v,c)  ]$.

Fix $h \in D_{i,j}(v,c)$, where $j \ge r$. Our goal will be to show that 
\begin{align}\label{tornado_warning} 
\Pr[ h  \in D_{i+1,r}(v,c) ] \le&   { j \choose r }  \left( \mathrm{Keep}_i (1-\alpha \mathrm{Keep}_i) \right)^r    \left(  \frac{ \alpha \mathrm{Keep}_i }{ L_i }  \right)^{j-r}    \nonumber \\	
				 &+4 r {j \choose r} \frac{ \mathrm{Keep}_i^{j-1} \alpha^{ j-r+1} S_i }{ L_i^{j-r} } 			 +O\left(  \frac{1}{ L_i^j } \right),
\end{align} 
since combining~\eqref{tornado_warning} with~\eqref{three_one_first} implies the lemma.   To see this, observe that
\begin{align}
T_{i,j} \cdot 4 r {j \choose r} \frac{ \mathrm{Keep}_i^{j-1} \alpha^{ j-r+1} S_i }{ L_i^{j-r} } 	&=   4 \alpha r {j \choose r }  \cdot \alpha^j \frac{ T_{i,j}  }{ L_{i}^{ j}  }  \mathrm{Keep}_i^{j-1} \cdot  (\alpha^{-1}  L_i)^{r} S_i \nonumber   \\
&\le 
 \begin{cases} \label{trick_trick}
         4  T_{i,1}  S_i ,
 &\text{if }  j=r =1 , \\
       \frac{4  \alpha r {j \choose r }}{\mathrm{e} (j-1) }   (\alpha^{-1}  L_i)^{r} S_i  & \text{otherwise.}
   \end{cases}
\end{align}
Note that in deriving the second part of the inequality in~\eqref{trick_trick} we first used  that $1 - x \le \mathrm{e}^{-x} $ for every $ x \ge 0$  in order to bound  $\mathrm{Keep}_i$ by $\mathrm{exp}(- \alpha^j T_{i,j} /L_i^j )$ ,  and then  that $\max_{x} x \mathrm{e}^{ - \ell x} \le \frac{1 }{ \ell  \mathrm{e} }  $ for every $\ell $. Therefore, 
\begin{eqnarray}
\ex[ t_{i+1,r}(v,c)  ]  &\le & \sum_{j=r}^{k-1 } T_{i,j} \max_{ h \in D_{i,j}(v,c) }   \Pr[ h  \in D_{i+1,r}(v,c) ]  \\
			     & < &    \sum_{j =r }^{k-1 }  \left(  T_{i,j}\cdot  { j \choose r} \left( \mathrm{Keep}_i  \left(1 - \alpha \mathrm{Keep}_i  \right)\right)^{r}   \left( \frac{ \alpha \mathrm{Keep}_i }{L_i } \right)^{j-r} \right)    \nonumber      \\ 
			     &  &            +  4 k^{ 2(k-r) } \alpha  (\alpha^{-1}  L_i)^{r} S_i \ln \Delta  + O\left( \sum_{j=r}^{k-1} \frac{T_{i,j} }{ L_i^{j} }   \right), \label{koura}
\end{eqnarray}
for sufficiently large $\Delta$. In deriving~\eqref{koura} we used~\eqref{trick_trick} and the facts that:
\begin{align*}
 T_{i,1}  & =  \frac{ T_{i,1}}{L_i}   \cdot L_i \le L_i  \cdot k^{2(k-1-r) }\ln \Delta,  \enspace \mbox{           according to Lemma~\ref{bounding_keep_lemma};}  \\
\sum_{j = r}^{k-1} \frac{ r {j \choose r }} { \mathrm{e} (j-1)  }     & <  \ln \Delta \cdot  k^{2(k-1-r)  }  \mbox{ for sufficiently large $\Delta$ and $r > 1$.} 
\end{align*}

Towards proving~\eqref{tornado_warning},  for any vertex $u \in h \setminus \{v \}$, consider the events
\begin{eqnarray*}
E_{u,1} & =&  \text{ ``$ u $ does not retain its color  and $ c \in L_u(i+1) $"}, \\
E_{u,2} & =&  \text{ ``$u$ is assigned $c$ and retains its color"}. 
\end{eqnarray*}
Let also $B_c$ be the event that $v$ and $j-1$ other uncolored vertices of $h$ receive color $c$ in $\sigma_i$. Since we have assumed that our hypergraph is of girth at least $5$ (and thus at least 4, which  suffices here), for any neighbor $u$ of $v$ and $f \in \{1,2\}$ the event  $E_{u,f}$ is mutually independent of all events $E_{u', \ell}, \ell \in \{1,2\}, u \ne u'$, conditional on $B_c$ not occurring.  Thus, if $\Pr[ E_{u,\ell} \mid  \overline{B_c}  ] \le p_{\ell}$, $\ell \in \{1,2\}$,   for every vertex $u \in h \setminus \{v \}$, we obtain
\begin{align}\label{three_two_second}
\Pr[ h  \in D_{i+1,j}(v,c) ]\le      { j \choose  r } p_{1}^{r}   p_2^{j-r}  +  \Pr[B_c] \le { j \choose  r } p_{1}^{r}   p_2^{j-r} +  \frac{2^k}{ L_i^j}, 
\end{align}
since $\Pr[ B_c] \le   2^kL_i^{-j}$.

Now we claim that for any  $u \in h \setminus \{v \}$, and sufficiently large $\Delta$,
\begin{align}\label{E2_bound}
\Pr[ E_{u,2} \mid \overline{B_c}] \le   \frac{ \alpha \mathrm{Keep}_i }{L_i } + \frac{2 }{ (L_i  \ln \Delta) ^{j+1} } =:    q_2  + \delta_2.
\end{align}
To see this, notice that  conditional on $\overline{B_c}$ the probability that   $u$   is activated is $\alpha$, it is assigned $c$ with probability  at most  $ 1/L_i$, and  it retains $c$  with probability that is at most
\begin{align}\label{same_same_same}
 \prod_{r  \in [k-1] \setminus \{j \} } \left(1 - \frac{\alpha^r}{L_i^r } \right)^{ T_{i,r} } \cdot \left( 1- \frac{ \alpha^j  }{ L_i^j }  \right)^{T_{i,j} - 1 }   = \frac{ \mathrm{Keep}_i }{1 - \frac{\alpha^j}{L_i^j }  }.
\end{align}
Thus,
\begin{align*}
\Pr[ E_{u,2} \mid \overline{B_c}]  \le  \frac{ \alpha \mathrm{Keep}_i }{L_i( 1- \frac{ \alpha^j }{L_i^j } )  }  \le \frac{\alpha \cdot \mathrm{Keep}_i }{L_i } \cdot \left(1 + \frac{2 \alpha^j}{ L_i^j}  \right) \le  \frac{ \alpha \mathrm{Keep}_i }{L_i } + \frac{2 }{ (L_i  \ln \Delta) ^{j+1} }.
\end{align*}
for sufficiently large $\Delta$, concluding the proof of~\eqref{E2_bound}.

Further, we  claim that
\begin{align}\label{E1_bound}
\Pr[ E_{u,1} \mid \overline{B_c}] \le   \mathrm{Keep}_i( 1- \alpha \mathrm{Keep}_i)  + \left(  2\alpha  S_i +  (L_i \ln \Delta)^{-j} ( 3 + 4 \alpha S_i ) \right) =: q_1 + \delta_1.
\end{align}

To show~\eqref{E1_bound} we consider three cases. The first case is  that $u$ is not activated and  $c \in L_u(i+1)$ (notice that these are two independent events). In this case $u$ will not retain its  color, and observe that
\begin{align}\label{alex_h}
\Pr[ c \in L_u(i+1)  \mid \overline{B_c}] \le \frac{  \mathrm{Keep}_i }{ 1- \frac{ \alpha^j }{L_i^j }   }  \le  \mathrm{Keep}_i  \left( 1+ \frac{2 \alpha^j }{L_i^j }\right)    \le \mathrm{Keep}_i + 2 ( L_i  \ln \Delta ) ^{-j}.
\end{align}
Thus,
\begin{align}\label{first_casara}
\Pr[ \text{$u$ is not activated and $c \in L_u(i+1) $}  \mid \overline{B_c} ] \le (1-\alpha) \left( \mathrm{Keep}_i + 2 ( L_i  \ln \Delta )^{-j} \right).
\end{align}
In the second case we consider the scenario where  $u$ is activated and is assigned $c$ in $\sigma_i$. Clearly then, the probability that $c \in L_u(i+1)$ and $u$ does not retain $c$ is zero.   Finally, suppose that $u$ is activated and is assigned  a color $\gamma \ne c$ in $\sigma_i$. Our goal is to compute
$\Pr[ \text{($u$ is activated and assigned $\gamma$)} \wedge E_{u,1} \mid \overline{B_c} ]$ for each $\gamma$ so that we can sum up these probabilities over all possible $\gamma \ne c$ along with~\eqref{first_casara}.

For a vertex $w$ let $F_w^{\gamma}$ denote the event that $w$ is activated and assigned $\gamma$ in $\sigma_i$. Using this notation we have:
\begin{align}
 \Pr[ F_u^{\gamma}  \wedge E_{u,1} \mid \overline{B_c} ] & =\Pr[ F_u^{\gamma} \mid    \overline{B_c} ]  \cdot  \Pr[(\gamma \notin L_u(i+1) ) \wedge (c \in L_u(i+1))  \mid F_u^{\gamma}, \overline{B_c}      ]  \nonumber \\
																	&= \frac{\alpha}{L_i}  \cdot \Pr[(\gamma \notin L_u(i+1) ) \wedge (c \in L_u(i+1))  \mid  F_u^{\gamma} , \overline{B_c} ] \nonumber \\
																	& =  \frac{\alpha}{L_i} \Pr[  c \in L_u(i+1) \mid F_u^{\gamma}, \overline{B_c}   ] \cdot \Pr[ \gamma \not \in L_u(i+1) \mid c \in L_u(i+1), F_u^{\gamma}, \overline{B_c}  ] \nonumber \\
																	& =  \frac{\alpha}{L_i} \Pr[  c \in L_u(i+1) \mid \overline{B_c}   ] \cdot \Pr[ \gamma \not \in L_u(i+1) \mid c \in L_u(i+1), F_u^{\gamma}, \overline{B_c}  ] \nonumber \\
																	& \le   \frac{\alpha}{L_i} \cdot (\mathrm{Keep}_i + 2 ( L_i  \ln \Delta ) ^{-j} )\cdot  \Pr[ \gamma \not \in L_u(i+1) \mid c \in L_u(i+1), F_u^{\gamma}, \overline{B_c}  ] \label{just_you_wait}
\end{align}
and so below we focus on  bounding  for each $ \gamma \in L_u(i) \setminus \{ c\}$  the probability that  $ \gamma \notin L_u(i+1)$     conditional on  that $c \in L_u(i+1)$,  $u$ is activated and assigned $\gamma$ in $\sigma_i$, and $B_c$ did not occur. Note that in deriving~\eqref{just_you_wait} we used~\eqref{alex_h}.

We have:
\begin{align}
&\Pr[ \gamma \not \in L_u(i+1) \mid c \in L_u(i+1), F_u^{\gamma}, \overline{B_c}  ]   = 1 - \Pr[ \gamma \in L_u(i+1) \mid c \in L_u(i+1), F_u^{\gamma}, \overline{B_c} ] 			\nonumber \\
 & = 1 -  \prod_{\ell =1}^{k-1} \prod_{ g \in D_{i,\ell}(u,\gamma)}   \left( 1 - \Pr[ \cap_{w \in (g \setminus \{u \}) \cap V_i } F_w^{\gamma} \mid c \in L_u(i+1), F_u^{\gamma},  \overline{B_c}  ] 	\right) \label{work_work_work} \\
 & = 1 -  \prod_{\ell =1}^{k-1} \prod_{ g \in D_{i,\ell}(u,\gamma)}   \left( 1 - \Pr[ \cap_{w \in (g \setminus \{u \}) \cap V_i } F_w^{\gamma} \mid c \in L_u(i+1),  \overline{B_c}  ]  \right). \label{work_work}
\end{align}
Note that in deriving~\eqref{work_work_work} we use the fact the girth of the hypergraph is at least $5$ which, in particular, implies that any two hyperedges that contain $u$ do not have any other vertex in common.

To further bound~\eqref{work_work},  we  consider the probability that every vertex in $( g \setminus \{u \} ) \cap V_i$ is activated and 
assigned $\gamma$ in $\sigma_i$, conditional on that   $c \in L_u(i+1)$ and $\overline{B_c}$, for any fixed $\ell \in [k-1]$ and $g \in D_{i,\ell}(u,\gamma)$.   We consider two cases depending on whether $g = h$ or not.

We start with the case where $g \ne h$. Let $A_g$ be the event that not every vertex in $(g \setminus \{u \} ) \cap V_{i}$  is activated and assigned $c$ in $\sigma_i$. Since our hypergraph has girth at least $5$ and the color activations and color assignments are independent over different vertices, we have:
\begin{align}
 \Pr[ \cap_{w \in (g \setminus \{u \}) \cap V_i } F_w^{\gamma} \mid c \in L_u(i+1),  \overline{B_c}  ]  & =  \Pr[ \cap_{w \in (g \setminus \{u \}) \cap V_i } F_w^{\gamma} \mid A_g ]   \nonumber  \\
 																		    & =  \frac{ \Pr[ ( \cap_{w \in (g \setminus \{u \}) \cap V_i } F_w^{\gamma}) \wedge A_g ] }{  \Pr[ A_g]}  \nonumber \\
																		    & = \frac{ \Pr[  \cap_{w \in (g \setminus \{u \}) \cap V_i } F_w^{\gamma} ] }{  \Pr[ A_g]}   \nonumber \\
																		    & = \frac{ \alpha^{\ell} L_i^{-\ell} }{ 1 - \alpha^{\ell} L_i^{-\ell}  }  \le  \left( \frac{\alpha}{L_i } \right)^{\ell} +   \frac{1 }{ L_i^{2\ell} (\ln \Delta)^{2\ell}  }, \label{stilll}
\end{align}
for sufficiently large $\Delta$, since $K<1$. 

Next we consider the case $g =h $. The difference here is that the event $\cap_{w \in (h \setminus \{u \}) \cap V_i } F_w^{\gamma} $ is not independent  of $\overline{B_c}$ as before. However, notice that since $g \in D_{i,\ell}(u,\gamma)$, $h \in D_{i,j}(v,c)$ and $\gamma \ne c$, we can only have $g= h$ when $j= \ell = k-1$.  This means that the occurrence of  event $\overline{B_c}$ prohibits the occurrence of event  $A_h$. Therefore, the event $\cap_{w \in (h \setminus \{u \}) \cap V_i } F_w^{\gamma}$  is independent of the event $c \in L_u(i+1)$ conditional on the event $\overline{B_c}$  and, thus, we have:
\begin{align}
 \Pr[ \cap_{w \in (h \setminus \{u \}) \cap V_i } F_w^{\gamma} \mid c \in L_u(i+1),  \overline{B_c}  ]  & =  \Pr[ \cap_{w \in (h \setminus \{u \}) \cap V_i } F_w^{\gamma} \mid \overline{B_c}  ]    \nonumber\\
																		    &=  \frac{ \Pr[ \cap_{w \in (h \setminus \{u \}) \cap V_i } F_w^{\gamma} ]  }{ \Pr[ \overline{B_c}  ]  }  \nonumber \\
																		    & =  \frac{ \alpha^{k-1} L_i^{-(k-1)} }{1 -   L_i^{-(k-1)}  }  \le 2 \left( \frac{\alpha}{L_i } \right)^{k-1}  \label{job_done},
\end{align}
for sufficiently large $\Delta$.

Combining~\eqref{work_work},~\eqref{stilll} and~\eqref{job_done}, we are able to show the following proposition.
\begin{proposition}\label{propopain} 
For every color $\gamma \ne c$:
\begin{align}\label{street}
\Pr[ \gamma \not \in L_u(i+1) \mid c \in L_u(i+1), F_u^{\gamma}, \overline{B_c}  ]  \le  1- \left(1 - 2 \left(  \frac{ \alpha}{L_i } \right)^{k-1} \right)   \mathrm{Keep}_i + 2 \sum_{\ell=1 }^{k-1} \frac{T_{i,\ell} }{L_i^{2 \ell } (\ln \Delta)^{2 \ell}}.
\end{align}
\end{proposition}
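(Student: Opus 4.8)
The plan is to evaluate the conditional probability directly from identity~\eqref{work_work}, which already writes it as $1-\prod_{\ell=1}^{k-1}\prod_{g\in D_{i,\ell}(u,\gamma)}\bigl(1-p_{g,\ell}\bigr)$ with $p_{g,\ell}:=\Pr[\cap_{w\in(g\setminus\{u\})\cap V_i}F_w^{\gamma}\mid c\in L_u(i+1),\overline{B_c}]$, and then to lower bound this product. Since $Q(i)$ holds, $|D_{i,\ell}(u,\gamma)|=t_{i,\ell}(u,\gamma)=T_{i,\ell}$ for every $\ell\in[k-1]$. Into every factor with $g\ne h$ (all of them when $\ell<k-1$, and all but at most one when $\ell=k-1$) I substitute the bound $p_{g,\ell}\le(\alpha/L_i)^{\ell}+L_i^{-2\ell}(\ln\Delta)^{-2\ell}$ from~\eqref{stilll}, and into the single possible exceptional factor $g=h$ (which, as the excerpt observes, can arise only when $\ell=k-1$) I substitute $p_{h,k-1}\le 2(\alpha/L_i)^{k-1}$ from~\eqref{job_done}. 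Writing $a_\ell:=(\alpha/L_i)^{\ell}$ and $\varepsilon_\ell:=L_i^{-2\ell}(\ln\Delta)^{-2\ell}$, and using that inflating the exponent of the ordinary $\ell=k-1$ factors from $T_{i,k-1}-1$ to $T_{i,k-1}$ only shrinks the product (each such factor lying in $(0,1)$), this gives
\[
\prod_{\ell=1}^{k-1}\prod_{g\in D_{i,\ell}(u,\gamma)}(1-p_{g,\ell})\ \ge\ (1-2a_{k-1})\prod_{\ell=1}^{k-1}\bigl(1-a_\ell-\varepsilon_\ell\bigr)^{T_{i,\ell}}.
\]

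Next I would strip off the $\varepsilon_\ell$. Since $a_\ell\to 0$ we have $\varepsilon_\ell/(1-a_\ell)\le 2\varepsilon_\ell$ for large $\Delta$, whence $(1-a_\ell-\varepsilon_\ell)^{T_{i,\ell}}\ge(1-a_\ell)^{T_{i,\ell}}(1-2\varepsilon_\ell)^{T_{i,\ell}}\ge(1-a_\ell)^{T_{i,\ell}}(1-2T_{i,\ell}\varepsilon_\ell)$ by Bernoulli's inequality. Multiplying over $\ell$, using $\prod_\ell(1-x_\ell)\ge 1-\sum_\ell x_\ell$ together with the definition~\eqref{Keep_def} of $\mathrm{Keep}_i$ and the definition $S_i=\sum_{\ell=1}^{k-1}T_{i,\ell}/(L_i^{2\ell}(\ln\Delta)^{2\ell})$, I obtain $\prod_{\ell}(1-a_\ell-\varepsilon_\ell)^{T_{i,\ell}}\ge\mathrm{Keep}_i(1-2S_i)$, and therefore, expanding once more, $\prod_{\ell,g}(1-p_{g,\ell})\ge\mathrm{Keep}_i\bigl(1-2a_{k-1}-2S_i\bigr)$. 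Substituting back into~\eqref{work_work} yields
\[
\Pr[\gamma\notin L_u(i+1)\mid c\in L_u(i+1),F_u^{\gamma},\overline{B_c}]\ \le\ 1-\mathrm{Keep}_i+\mathrm{Keep}_i\bigl(2a_{k-1}+2S_i\bigr).
\]

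The final step is to absorb the stray term $\mathrm{Keep}_i\cdot 2a_{k-1}$ into $2S_i$, and this is the only place the standing quantitative hypotheses enter; it is also the step needing the most care. It suffices to show $a_{k-1}\le S_i(1-\mathrm{Keep}_i)$, for then $\mathrm{Keep}_i(2a_{k-1}+2S_i)\le 2a_{k-1}+2\mathrm{Keep}_iS_i\le 2S_i$. By Corollary~\ref{bounding_keep}, $1-\mathrm{Keep}_i\ge K^{k-1}/(12k^2(\ln\Delta)^{k-1})$; since $k\ge 3$ (so $L_i^{k-1}\ge L_i^2$) we have $a_{k-1}=(K/(L_i\ln\Delta))^{k-1}\le K^{k-1}/(L_i^{2}(\ln\Delta)^{k-1})$; and since $T_{i,1}\ge(\ln\Delta)^{20(k-1)}$ we have $S_i\ge T_{i,1}/(L_i^2(\ln\Delta)^2)\ge(\ln\Delta)^{20(k-1)-2}/L_i^2$. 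Combining these three estimates gives $a_{k-1}/\bigl(S_i(1-\mathrm{Keep}_i)\bigr)\le 12k^2/(\ln\Delta)^{20(k-1)-2}$, which is at most $1$ for $\Delta$ large enough, as desired. The main obstacle, then, is precisely the bookkeeping around the exceptional edge $g=h$: recognizing it can occur only for $j=\ell=k-1$, carrying the slightly weaker bound~\eqref{job_done} through the product without spoiling the extraction of $\mathrm{Keep}_i$, and verifying that its net effect is of strictly lower order than the $2S_i$ error term under the hypotheses $L_i,T_{i,r}\ge(\ln\Delta)^{20(k-1)}$.
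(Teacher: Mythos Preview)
Your argument tracks the paper's proof closely: both start from~\eqref{work_work}, feed in the bounds~\eqref{stilll} and~\eqref{job_done}, extract $\mathrm{Keep}_i$ from the product, and absorb the residual $2(\alpha/L_i)^{k-1}$ coming from the exceptional edge $g=h$ into the $S_i$ error. The bookkeeping around the exceptional edge is handled correctly.

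There is, however, one genuine slip in your final absorption step. You invoke Corollary~\ref{bounding_keep} to obtain $1-\mathrm{Keep}_i\ge K^{k-1}/(12k^2(\ln\Delta)^{k-1})$, but that upper bound on $\mathrm{Keep}_i$ requires the extra hypothesis $R_{i,k-1}\ge 1/(10k^2)$, which is \emph{not} among the standing hypotheses of Lemma~\ref{expectations_lemma} (where this proposition lives). The paper is careful about exactly this point: elsewhere in the same proof it explicitly notes that only the \emph{lower} bound on $\mathrm{Keep}_i$ from Corollary~\ref{bounding_keep} is being used, since that part needs only the assumptions of Lemma~\ref{bounding_keep_lemma}.

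The reason you were forced into this is that you spent the full budget of $2S_i$ already in step~4, leaving nothing to absorb $2\mathrm{Keep}_i a_{k-1}$ except the gap $2S_i(1-\mathrm{Keep}_i)$. The paper avoids the issue by being slightly less wasteful with constants: it obtains $\prod_\ell(1-a_\ell-\varepsilon_\ell)^{T_{i,\ell}}\ge \mathrm{Keep}_i-\tfrac{19}{10}S_i$ (your own derivation actually supports any constant strictly bigger than $1$, since $\varepsilon_\ell/(1-a_\ell)=(1+o(1))\varepsilon_\ell$), and then only needs $2\mathrm{Keep}_i a_{k-1}\le \tfrac{1}{10}S_i$. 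This follows from $a_{k-1}=o(S_i)$ alone, using $k\ge 3$ and $T_{i,1}\ge(\ln\Delta)^{20(k-1)}$ --- exactly your own estimates, but without any lower bound on $1-\mathrm{Keep}_i$. So the fix is trivial: replace the constant $2$ in your step~4 by, say, $1.9$, and drop the appeal to the upper bound of Corollary~\ref{bounding_keep}.
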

\begin{proof}
 Towards proving~\eqref{street}, a helpful observation is the following.
\begin{align}
\prod_{\ell= 1}^{ k-1}   \left( 1- \left( \frac{ \alpha }{L_i }\right)^{\ell } - \frac{ 1}{ L_i^{2 \ell} (\ln \Delta)^{2 \ell}  }   \right)^{T_{i,\ell} (u,\gamma) }   &= \prod_{\ell=1}^{k-1} \left( 1-  \left(\frac{\alpha}{L_i}\right)^{\ell} \right)^{T_{i,\ell}(u,\gamma) } \nonumber \\
																										&	\times  \left(1 -  \frac{1 }{L_i^{2\ell}  (\ln \Delta)^{2\ell}  \cdot ( 1-  ( \frac{  \alpha }{ L_i} )^{\ell} ) }  \right)^{T_{i,\ell}(u,\gamma) }  \nonumber \\
& \ge  \mathrm{Keep}_i \cdot  \mathrm{exp} \left( -  \sum_{ \ell = 1}^{k-1}  \frac{ T_{i,\ell} (u,\gamma) }{   L_i^{2\ell}  (\ln \Delta)^{2\ell}  \cdot ( 1-  ( \frac{  \alpha }{ L_i} )^{\ell} )    -1   }  	\right)  \label{pame_ligo_talep} \\
& \ge   \mathrm{Keep}_i \left( 1 -    \sum_{\ell=1 }^{k-1} \frac{T_{i,\ell} }{L_i^{2 \ell } (\ln \Delta)^{2 \ell}    \cdot ( 1-  ( \frac{  \alpha }{ L_i} )^{\ell} )    -1    } \right)  \nonumber \\
& \ge   \mathrm{Keep}_i \left( 1 -  \frac{3}{2} \cdot \sum_{\ell=1 }^{k-1} \frac{T_{i,\ell} }{L_i^{2 \ell } (\ln \Delta)^{2 \ell}     -  L_i^{\ell}  (\ln \Delta)^{\ell}       } \right)   \nonumber  \\
& \ge   \mathrm{Keep}_i  -  \frac{19}{10} \sum_{\ell=1 }^{k-1} \frac{T_{i,\ell} }{L_i^{2 \ell } (\ln \Delta)^{2 \ell}       } \label{I_like}.
\end{align}
for sufficiently large $\Delta$. Note that in~\eqref{pame_ligo_talep} we used the fact that $1- \frac{1}{x} \ge \mathrm{e}^{-\frac{1}{x-1} } $ for any $x \ge 2$.

Using~\eqref{stilll},~\eqref{job_done} and~\eqref{I_like}, we have:
\begin{align}
&\prod_{\ell =1}^{k-1} \prod_{ g \in D_{i,\ell}(u,\gamma)}   \left( 1 - \Pr[ \cap_{w \in (g \setminus \{u \}) \cap V_i } F_w^{\gamma} \mid c \in L_u(i+1),  \overline{B_c}  ]  \right) \ge   \nonumber \\
& \ge  \prod_{\ell= 1}^{ k-1}   \left( 1- \left( \frac{ \alpha }{L_i }\right)^{\ell } - \frac{ 1}{ L_i^{2 \ell} (\ln \Delta)^{2 \ell}  }   \right)^{T_{i,\ell} (u,\gamma) }  \left(  1- 2 \left( \frac{ \alpha }{L_i } \right)^{ k-1}   \right)          \label{sweet_home} \\
& \ge  \left( \mathrm{Keep}_i  -  \frac{19}{10} \sum_{\ell=1 }^{k-1} \frac{T_{i,\ell} }{L_i^{2 \ell } (\ln \Delta)^{2 \ell}       } \right)  \left(  1- 2 \left( \frac{ \alpha }{L_i } \right)^{ k-1}  \right) \label{lucky_to_be} \\
& = \mathrm{Keep}_i \left(1 - 2 \left(  \frac{ \alpha}{L_i } \right)^{k-1} \right)  - \frac{19}{10} \sum_{\ell=1 }^{k-1} \frac{T_{i,\ell} }{L_i^{2 \ell } (\ln \Delta)^{2 \ell}       } \left(1 - 2 \left(  \frac{ \alpha}{L_i } \right)^{k-1} \right)  \\
& \ge  \mathrm{Keep}_i \left(1 - 2 \left(  \frac{ \alpha}{L_i } \right)^{k-1} \right)   - 2 \sum_{\ell=1 }^{k-1} \frac{T_{i,\ell} }{L_i^{2 \ell } (\ln \Delta)^{2 \ell}}  ,  \label{greatest_city}
\end{align}
for sufficiently large $\Delta$.  Note that in deriving~\eqref{sweet_home} we used our previous observation that~\eqref{job_done} applies only when $h = g$, and this can potentially happen only when $\ell = j = k-1$. 

Combining~\eqref{work_work}  with~\eqref{greatest_city} concludes the proof.

\end{proof}

Overall, combining~\eqref{first_casara},~\eqref{just_you_wait} and Proposition~\ref{propopain}, recalling that $S_i = \sum_{\ell=1 }^{k-1} \frac{T_{i,\ell} }{L_i^{2 \ell } (\ln \Delta)^{2 \ell} } $ and letting $\psi = 1 - 2 \left(  \frac{ \alpha}{L_i } \right)^{k-1}  $,  we see that $\Pr[E_{u,1} \mid \overline{B_c}] $ is at most
\begin{align*}
 &(1- \alpha )\mathrm{Keep}_i + 2   (L_i \ln \Delta)^{-j}   + \alpha  \frac{L_i -1 }{ L_i } ( \mathrm{Keep}_i + 2(L_i  \ln \Delta) ^{-j} ) \left(1 - \psi   \mathrm{Keep}_i   +   2S_i  \right)  \\
 &\le (1- \alpha )\mathrm{Keep}_i + 2   (L_i \ln \Delta)^{-j}  + ( \alpha \mathrm{Keep}_i +  2 \alpha (L_i  \ln \Delta) ^{-j}) \left(1 - \psi  \mathrm{Keep}_i   +   2S_i  \right)  \\
  &= (1- \alpha )\mathrm{Keep}_i +  \alpha \mathrm{Keep}_i   - \alpha \psi \mathrm{Keep}_i^2 + 2 \alpha \mathrm{Keep}_i S_i  + 2   (L_i \ln \Delta)^{-j}  \left( 1 + \alpha  \left(1 - \psi  \mathrm{Keep}_i   +   2S_i  \right)  \right)  \\
  & = \mathrm{Keep}_i( 1- \alpha \mathrm{Keep}_i)  +  \frac{2\alpha^{k}}{L_i^{k-1} } \cdot \mathrm{Keep}_i^2 +    2 a \mathrm{Keep}_i S_i +  2(L_i \ln \Delta)^{-j}  \left(  1    + \alpha \left( 1 - \psi \mathrm{Keep}_i   +   2S_i  \right) \right)  \\
  & \le \mathrm{Keep}_i( 1- \alpha \mathrm{Keep}_i)  + 2 \alpha S_i +  \frac{2\alpha^{k}}{L_i^{k-1} } +  2(L_i \ln \Delta)^{-j}  \left(  1    + \alpha \left( 1  +   2S_i  \right) \right)  \\
  & = \mathrm{Keep}_i( 1- \alpha \mathrm{Keep}_i)  + 2 \alpha S_i +  \frac{2\alpha^{k}}{L_i^{k-1} } +  (L_i \ln \Delta)^{-j}  \left(  2    + 2\alpha  +   4\alpha S_i   \right)  \\
  & = \mathrm{Keep}_i( 1- \alpha \mathrm{Keep}_i)  + 2 \alpha S_i  +  (L_i \ln \Delta)^{-j}  \left(  2    + 2\alpha  +    4\alpha S_i +   (\ln \Delta)^{j} \cdot \frac{2\alpha^{k}}{L_i^{k-1 -j} }  \right)  \\
  & = \mathrm{Keep}_i( 1- \alpha \mathrm{Keep}_i)  + 2 \alpha S_i  +  (L_i \ln \Delta)^{-j}  \left(  2   +    4\alpha S_i +   O\left(\frac{1}{\ln \Delta} \right)  \right)  \\
& \le\, \mathrm{Keep}_i( 1- \alpha \mathrm{Keep}_i)  +  \left( 2\alpha S_i +  (L_i \ln \Delta)^{-j} ( 3 + 4 \alpha S_i ) \right).
\end{align*}
for sufficiently large $\Delta$,  since $\mathrm{Keep}_i  < 1$ and $j \le k-1$.
This concludes the proof of~\eqref{E1_bound}.

Finally, combining~\eqref{three_two_second},~\eqref{E2_bound} and ~\eqref{E1_bound} we obtain
\begin{eqnarray}
\Pr[ h  \in D_{i+1,r}(v,c) ] &\le &     { j \choose  r } q_1^r   q_2  ^{j-r} \left( 1 + \delta_1 q_1^{-1} \right)^{r } \left( 1 + \delta_2 q_2^{-1}\right)^{j-r}     + \frac{2^k}{ L_i^j} \nonumber \\
				      & \le & { j \choose r} q_1^r q_2^{j-r} \left( 1 + 2r \delta_1 q_1^{-1} \right) \left(  1 +  2 (j-r) \delta_2 q_2^{-1} \right) + \frac{2^k}{ L_i^j}  \label{number_one} \\
				      & \le& { j\choose r} q_1^r q_2^{j-r}   + 		2 r {j \choose r} q_1^{ r-1} q_2^{j-r} \delta_1  			 +O\left(  \frac{1}{ L_i^j } \right) \label{number_two} \\
				      &\le & { j\choose r} q_1^r q_2^{j-r}   + 		4 r {j \choose r} \frac{ \mathrm{Keep}_i^{j-1} \alpha^{ j-r+1} S_i }{ L_i^{j-r} } 			 +O\left(  \frac{1}{ L_i ^j } \right),
\end{eqnarray}
concluding the proof of~\eqref{tornado_warning}, which was our goal. Note that in~\eqref{number_one} we used that $\mathrm{Keep}_i$ is bounded below by a constant according to   Corollary~\ref{bounding_keep} (the lower bound  only requires the assumptions of Lemma~\ref{bounding_keep_lemma}) and that $\delta_1 q_1^{-1} $, 
$\delta_2 q_2^{-1} $ are sufficiently small for large enough $\Delta$. In~\eqref{number_two} we used the fact that $\delta_2 q_2^{-1} = O(1/L_i^j)$ for $j \ge 2$. (We only care about $j \ge 2$ since if $j= 1$ then $r =1$ as well and, therefore, $j-r = 0$.)
\end{proof}

\subsection{Proof of Lemma~\ref{concentration_lemma}}

Let $\mathrm{Bin}(n,p)$ denote the binomial random variable that counts the number of successes in $n$ Bernoulli trials,  where each trial succeeds with probability $p$. We will find the following lemma useful (see, e.g., Exercise 2.12 in~\cite{mike_book}) :

\begin{lemma}\label{binomial_bound} 
For any $\gamma, \kappa ,n > 0$ we have 
\begin{align*}
\Pr\left[ \mathrm{Bin}\left(n,\frac{\gamma}{n} \right) \ge \kappa  \right] \le \frac{\gamma^{\kappa} }{ \kappa! }.  
\end{align*}
\end{lemma}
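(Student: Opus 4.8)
The plan is to bound the tail probability by the $k$-th \emph{factorial moment} of the binomial, which is just a clean packaging of the union bound over $k$-element sets of successful trials. Write $X=\mathrm{Bin}(n,p)$ with $p=c/n$ as $X=\sum_{i=1}^{n}X_i$, a sum of independent indicator variables with $\Pr[X_i=1]=p$. If $k>n$ the left-hand side is $0$ and there is nothing to prove, so assume $k\le n$; we may also assume $c\ge 0$.

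First I would note the pointwise inequality $\binom{X}{k}\ge \mathbf{1}[X\ge k]$ (it equals $0$ when $X<k$ and is at least $1$ when $X\ge k$), so Markov's inequality gives $\Pr[X\ge k]\le \ex\!\big[\binom{X}{k}\big]$. Next, $\binom{X}{k}$ counts the $k$-element subsets $S\subseteq[n]$ with $X_i=1$ for every $i\in S$, so by linearity of expectation and independence of the trials
\[
\ex\!\left[\binom{X}{k}\right]=\sum_{\substack{S\subseteq[n]\\ |S|=k}}\Pr\!\Big[\textstyle\bigwedge_{i\in S}X_i=1\Big]=\binom{n}{k}p^{\,k}.
\]
Finally I would apply the elementary estimate $\binom{n}{k}=\frac{n(n-1)\cdots(n-k+1)}{k!}\le \frac{n^k}{k!}$, obtaining
\[
\Pr[X\ge k]\le \binom{n}{k}\Big(\frac{c}{n}\Big)^{k}\le \frac{n^k}{k!}\cdot\frac{c^k}{n^k}=\frac{c^k}{k!},
\]
which is the claimed bound.

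There is essentially no obstacle: this is a textbook first-moment estimate, and the middle step can equivalently be written as the direct union bound $\Pr[X\ge k]\le\sum_{|S|=k}\Pr[\bigwedge_{i\in S}X_i=1]$ without invoking factorial moments at all. The only items deserving a sentence in the write-up are the trivial case $k>n$ and the observation that the inequality only carries information when $c^k<k!$.
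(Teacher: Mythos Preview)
Your proof is correct and is the standard factorial-moment/union-bound argument. The paper does not prove this lemma at all; it simply cites Exercise~2.12 in Molloy--Reed, so there is no proof in the paper to compare against, and your argument is exactly the intended textbook one.
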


\begin{proof}[Proof of Part~\eqref{con_part_a}]

We will use Theorem~\ref{talagrand_inequality} to show that that the number of colors,  $\overline{\ell_v}$, which are removed from $L_v$ during iteration $i$ is highly concentrated. 

To that end, at first notice that our assumption that property $Q(i)$ holds and  part (a) of Lemma~\ref{expectations_lemma} imply
\begin{align*}
\ex[ \overline{\ell_v}] =  (1- \mathrm{Keep}_i) L_i \ge   (\ln \Delta)^{19(k-1)},
\end{align*}
for sufficiently large $\Delta$. The lower bound follows because we have assumed that $L_i \ge (\ln \Delta)^{20(k-1)}$ and, according to Corollary~\ref{bounding_keep}, $\mathrm{Keep}_i = \Omega(1)$.

Note now that changing the assignment (color or activation bit in $\sigma_i$) to any neighboring vertex of $v$ can change $\overline{\ell_v}$ by at most $1$, and changing the assignment to any other vertex cannot affect $\overline{\ell_v}$ at all.  Further, if $\overline{\ell_v} \ge s $, there are at most  $s$ groups of at most $k-1$ neighbors of $v$,  so that each vertex in each group received the same color, and each group corresponds to a different color from $L_v$. Thus, the color assignments and activation choices of these vertices certify that $\overline{\ell_v} \ge s$.

Given the above, we apply Theorem~\ref{talagrand_inequality} with $t = \ex[ \overline{\ell_v} ] ^{ \frac{1.9}{3 }  }$, $w = 2k$, $\gamma=1$,  to obtain
\begin{align*}
\Pr\left[ | \overline{\ell_v}  - \ex[  \overline{ \ell_v }]   | > L_i^{2/3}  \right] \le \Pr\left[ | \overline{\ell_v}  - \ex[  \overline{ \ell_v }]   | > t + 60 \gamma \sqrt{w \ex[ \overline{\ell(v)}]  }  \right]   \le 4\mathrm{e}^{- \frac{ (\ex[ \overline{\ell_v} ])^{ 0.8 /3 } }{ 8 \gamma^2 w} }  \le \Delta^{-\ln \Delta},
\end{align*}
for sufficiently large $\Delta$. 

Finally,  the fact that $\ex[ \ell_{i+1}(v)  ] = \ell_i(v) - \ex[ \overline{\ell_v}  ]  $ implies that
\begin{align*}
\Pr\left[ | \ell_{i+1}(v) - \ex[  \ell_{i+1}(v)]  | > L_i^{2/3}  \right]   = \Pr\left[ | \overline{ \ell_v} - \ex[ \overline{ \ell_v}  ]   | > L_i^{2/3}  \right] < \Delta^{ - \ln \Delta}. 
\end{align*}

\end{proof}

\begin{proof}[Proof of Part~\eqref{con_part_b}]
For the purposes of our analysis, we start by expressing $t_{i+1,r}'(v,c)$ as a difference of two random variables, $X_{i+1,r}(v,c)$ and $Y_{i+1,r}(v,c)$.

Recall the definition of $D_{i,r}(v,c)$ and let $Z_{i,r}(v,c) = \bigcup_{j = r}^{k-1} D_{i,j}(v,c)$. Let $X_{i+1,r}(v,c)$ denote the number of hyperedges  in $Z_{i,r}(v,c)$ which, in the end of the $i$-th iteration: (i) contain exactly $r$ uncolored vertices  other than $v$; and (ii) the rest of their vertices (excluding $v$) are colored  $c$ (either because they were already  colored from previous iterations, or because they were assigned color $c$ during iteration $i$ and retained it).  Define also $Y_{i+1,r}(v,c)$ as the number of these hyperedges containing an uncolored vertex $u \neq v$ such that: (i) $c$ is unavailable to $u$ at step $i+1$ ($c \notin L_u(i+1)$); (ii) the unavailability of $c$ for $u$ is independent of the color of $v$ in $\sigma_i$. In essence, $v$ is not the ``critical'' vertex responsible for $c \notin L_u(i+1)$. This ensures that $Y_{i+1,r}(v,c)$ only counts hyperedges where other vertices cause the color conflict, thereby keeping it distinct from the count for $t_{i+1,r}'(v,c)$.

%

By definition we have $t_{i+1,r}'(v,c) = X_{i+1,r}(v,c) - Y_{i+1,r}(v,c)$. Therefore, by the linearity of expectation, it suffices to show that $X_{i+1,r}(v,c)$ and $Y_{i+1,r}(v,c)$ are both sufficiently concentrated. This is because
\begin{align*}
& \Pr\left[  t_{i+1,r}'(v,c) - \ex[t_{i+1}'(v,c)]  > \frac{1}{2}\left( \sum_{j=r}^{k-1}  { j \choose r } \alpha^{j-r} \frac{T_{i,j}}{L_i^{j-r} } \right)^{2/3} \right],    \nonumber \\
=& \Pr\left[  X_{i+1,r}(v,c) - \ex[X_{i+1}(v,c)]   - \left( Y_{i+1,r}(v,c) - \ex[ Y_{i+1,r}(v,c)  ]    \right)  > \frac{1}{2} \left( \sum_{j=r}^{k-1}  { j \choose r } \alpha^{j-r} \frac{T_{i,j}}{L_i^{j-r} } \right)^{2/3}  \right],
\end{align*}
and, therefore, it is sufficient to prove that
\begin{eqnarray}
 \Pr\left[     X_{i+1,r}(v,c)   - \ex[X_{i+1,r}(v,c)]    >   \frac{1}{4} \left( \sum_{j=r}^{k-1}  { j \choose r }  \alpha^{j-r} \frac{T_{i,j}}{L_i^{j-r} } \right)^{2/3} \right] & \le & \frac{1}{ 2} \Delta^{- \ln \Delta },  \label{shuttle}    \\
  \Pr\left[ Y_{i+1,r}(v,c) - \ex[ Y_{i+1,r}(v,c)  ]    <  -   \frac{1}{4} \left( \sum_{j=r}^{k-1}  { j \choose r } \alpha^{j-r} \frac{T_{i,j}}{L_i^{j-r} } \right)^{2/3}    \right]  &\le &  \frac{1}{ 2} \Delta^{- \ln \Delta } \label{shuttle2}.
\end{eqnarray}

\paragraph{Proof of~\eqref{shuttle}.}
We start by observing that the value of $X_{i+1,r}(v,c)$ is highly sensitive to the color assignment of $v$ in $\sigma_i$; for instance, it may drop to zero if all uncolored vertices in they hyperedges of $Z_{i,r}(v,c)$ are activated and assigned color $c$ in $\sigma_i$, $v$ is also activated in $\sigma_i$, and we change the color of $v$ to $c$. This sensitivity precludes a direct application of Talagrand’s inequality. Consequently, our strategy is to isolate the specific influence of $v$'s color.

To that end, let $W_{i+1,r}^1(v,c)$ denote the number of hyperedges contributing to $X_{i+1,r}(v,c)$ such that every uncolored vertex $u \neq v$ satisfies one of two conditions: (i) $u$ was deactivated in $\sigma_i$; (ii) $u$ remains uncolored due to the assignments (colors and activation bits in $\sigma_i$) of vertices in hyperedges not containing $v$. By definition, modifying the color assignment to $v$ does not affect the value of $W_{i+1,r}^1(v,c)$. Let also $W_{i+1,r}^2(v,c)$ be the number of hyperedges where exactly $r$ vertices (excluding $v$) are activated and assigned the same color as $v$ in $\sigma_i$, while the remaining $k-1-r$ vertices (excluding $v$) are either activated and assigned color $c$ in $\sigma_i$, or were already colored $c$ in a previous iteration. It follows that $X_{i+1,r}(v,c) \le W_{i+1,r}^1(v,c) + W_{i+1,r}^2(v,c)$, as $W_{i+1,r}^2(v,c)$ accounts for all hyperedges in $X_{i+1,r}(v,c)$ not already covered by $W_{i+1,r}^1(v,c)$, potentially including additional hyperedges. (For example, $W_{i+1,r}^2(v,c)$ captures the hyperedges in $D_{i,r}(v,c)$ that were counted by $X_{i+1,r}(v,c)$ because all $r$ of their uncolored vertices (other than $v$) were activated and assigned the same color as $v$ under $\sigma_i$.)  Moreover, as we will see, $W_{i+1,r}^1(v,c)$ and $W_{i+1,r}^2(v,c)$ are amenable to concentration arguments.

First, we consider $W_{i+1,r}^1(v,c)$.  Since the hypergraph is free of 3-cycles, changing the color or activation bit in $\sigma_i$ for a vertex in hyperedge $h \in Z_{i,r}(v,c)$ only influences the uncolored status of vertices within $h$. Specifically, such a change cannot affect any other hyperedge in $Z_{i,r}(v,c)$; consequently, $W_{i+1,r}^1(v,c)$ changes by at most 1. Since the hypergraph also lacks 4-cycles, changing the color or activation of any vertex not in a hyperedge in $Z_{i,r}(v,c)$ can affect only one vertex in at most one hyperedge of $Z_{i,r}(v,c)$. Therefore, $W_{i+1,r}^1(v,c)$ remains stable within a margin of 1.
 
We claim now that if $W_{i+1,r}^{1}(v,c) \ge s$, then there exist at most $ 2k^2 s$ random choices that certify this event. To see this, consider a hyperedge $h$ counted by $W_{i+1,r}^1(v,c)$. For every vertex $u \in h \setminus \{v\}$ that failed to retain its color, one of the following must have occurred in $\sigma_i$: $u$ was deactivated; or $u$ is contained in a hyperedge $h' \ne h$ such that all vertices in $(h' \setminus \{u\}) \cap V_{i}$ were activated and received the same color as $u$. Observe also that the event where a vertex $u \in h \setminus \{v \}$ is activated and assigned color $c$ in $\sigma_i$ is determined by the outcome of two random choices. Consequently, we can certify that $h$ contributes to $W_{i+1,r}^1(v,c)$ using at most $2k^2$ random choices.

Observe now that 
\begin{align*}
 (\ln \Delta)^{19(k-1)}  \le  ( 1- \alpha)^r T_{i,r} \le \ex[W_{i+1,r}^{1}(v,c) ] \le \sum_{j=r}^{k-1}  { j \choose r }  \alpha^{j-r} \frac{T_{i,j}}{L_i^{j-r} },
\end{align*}
for sufficiently large $\Delta$, since  $T_{i,r } \ge (\ln \Delta)^{20(k-1) }$ according to the hypothesis of Lemma~\ref{concentration_lemma}.  Let  also 
\begin{align*}
\Delta_{W^1} :=  \left| W_{i+1,r}^{1}(v,c) -  \ex[ W_{i+1,r}^{1}(v,c) \right|.
\end{align*}
 Applying Theorem~\ref{talagrand_inequality} with $\gamma =1 $, $w = 2k^2$ and $t = \left(\ex[W_{i+1,r}^1(v,c)]\right)^{1.9/3}$, and for sufficiently large $\Delta$,  we obtain  
\begin{align}
\Pr\left[  \Delta_{W^1} > \frac{1}{8} \left( \sum_{j=r}^{k-1}  { j \choose r } \alpha^{j-r} \frac{T_{i,j}}{L_i^{j-r} } \right)^{2/3} \right] & \le \Pr\left[  \Delta_{W^1} > t + 60 \gamma \sqrt{w \ex[ W_{i+1,r}^1(v,c) ]  }  \right]  \nonumber \\
																														         & \le  4 \mathrm{e}^{ -  \frac{ (\ex[W_{i+1,r}(v,c) ])^{0.8/3 } }{8 \gamma^2 w }  }  \le \frac{ 1}{ 4} \Delta^{ -\ln  \Delta } \label{X1}.
\end{align}

As far as $W_{i+1,r}^2(v,c)$ is concerned, note that it is distributed as $\sum_{j = r}^{k-1} \mathrm{Bin}( T_{i,j}  ,  \frac{ \alpha^j} {L_i^j} )$, since Lemma~\ref{concentration_lemma} assumes that property $Q(i)$ holds (and therefore $|D_{i,j}(v,c)| = T_{i,j}$ fore every $j$ and $\ell_i(u) = L_i$ for every vertex $u$).  Recalling Lemma~\ref{bounding_keep_lemma}, we see that for every $j \ge r$
\begin{align*}
  \frac{ \alpha^j} {L_i^j} = \alpha^{j} R_{i,j}  \cdot \frac{1}{ T_{i,j}} \le       \alpha^{j} k^{2(k-1-j) }  \ln \Delta \cdot \frac{1}{T_{i,j} } \le \frac{\lambda}{ T_{i,j} } ,
\end{align*}
for some constant $\lambda$. Therefore, applying Lemma~\ref{binomial_bound}  with  $\kappa := \left\lfloor \frac{1}{16(k-1)} \left( \sum_{j=r}^{k-1} { j \choose r } \alpha^{j-r} \frac{T_{i,j}}{L_i^{j-r}} \right)^{2/3 }  \right\rfloor   $  we get that
\begin{align}\label{conc_talaip}
\Pr\left[  \mathrm{Bin}\left(T_{i,j}, \frac{ \alpha^j}{ L_i^j}  \right) \ge    \kappa    \right] & = \Pr\left[  \mathrm{Bin}\left(T_{i,j}, \frac{ \alpha^j R_{i,j} }{ T_{i,j}}  \right) \ge  \kappa \ \right]  \le  \frac{ \lambda^ {   \kappa  } }{   \kappa  ! }  \le \frac{1 }{k-1}  \cdot \frac{1}{4} \Delta^{- \ln \Delta },														
\end{align}
for sufficiently large $\Delta$, since  $\kappa =  \Omega(T_{i,r}^{2/3} ) =  \Omega\left( (\ln \Delta)^{\frac{40}{3}(k-1)}  \right)$, according to the hypothesis of Lemma~\ref{concentration_lemma}.
Thus, letting $E_j$ denote the event that  $\mathrm{Bin}\left(T_{i,j}, \frac{ \alpha^j}{ L_i^j}  \right) \ge    \kappa$, we obtain:
\begin{align}
&\Pr \left[   W_{i+1,r}^2 (v,c)  \ge \frac{1 }{8}     \left(\sum_{j=r}^{k-1} { j \choose r } \alpha^{j-r} \frac{T_{i,j}}{L_i^{j-r}} \right)^{2/3} \right]  \nonumber  \\
=& \Pr\left[ \sum_{j=r}^{k-1}  \mathrm{Bin} \left(T_{i,j}, \frac{ \alpha^j}{ L_i^j}  \right) \ge   \frac{1 }{4}   \left(   \sum_{j=r}^{k-1} { j \choose r } \alpha^{j-r} \frac{T_{i,j}}{L_i^{j-r}} \right)^{2/3}   \right]  \nonumber \\
																							 <& \Pr\left[ \bigcup_{j=r}^{k-1}   E_j  \right]  \le   \sum_{j=r}^{k-1} \Pr[ E_j]  \le \frac{1}{4} \Delta^{ - \ln \Delta} \label{X2}.
\end{align}
Note that for the strict inequality we used the fact that if none of the events $E_j$ occur then, by definition, $\sum_{j=r}^{k-1} \mathrm{Bin} \left(T_{i,j}, \frac{ \alpha^j}{ L_i^j}  \right) \le (k-1) \kappa < \frac{1 }{8}   \left(   \sum_{j=r}^{k-1} { j \choose r } \alpha^{j-r} \frac{T_{i,j}}{L_i^{j-r}} \right)^{2/3}  $.

Finally, we claim that the fact that $ W_{i+1}^1(v,c) \le X_{i+1,r}(v,c) \le W_{i+1}^1(v,c) + W_{i+1}^2(v,c) $ together with~\eqref{X1} and~\eqref{X2} implies~\eqref{shuttle}. To see this, let $\beta :=  \frac{1}{4} \left( \sum_{j=r}^{k-1}  { j \choose r }  \alpha^{j-r} \frac{T_{i,j}}{L_i^{j-r} } \right)^{2/3} $, and notice that
\begin{align}
&\Pr\left[ X_{i+1,r}(v,c) - \ex[ X_{i+1,r}(v,c)]  \ge  \beta  \right]   \nonumber \\
 \le & \Pr\left[ W_{i+1}^1(v,c) + W_{i+1}^2(v,c)  \ge \ex[ X_{i+1,r}(v,c)]  +   \beta  \right]  \nonumber  \\
											 \le & \Pr\left[ W_{i+1}^1(v,c) + W_{i+1}^2(v,c)  \ge \ex[ W_{i+1,r}^1(v,c)]  +   \beta  \right]   \nonumber \\
											 \le & \Pr[W_{i+1}^1(v,c)  \ge \ex[ W_{i+1,r}^1(v,c)]  + \beta/2  ] + \Pr[  W_{i+1}^2(v,c)  \ge \beta /2 ]  \nonumber \\
											\le & 2 \cdot \frac{1}{4} \Delta^{-\ln \Delta} = \frac{1 }{2} \Delta^{- \ln \Delta} \label{alm_there_conc_1},
\end{align}
where in the last inequality we used~\eqref{X1} and~\eqref{X2}, concluding the proof of~\eqref{shuttle}.
 
%

\paragraph{Proof of~\eqref{shuttle2}.} We apply a similar approach to analyze $Y_{i+1,r}(v,c)$. Let $U_{i+1,r}^1(v,c)$ denote the number of hyperedges contributing to $Y_{i+1,r}(v,c)$ in which every uncolored vertex $v' \ne v$ satisfies one of two conditions: (i) $v'$ was deactivated in $\sigma_i$, or (ii) $v'$ remains uncolored determined solely by the assignments (colors and activation bits) of vertices in hyperedges disjoint from $v$. Put differently, $U_{i+1,r}^1(v,c)$ counts the hyperedges in the intersection of the sets counted by $W_{i+1,r}^1(v,c)$ and $Y_{i+1,r}(v,c)$. Observe that the value of $U_{i+1,r}^1(v,c)$ is invariant under the color assignment of $v$. This holds because the definition of $Y_{i+1,r}(v,c)$ selects hyperedges $h$ that contain a vertex $u \in h \setminus \{v\}$  where the unavailability of $c$ at $u$ is independent of $v$'s color. Consequently, Talagrand’s inequality applies to $U_{i+1,r}^1(v,c)$ almost identically to its application for $W_{i+1,r}^1(v,c)$, as shown below.

Since the hypergraph is free of 3-cycles, changing the color or activation bit in $\sigma_i$ for a vertex in hyperedge $h \in Z_{i,r}(v,c)$ only influences the uncolored status and the candidate list of colors of vertices within $h$. Specifically, such a change cannot affect any other hyperedge in $Z_{i,r}(v,c)$; consequently, $U_{i+1,r}^1(v,c)$ changes by at most 1. Since the hypergraph also lacks 4-cycles, changing the color or activation of any vertex not in a hyperedge in $Z_{i,r}(v,c)$ can affect only one vertex in at most one hyperedge of $Z_{i,r}(v,c)$. Therefore, $U_{i+1,r}^1(v,c)$ remains stable within a margin of 1.
 
We claim now that if $U_{i+1,r}^{1}(v,c) \ge s$, then there exist at most $ (2k^2  + 2k)s$ random choices that certify this event. To see this, consider a hyperedge $h$ counted by $U_{i+1,r}^1(v,c)$. For every vertex $v' \in h \setminus \{v\}$ that failed to retain its color, one of the following must have occurred in $\sigma_i$: $v'$ was deactivated; or $v'$ is contained in a hyperedge $h' \ne h$ such that all vertices in $(h' \setminus \{v'\}) \cap V_{i}$ were activated and received the same color as $v'$.  Also, observe  that the event where a vertex $v' \in h \setminus \{v \}$ is activated and assigned color $c$ in $\sigma_i$ is determined by the outcome of two random choices. Finally, let $u$ be the vertex for which $c \notin L_u(i+1)$ independently of $v$'s color. This means $u$ is contained in a hyperedge $g$ such that all vertices in $(g \setminus \{u\}) \cap V_{i}$ were activated and assigned color $c$ in $\sigma_i$.  Consequently, we can certify that $h$ contributes to $U_{i+1,r}^1(v,c)$ using at most $2k^2 + 2k$ random choices.

To continue, we derive upper and lower bounds for the expectation $\mathbb{E}[U_{i+1,r}^1(v,c)]$. For the upper bound, observe that:
\begin{align}
            \ex[U_{i+1,r}^1(v,c) ]   \le \ex[ W_{i+1,r}^1(v,c)] \le  \sum_{j=r}^{k-1} {j  \choose r }  \alpha^{j-r}  \frac{ T_{i,j}}{L_i^{j-r} }. \label{upp_u}
\end{align}
For the lower bound, at first recall that property $Q(i)$ holds by the hypothesis of Lemma~\ref{concentration_lemma}. Consider now a hyperedge  $h \in D_{i,r}(v,c)$  and define
\begin{align*}
\widetilde{\mathrm{Keep}_i}:=  \prod_{j \in [k-1] \setminus \{r \}}  \left( 1 - \left(  \frac{ \alpha}{ L_i } \right)^j\right)^{T_{i,j}} \cdot  \left( 1 - \left( \frac{ \alpha }{L_i } \right)^r \right)^{T_{i,r} -1 }
\end{align*}
to be the probability that a vertex $u \in (h \setminus \{v\}) \cap V_i $ has $c$ available in the next iteration (i.e., $c \in L_u(i+1)$),  assuming the color assignments and activations of vertices in $h$—including $v$—are ignored. Notice then that
\begin{align}
 \ex [ U_{i+1,r}^1 (v,c) ]   \ge  (1 - \alpha)^r (1- \widetilde{\mathrm{Keep}_i} ) T_{i,r}, \label{lower_u}
\end{align}
since $(1 - \alpha)^r (1-\widetilde{\mathrm{Keep}_i})$ is a lower bound on the probability that a certain edge $h \in D_{i,r}(v,c)$ is counted by $U_{i+1,r}^1(v,c)$  (corresponding to the event that all $r$ vertices in $h \setminus\{v\} \cap V_i$ are de-activated in $\sigma_i$, $c$ is unavailable for a certain vertex $u \in h \setminus \{v \} \cap V_i$ in the end of iteration $i$, and the unavailability of $c$  for $u$ is independent of the color of $v$ in $\sigma_i$). 

Akin to the proof of Corollary~\ref{bounding_keep},  making use of the assumption that $R_{i,k-1} := \frac{ T_{i,k-1}}{ L_i^{k-1}} \ge \frac{1}{10k^2}$, and for sufficiently large $\Delta$, we obtain:
\begin{align*}
\widetilde{\mathrm{Keep}_i}  &   \le \left( 1 - \left( \frac{ \alpha }{L_i } \right)^{k-1} \right)^{T_{i,k-1} -1 } =   \frac{ \left( 1 - \left( \frac{ \alpha }{L_i } \right)^r \right)^{T_{i,k-1}  }  }{1- \frac{\alpha^{k-1}}{L_i^{k-1}} } \le \frac{ \mathrm{exp}(-  \alpha^{k-1}R_{i,k-1}  )} { 1- \frac{\alpha^{k-1}}{L_i^{k-1}} }  \\
 					& \le  \mathrm{exp}(-  \alpha^{k-1}R_{i,k-1}  ) \left( 1 + \frac{2 \alpha^{k-1}}{L_i^{k-1}}  \right) < \left( 1 -  \frac{K^{k-1}}{11 k^2  (\ln \Delta )^{k-1 } } \right) \left( 1 + \frac{2 \alpha^{k-1}}{L_i^{k-1}}  \right)  \\
					& \le 1 -  \frac{K^{k-1}}{12 k^2  (\ln \Delta )^{k-1 } }.
\end{align*}

Combining the above with~\eqref{lower_u} we get:
\begin{align}
 \ex [ U_{i+1,r}^1 (v,c) ]   \ge  (1 - \alpha)^r (1- \widetilde{\mathrm{Keep}_i} ) T_{i,r} \ge  (1 - \alpha)^r \cdot  \frac{K^{k-1}}{12 k^2  (\ln \Delta )^{k-1 } } \cdot (\ln \Delta)^{20(k-1)} \ge (\ln \Delta)^{18(k-1)}   \label{final_lower_u}.
\end{align}

Given~\eqref{upp_u} and~\eqref{final_lower_u},  we are now ready to apply Talagrand's inequality.  Let 
\begin{align*}
\Delta_{U^1} :=  \left| U_{i+1,r}^{1}(v,c) -  \ex[ U_{i+1,r}^{1}(v,c) \right|.
\end{align*}
Applying Theorem~\ref{talagrand_inequality} with $\gamma =1 $, $w = 2k^2 + 2k$ and $t = \left(\ex[U_{i+1,r}^1(v,c)]\right)^{1.9/3}$, and for sufficiently large $\Delta$,  we obtain  
\begin{align}
\Pr\left[  \Delta_{U^1} > \frac{1}{8} \left( \sum_{j=r}^{k-1}  { j \choose r } \alpha^{j-r} \frac{T_{i,j}}{L_i^{j-r} } \right)^{2/3} \right] & \le \Pr\left[  \Delta_{U^1} > t + 60 \gamma \sqrt{w \ex[ U_{i+1,r}^1(v,c) ]  }  \right]  \nonumber \\
																														         & \le  4 \mathrm{e}^{ -  \frac{ (\ex[U_{i+1,r}(v,c) ])^{0.8/3 } }{8 \gamma^2 w }  }  \le \frac{ 1}{ 2} \Delta^{ -\ln  \Delta } \label{UX1}.
\end{align}

Recall the definition of $W_{i+1,r}^2(v,c) $ and observe that
\begin{align}\label{papaoutai_afro_soul_1}
U_{i+1,r}^1(v,c) \le Y_{i+1,r}(v,c) \le U_{i+1,r}^1(v,c)  + W_{i+1,r}^2(v,c),
\end{align}
and that
\begin{align}\label{papaoutai_afro_soul_2}
 \ex[W_{i+1,r}^2(v,c) ] = \ex\left[ \sum_{j = r}^{k-1} \mathrm{Bin}\left( T_{i,j}  ,  \frac{ \alpha^j} {L_i^j} \right)  \right] =  \sum_{j=r}^{k-1} T_{i,j} \frac{\alpha^j }{L_i^j }   = \sum_{j=r}^{k-1} R_{i,j} \alpha^j = O(1).
 \end{align}
Recalling now that  $\beta :=  \frac{1}{4} \left( \sum_{j=r}^{k-1}  { j \choose r }  \alpha^{j-r} \frac{T_{i,j}}{L_i^{j-r} } \right)^{2/3} $ and combining~\eqref{UX1}, ~\eqref{papaoutai_afro_soul_1} and~\eqref{papaoutai_afro_soul_2}, we obtain:
\begin{align}
\Pr\left[ Y_{i+1,r}(v,c) - \ex[ Y_{i+1,r}(v,c)]  \le - \beta  \right]  &  \le \Pr\left[ U_{i+1,r}^1(v,c) -  \ex[Y_{i+1,r}(v,c)]  \le  - \beta  \right]  \nonumber \\
											   & \le \Pr\left[ U_{i+1,r}^1(v,c) -  \ex[ U_{i+1,r}^1(v,c) + W_{i+1,r}^2(v,c)]  \le  - \beta  \right]  \nonumber \\
											  & = \Pr\left[ U_{i+1,r}^1(v,c)  -  \ex[ U_{i+1,r}^1(v,c)] \le \ex[W_{i+1,r}^2(v,c) ] - \beta \right] \nonumber   \\
											  & \le \Pr\left[ U_{i+1,r}^1(v,c)  -  \ex[ U_{i+1,r}^1(v,c)] \le O(1) - \beta \right] \nonumber  \\
											  & \le \frac{1}{2} \Delta^{ - \ln \Delta}, \nonumber
\end{align}
for sufficiently large $\Delta$, concluding the proof of~\eqref{shuttle2}.

\end{proof}

\subsection{Proof of Lemma~\ref{final_corollary}}

We  use induction on $i$. Property $P(1)$ clearly holds, so we assume that property $P(i)$ holds and we prove that with property $P(i+1)$ holds with positive probability. Recall our discussion in the previous section in which we argued that we can assume without loss of generality that property $Q(i)$ holds. 

For every $v$ and $c \in L_v$ let $A_v$ be the event that $\ell_{i+1}(v) < L_{i+1}$ and $B_{v,c}^r$ to be the event that $t_{i+1,r}(v,c) > T_{i+1,r}$. Clearly, if these bad events are avoided, then $P(i+1)$ holds.

Since property $Q(i)$ holds, we have that $\ell_i(v) = L_i$. Therefore, by~\eqref{L_def} and Lemmas~\ref{expectations_lemma},~\ref{concentration_lemma}  we have:
\begin{align}\label{probA}
\Pr[A_v ] = \Pr\left[ \ell_{i+1}(v) < L_i \cdot \mathrm{Keep} - L_i^{2/3} \right]   = \Pr\left[ \ell_{i+1}(v) < \ex[ \ell_{i+1}(v) ] - L_i^{2/3} \right]  < \Delta^{- \ln \Delta }. 
\end{align} 
Similarly, by~\eqref{T_def} and Lemmas~\ref{expectations_lemma},~\ref{concentration_lemma} we have:
\begin{align}
\Pr[B_{v,c}^r]  &\le \Pr[ t_{i+1,r}'(v,c) > T_{i+1,r}  ] \nonumber  \\
                      & = \Pr\left[ t_{i+1,r}'(v,c) - \ex[t_{i+1,r}'(v,c) ]  >   \left( \sum_{j=r}^{k-1} { j \choose r} \alpha^{j-r} \frac{T_{i,j} }{L_i^{j-r } }    \right)^{2/3 }-   C' \cdot  \sum_{j=r}^{ k-1} \frac{T_{i,j} }{L_i^j }   \right]  \nonumber\\
                      & \le   \Pr\left[ t_{i+1,r}'(v,c) - \ex[t_{i+1,r}'(v,c) ]  >   \frac{1}{2} \left( \sum_{j=r}^{k-1} { j \choose r} \alpha^{j-r} \frac{T_{i,j} }{L_i^{j-r } }    \right)^{2/3 } \right]  \label{snoop}  \\
                      & \le  \Delta^{- \ln \Delta } \label{ProbB},
\end{align}
where $C' > 0$ is the hidden constant that multiplies $Y_i$ in the statement of Lemma~\ref{expectations_lemma}, Part (b). Note that in deriving~\eqref{snoop} we used Lemma~\ref{bounding_keep_lemma} — which implies that  $\sum_{j=r}^{ k-1} \frac{T_{i,j} }{L_i^j}  = O(\ln \Delta)$ — and our assumptions that  $L_{i} \ge  (\ln \Delta)^{20(k-1) } $ and that  $T_{i,k-1} \ge \frac{L_i^{k-1} }{10k^2} $ —  which imply that 
\begin{align*}
 \left( \sum_{j=r}^{k-1} { j \choose r} \alpha^{j-r} \frac{T_{i,j} }{L_i^{j-r } }    \right)^{2/3 } >  \left(  {k-1 \choose  r }  \alpha^{k-1-r}  \frac{T_{i,k-1} }{ L_j^{k-1-r } } \right)^{2/3}  \ge  \left( \alpha^{k-1-r} \frac{L_i^r}{10k^2 }  \right)^{2/3} = \omega (\ln \Delta).
 \end{align*}

Notice now that each bad event $f_v \in \{ A_v, B_{v,c}^r \}$ event is determined by the colors assigned to vertices of distance at most $3$ from $v$. Therefore, $f_v$ is mutually independent of all but at most $ (k\Delta)^4  (1+\delta) \left( \frac{ \Delta }{ \ln \Delta} \right)^{ \frac{1}{k-1} }   < \Delta^5$  other bad events. For $\Delta$ sufficiently large, $\Delta^{- \ln \Delta } \Delta^{5} < \frac{1}{4} $ and so the proof is concluded by applying Corollary~\ref{symmetric_LLL} using~\eqref{probA},~\eqref{ProbB}.

\subsection{Proof of Lemma~\ref{no_errors}}

\begin{proof}[Proof of part (a)]

Since $L_i < L_i'$, for the first part of the lemma it suffices to prove that $L_i' \le L_i + (L_i')^{5/6}$. Towards that end, at first we observe that for sufficiently large $\Delta$,  Corollary~\ref{bounding_keep}  and the fact that $K = \frac{1}{100 k^{3k} } $ imply:
\begin{eqnarray}
\mathrm{Keep}_i^{5/6} - \mathrm{Keep}_i  &\ge &  \left(1 -  \frac{K^{k-1}}{12 k^2  (\ln \Delta)^{k-1 }  } \right)^{5/6}  -  \left(1 -  \frac{K^{k-1}}{12 k^2  (\ln \Delta)^{k-1} }\right) \nonumber \\
								& \ge & \left( 1 - \frac{5}{6} \cdot \frac{ K^{k-1} }{ 12 k^2 (\ln \Delta)^{k-1} }  \right) - \left( 1 - \frac{ K^{k-1} }{ 12 k^2 (\ln \Delta)^{k-1 }  }  \right) = \frac{ K^{k-1} }{ 72 k^2 (\ln \Delta)^{k-1} }. \label{useful_obs} 
\end{eqnarray}
Note that in deriving the first inequality we used the fact that the function $x^{5/6} - x$ is decreasing on the interval $[C  ,1]$  since  $K  $ is sufficiently small. For the second one, we used the Taylor Series  for $(1-y)^{5/6}$ around $y = 0$.

We now proceed by using induction. The base case  is  trivial, so assume that the statement is true for $i$, and consider $i+1$. Since by our assumption $L_i \ge (\ln \Delta)^{20(k-1) } $ we obtain
\begin{eqnarray}
L_{i+1}' & =&  \mathrm{Keep}_i L_i'  \nonumber \\
	    & \le & \mathrm{Keep}_i \left( L_i + (L_i')^{5/6} \right)  \label{tsipras_woooow} \\
	    & = &L_{i+1} + L_i^{2/3} + \mathrm{Keep}_i (L_i')^{5/6} \nonumber \\
	    & \le & L_{i+1} +  L_i^{2/3} +  \mathrm{Keep}_i^{5/6} (L_i')^{5/6}-  \frac{K^{k-1}}{ 72 k^2 (\ln \Delta)^{k-1 } } (L_i')^{5/6} \label{tsipras_wow} \\
	    & \le& L_{i+1} +  	(L_{i+1}')^{5/6}		 +  L_i^{2/3} 	-  \frac{K^{k-1}}{ 72 k^2 (\ln \Delta)^{k-1 } } (L_i')^{5/6} \nonumber		\\
	    & < & L_{i+1} + (L_{i+1}')^{5/6}, \label{final_no_errors}
\end{eqnarray}
for sufficiently large $\Delta$. Note that in deriving~\eqref{tsipras_woooow} we used the inductive hypothesis; for~\eqref{tsipras_wow} we used~\eqref{useful_obs}; and for~\eqref{final_no_errors} the fact that $L_i \ge (\ln \Delta)^{20(k-1) }$ and the inductive hypothesis. 
\end{proof}

\begin{proof}[Proof of part (b)]
We observe that it suffices to show that $T_{i,r}' \ge T_{i,r} -  (T_{i,r}')^{\frac{100r}{100r +1}}    $  (since $T_{i,r} \ge T_{i,r}'$ for every $i$ by definition) and proceed by using induction. Again, the base case is trivial, so we assume the statement is true for $i$, and  consider $i+1$.

Let $\gamma_x = \frac{100x}{100x+1}$.  Recalling~\eqref{T_def} and the fact that $L_i < L_i'$ according to~\eqref{L_def} and~\eqref{Lprime_def}, and neglecting the positive lower-order additive noise terms which only the strengthen the inequality, we obtain:
\[
    T_{i+1,r} \ge \sum_{j=r}^{k-1} T_{i,j} {j \choose r} (\mathrm{Keep}_i(1-\alpha \mathrm{Keep}_i))^r \left( \frac{\alpha \mathrm{Keep}_i}{L_i'} \right)^{j-r}.
\]
Substituting the inductive hypothesis $T_{i,j} \ge T'_{i,j} - (T'_{i,j})^{\gamma_j}$ into the sum:
\begin{align}
    T_{i+1,r} &\ge \sum_{j=r}^{k-1} \left[ T'_{i,j} - (T'_{i,j})^{\gamma_j} \right] {j \choose r} (\mathrm{Keep}_i(1-\alpha \mathrm{Keep}_i))^r \left( \frac{\alpha \mathrm{Keep}_i}{L_i'} \right)^{j-r} \nonumber \\
    &= \sum_{j=r}^{k-1} T'_{i,j} {j \choose r}(\dots) - \sum_{j=r}^{k-1} (T'_{i,j})^{\gamma_j} {j \choose r}(\mathrm{Keep}_i(1-\alpha \mathrm{Keep}_i))^r \left( \frac{\alpha \mathrm{Keep}_i}{L_i'} \right)^{j-r}  \nonumber \\
    &=    T_{i+1,r}' - \sum_{j=r}^{k-1} (T'_{i,j})^{\gamma_j} {j \choose r}(\mathrm{Keep}_i(1-\alpha \mathrm{Keep}_i))^r \left( \frac{\alpha \mathrm{Keep}_i}{L_i'} \right)^{j-r}    \nonumber \\
    & = T_{i+1,r'} - \mathcal{E}_{total}   \label{eq:expansion}
\end{align}
We need to show that the total subtracted error $\mathcal{E}_{total}$ is bounded by the target slack $(T'_{i+1,r})^{\gamma_r}$. We split $\mathcal{E}_{total}$ into the dominant term ($j=r$) and the tail terms ($j > r$):
\[
    \mathcal{E}_{total} = \mathcal{E}_{dom} + \mathcal{E}_{tail}
\]
where
\begin{align*}
    \mathcal{E}_{dom} &= (T'_{i,r})^{\gamma_r} (\mathrm{Keep}(1-\alpha \mathrm{Keep}))^r, \\
    \mathcal{E}_{tail} &= \sum_{j=r+1}^{k-1} {j \choose r} (T'_{i,j})^{\gamma_j} (\mathrm{Keep}(1-\alpha \mathrm{Keep}))^r \left(\frac{\alpha \mathrm{Keep}}{L_i'}\right)^{j-r}.
\end{align*}

We start by bounding $\mathcal{E}_{\mathrm{tail}}$.  Lemma~\ref{pain} implies that there exists a constant $C_k$ such that, for every $i, j$, 
\begin{align}\label{easy_fact}
\frac{T_{i,j}' }{ (L_i')^j }  = R_{i,j}'  \le C_k \ln \Delta.
\end{align}
Moreover, since $T_{i, k-1} \ge \frac{ 1}{ 10k^2} L_i^{k-1}$ according to the hypothesis of Lemma~\ref{no_errors},  definition~\eqref{Lprime_def}, Corollary~\ref{bounding_keep}, and the first part of Lemma~\ref{no_errors}, imply that for every $j$:
\begin{align}
T_{i+1,j}'  \ge T_{i,k-1}' {k-1 \choose j} \left( \mathrm{Keep}_i (1 - \alpha \mathrm{Keep}_i )  \right)^r   \left(  \frac{ \alpha \mathrm{Keep}_i}{ L_i'}  \right)^{(k-1)-j} = \Omega \left(  \frac{(L_i')^{j} }{ (\ln \Delta)^{k-1-j} }  \right) \label{feeding_lower_bound}
\end{align}

Using~\eqref{easy_fact} we get:
\begin{align}
\mathcal{E}_{tail} &= \sum_{j=r+1}^{k-1} {j \choose r} (T'_{i,j})^{\gamma_j} (\mathrm{Keep}_i(1-\alpha \mathrm{Keep}_i))^r \left(\frac{\alpha \mathrm{Keep}_i}{L_i'}\right)^{j-r}  \nonumber \\
			 &= \sum_{j=r+1}^{k-1} {j \choose r} (R'_{i,j})^{\gamma_j} (\mathrm{Keep}_i(1-\alpha \mathrm{Keep}_i))^r \left(\alpha \mathrm{Keep}_i\right)^{j-r} \frac{1 }{(L_i')^{j-r  - j \gamma_j }  } \nonumber \\
			 & = O\left( \sum_{j=r+1}^{k-1}  \frac{ (L_i')^{r - j (1-\gamma_j) } }{  (\ln \Delta)^{j-r - \gamma_j } } \right)  =  O\left(   \frac{ (L_i')^{r - (r+1) (1-\gamma_{r+1}) } }{  (\ln \Delta)^{1- \gamma_j } }   \right)    \label{tail_error_bound}.
\end{align}
Now~\eqref{feeding_lower_bound} implies that:
\begin{align}\label{final_tail_error_bound}
(T_{i+1, r}')^{ \gamma_r}   =  \Omega \left(  \frac{(L_i')^{r \gamma_r} }{ (\ln \Delta)^{k-1-\gamma_r} }  \right)   = \omega( \mathcal{E}_{tail}), 
\end{align}
since $L_i' \ge L_i \ge ( \ln \Delta )^{20(k-1)}$ and $r \gamma_r -  \left( r - (r+1) (1-\gamma_{r+1}) \right) > 0$. To see the latter inequality, let
\begin{align*}
E(r)  = & r \gamma_r - (r - (r+1)(1 - \gamma_{r+1})) \\
 = & r(\gamma_r - 1) + (r+1)(1 - \gamma_{r+1}).
\end{align*}
Since  $\gamma_x = \frac{100x}{100x+1}$, we have that $1 - \gamma_x = \frac{1}{100x+1}$. Therefore,
\begin{align}
E(r) &= -\frac{r}{100r+1} + \frac{r+1}{100(r+1)+1} \nonumber \\
&= \frac{(r+1)(100r+1) - r(100r+101)}{(100r+1)(100r+101)} \nonumber  \\
&= \frac{(100r^2 + 101r + 1) - (100r^2 + 101r)}{(100r+1)(100r+101)} \nonumber \\
&= \frac{1}{(100r+1)(100r+101)} > 0 \quad \forall r \ge 0.
\end{align}

We continue by bounding $\mathcal{E}_{\mathrm{dom}}$.  Let $\Psi = \mathrm{Keep}_i(1-\alpha \mathrm{Keep}_i )$ and note that $\Psi <1$ for sufficiently large $\Delta$ according to Corollary~\ref{bounding_keep}. With this notation,
\begin{align}
T_{i+1,r}'  &\ge  T_{i,r}' \Psi^r,   \label{t_approx_psi}\\
\mathcal{E}_{dom} &= (T'_{i,r})^{\gamma_r} \Psi^r, \nonumber
\end{align}
where~\eqref{t_approx_psi} is implied by~\eqref{Lprime_def}. Thus,
\begin{align}\label{dom_error_term_bound}
\mathcal{E}_{dom} &= (T'_{i,r})^{\gamma_r} \Psi^r =  ((T'_{i,r})^{\gamma_r} \Psi^{\gamma_r} ) \Psi^{r - \gamma_r }  \le \Psi^{r - \gamma_r }  (T_{i+1,r}')^{ \gamma_r}.
\end{align}
Since $\Psi, \gamma_r < 1$ this  inequality implies a multiplicative gap between the dominant error and $(T_{i+1,r}')^{ \gamma_r}$, which allows us to absorb the asymptotically smaller tail terms $\mathcal{E}_{\mathrm{tail}}$ and show that $ \mathcal{E}_{\mathrm{total}} < (T_{i+1,r}')^{ \gamma_r}$, concluding the proof.

\end{proof}

\subsection{Proof of Lemma~\ref{target_lemma}}

We proceed by induction. Let $\eta := \frac{\epsilon/3 }{ (k-1)  (1 + \epsilon/2 ) } $.  We will assume that  $L_j \ge \Delta^{ \eta}  , T_{j,r} \ge (\ln \Delta)^{20(k-1) } $ for all $ 2 \le j \le i <i^*$,  and prove that  $L_{i+1} \ge \Delta^{ \eta}, T_{i+1,r} \ge (\ln \Delta)^{20(k-1) }$.
Towards that end, it will be useful to focus on the family of ratios $R_{i,r} $, $ r\in [k-1$].  Note that, according to Lemma~\ref{no_errors},  this family is well-approximated by the family  $R_{i,r}' $, $r \in [k-1]$.  In particular, recalling Lemma~\ref{pain} and  applying Lemma~\ref{no_errors} we obtain:
\begin{align}
R_{i,r}  & \le  R_{i,r}'  \cdot \frac { 1 + (T_{i,r}')^{-\frac{1}{100r+1}}   }{  \left(1-  (L_i')^{ -1/6 }  \right)^{r }  }  \nonumber \\
	& \le (1- \alpha C)^{r (i-1) } \ln \Delta \cdot  \frac{ \prod_{p=r}^{k-2 } (p+1)   } { (1+ \delta - \frac{1.1 \delta }{k^{99} })^{k-1 }  C^{k-1-r } }   ,\label{Gs}
\end{align} 
for sufficiently large $\Delta$, since $L_i,T_{i,r}  \ge (\ln \Delta)^{20(k-1)} $.

Using~\eqref{Gs} and the fact that $1 - \frac{1}{x} > \mathrm{e}^{- \frac{1}{x-1} } $ for $x \ge 2$  we can get an improved lower bound for $\mathrm{Keep}_i$ as follows. 
\begin{align}
\mathrm{Keep}_i  & \ge   \mathrm{exp}\left(  - \frac{ 1}{ (1- \frac{\delta}{k^{100k}})} \sum_{r=1}^{k-1}   \alpha^{r}   R_{i,r} \right)  \nonumber  \\
			   & \ge   \mathrm{exp}\left( -	 \frac{ 1}{ \left( 1+ \delta - \frac{ 1.2 \delta}{ k^{99}  } \right)^{ k-1 }     } \sum_{r=1}^{k-1} (1- \alpha C)^{r (i-1) }  \frac{ K^r \prod_{p=r}^{k-2 } (p+1)   } {  (\ln \Delta)^{r-1 }   C^{k-1-r } }     \right )  \label{B_bound},
\end{align}
for sufficiently large $\Delta$.

Recall that $\delta= (1+\epsilon)(k-1) -1$. Using~\eqref{B_bound} we get
\begin{eqnarray}
\prod_{j = 1}^{ i-1} \mathrm{Keep}_j 	   & \ge &  \mathrm{exp}\left( -	\frac{ 1}{ \left( 1+ \delta - \frac{ 1.2 \delta}{ k^{99}  } \right)^{ k-1 }     } \sum_{r=1}^{k-1}  \left( \frac{ K^r \prod_{p=r}^{k-2 } (p+1)   } {  (\ln \Delta)^{r-1 }   C^{k-1-r } }      \sum_{j=1}^{i-1}  (1 - \alpha C )^{r (j-1)}     \right ) \right)  \nonumber  \\
						    & \ge &  \mathrm{exp}\left( -	\frac{ 1}{ \left( 1+ \delta - \frac{ 1.2 \delta}{ k^{99}  } \right)^{ k-1 }     } \sum_{r=1}^{k-1}  \left( \frac{ K^r \prod_{p=r}^{k-2 } (p+1)   } {  (\ln \Delta)^{r-1 }   C^{k-1-r } }  \cdot  \frac{1}{ 1 -  (1- \alpha C)^r }    \right ) \right)  \nonumber  \\
						    & \ge &   \mathrm{exp}\left( -	\frac{ (k-1)! C^{-(k-1)} \ln \Delta }{ \left( 1+ \delta - \frac{ 1.2 \delta}{ k^{99}  } \right)^{ k-1 }   } \sum_{r=1}^{k-1}  \left( \left( \frac{ C\cdot K  } {  \ln \Delta  }  \right )^r  \cdot \frac{1}{ (1 - (1-\alpha C)^r)  } \right) \right)	 \nonumber \\
						    & \ge &   \mathrm{exp}\left( -	\frac{ (k-1)! C^{-(k-1)} \ln \Delta }{ \left( 1+ \delta - \frac{ 1.2 \delta}{ k^{99}  } \right)^{ k-1 }   } \sum_{r=1}^{k-1}  \left( \left( \frac{ C\cdot K  } {  \ln \Delta  }  \right )^r  \cdot \frac{1}{ \alpha C } \right) \right)	 \nonumber \\				     
						     & = &   \mathrm{exp}\left( -	\frac{ (k-1)! C^{-(k-1)} \ln \Delta }{ \left( 1+ \delta - \frac{ 1.2 \delta}{ k^{99}  } \right)^{ k-1 }    }  \cdot \frac{\frac{C \cdot K}{ \ln \Delta} - \left(\frac{C \cdot K}{ \ln \Delta}\right)^{k} }{1 - \frac{C \cdot K}{ \ln \Delta}   }  \cdot \frac{ \ln \Delta}{ C \cdot K }  \right)	 \nonumber \\
						     & \ge &   \mathrm{exp}\left( -	\frac{ (k-1)! C^{-(k-1)} \ln \Delta }{ \left( 1+ \delta - \frac{ 1.2 \delta}{ k^{99}  } \right)^{ k-1 }     }     \cdot  \left(1+ \frac{\delta}{k^{100}} \right) \right)	 \nonumber \\
						     & = & \mathrm{exp}\left( -	\frac{ (k-1)! C^{-(k-1)} \left(1+ \frac{\delta}{k^{100}} \right) \ln \Delta }{   \left( (1+\epsilon)(k-1) -  \frac{ 1.2 \delta}{ k^{99}  } \right)^{ k-1 }     }  \right)	 \nonumber \\
						     & = & \mathrm{exp}\left( -	\frac{ (k-1)! C^{-(k-1)} \left(1+ \frac{\delta}{k^{100}} \right) \ln \Delta }{ (1+\epsilon)^{k-1} \cdot (k-1)^{k-1}  \left( 1 -  \frac{ 1.2 \delta}{ k^{99} (1+\epsilon)(k-1)  } \right)^{ k-1 }     }  \right)	 \nonumber \\
						   & \ge & \mathrm{exp} \left(  - \frac{\ln \Delta}{ (1+\frac{\epsilon}{2} ) (k-1) }    \right)  	 \label{finally!!} 
\end{eqnarray}
for sufficiently large $\Delta$, and  since $\frac{(k-1)!  \gamma  }{ (1+\epsilon)^{k-1} (k-1)^{k-1} }  \cdot (1+ \frac{\epsilon}{2} ) (k-1)  < 1 $, where  $\gamma = \frac{(1+ \frac{\delta}{ k^{100} } ) C^{-(k-1)}  }{  \left( 1 -  \frac{ 1.2 \delta}{ k^{99} (1+\epsilon)(k-1)  } \right)^{ k-1 } } $. To see the latter inequality, note that $(k-1)! < (k-1)^k \mathrm{e}^{-k }$ and, therefore, for $k \ge 3$,
\begin{align*}
\frac{(k-1)!  \gamma  }{ (1+\epsilon)^{k-1} (k-1)^{k-1} }  \cdot \left(1+ \frac{\epsilon}{2} \right) (k-1)    <  \frac{1 + \frac{\epsilon }{ 2}  }{( 1+ \epsilon)^{k-1} }  \cdot e^{-k} (k-1) \gamma < \frac{1 + \frac{\epsilon }{ 2}  }{1+ \epsilon }  \cdot e^{-k}  (k-1)\gamma  <1.
\end{align*}

Using~\eqref{finally!!} we can now bound $L_i'$ as follows.
\begin{eqnarray}
L_i'   =   L_1' \prod_{j=1}^{i-1} \mathrm{Keep}_j \ge (1+\delta) \left( \frac{ \Delta }{ \ln \Delta } \right)^{\frac{1}{k-1} }   \Delta^{- \frac{ 1}{ (1+\frac{\epsilon}{2} ) (k-1) }  }   \ge   \Delta^{ \eta}, 
\end{eqnarray}
for sufficiently large $\Delta$. Thus, $L_i'$ never gets too small for the purposes of our analysis. Lemma~\ref{no_errors} implies that neither does $L_i$.

The proof is concluded by observing that~\eqref{Gs}  implies that $R_{i,r}$, $ r \in [k-1]$, becomes smaller than $\frac{1 }{10 k^2 }$ for $i = O( \ln \Delta \ln \ln \Delta)$.

\subsection{Proof of Lemma~\ref{bounding_keep_lemma}} 
\label{bounding_keep_lemma_proof}

We proceed by induction. The case $i=1$ is straightforward to verify  since $R_{1,r} = 0 $ for every $r \in [k-2]$, while  $R_{1,k-1} = \frac{ \ln \Delta }{ (1+\delta)^{k-1} } $. Therefore, we inductively assume the claim for $i$, and consider the case $i+1$. 
Note that the inductive hypothesis implies that $\mathrm{Keep}_i  = \Omega(1)$ since $ 1 - \frac{1}{x}  \ge \mathrm{e}^{- \frac{1}{x-1} } $ for every $x \ge 2$ and, thus,
\begin{align}\label{constant_bound}
\mathrm{Keep}_i&  \ge \mathrm{exp}\left( -  \sum_{r=1}^{ k-1}  \frac{T_{i,r}  }{ ( \alpha^{-1} L_i  ) ^r - 1 }  \right)  = \mathrm{exp} \left( -  \sum_{r=1}^{k-1} \frac{R_{i,r} }{ \left( \frac{ \ln \Delta}{ K} \right)^{r} -\frac{1}{L_i^r} } \right) \ge   \mathrm{exp} \left( -  \sum_{r=1}^{k-1} \frac{k^{2 (k-1-r )} \ln \Delta }{ \left( \frac{ \ln \Delta}{ K} \right)^{r} -\frac{1}{L_i^r} } \right)      \nonumber \\
 &\ge \mathrm{exp} \left( - \frac{ K k^{2(k-2)} \ln \Delta }{ \ln \Delta - \frac{K}{L_i}}  -O \left( \frac{1  }{ \ln \Delta } \right) \right) \ge   \mathrm{exp}\left( - \frac{ K k^{ 2(k-2)  } }{1 - \frac{\delta}{100k}  }      \right),
\end{align}
for sufficiently large $\Delta$.  Recalling~\eqref{L_def} and~\eqref{T_def}, we have:
\begin{eqnarray}
 R_{i+1,r} & = &  	 \sum_{j =  r }^{ k-1 } \left(  \frac{T_{i,j} }{L_{i+1}^r  }  \cdot \left(   \mathrm{Keep}_{i} \left(  1- \alpha \mathrm{Keep}_{i} \right)   \right)^{ r } { j \choose r}  \left(\frac{ \alpha \mathrm{Keep}_{i}}{L_{i} }\right)^{ j-r}  \right) \nonumber  \\
 		& &	+ \frac{1}{L_{i+1}^r} \left(\sum_{j=r}^{k-1}  { j \choose r}  \alpha^{j-r} \frac{T_{i,j} }{ L_i^{j-r} }\right)^{2/3}   +  4 k^{ 2(k-r) } \alpha^{-r+1}   \left( \frac{ L_i}{L_{i+1} } \right)^{r} \ln \Delta        \sum_{\ell=1}^{k-1}\frac{T_{i,\ell} }{ L_i^{2\ell} (\ln \Delta)^{2\ell} }    \nonumber \\
 		& =&  \sum_{j =  r }^{ k-1 } \left(  \frac{T_{i,j} }{L_{i}^r  \left(  \mathrm{Keep}_i  -  L_i^{-1/3} \right)^r }  \cdot \left(   \mathrm{Keep}_{i} \left(  1- \alpha \mathrm{Keep}_{i} \right)   \right)^{ r } { j \choose r}  \left(\frac{ \alpha \mathrm{Keep}_{i}}{L_{i} }\right)^{ j-r}  \right)  \nonumber \\
		& &   +O\left( \frac{1}{ \ln \Delta }  \right)  \label{bounding_errors_work} \\
		& =&  \sum_{j =  r }^{ k-1 } \left(  R_{i,j}  \cdot  \frac{\left(     1- \alpha \mathrm{Keep}_{i}   \right)^{ r }}{ \left( 1-  \frac{L_i^{-1/3}}{ \mathrm{Keep}_i }  \right)^r }  { j \choose r}  \left( \alpha \mathrm{Keep}_{i}\right)^{ j-r}  \right) + O\left( \frac{1}{ \ln \Delta }  \right)  \nonumber \\
		& \le&  \sum_{j =  r }^{ k-1 } \left(  R_{i,j}  \cdot \left(     1- \frac{\alpha \mathrm{Keep}_{i}}{2}   \right)^{ r } { j \choose r}  \left( \alpha \mathrm{Keep}_{i}\right)^{ j-r}  \right) +  O\left( \frac{1}{ \ln \Delta }  \right)   \label{talep1} \\
		&\le &  \left(     1- \frac{\alpha \mathrm{Keep}_{i}}{2}   \right)^{ r }   \left(  R_{i,r}  +  \sum_{j =  r +1  }^{ k-1 }  { j \choose r}  R_{i,j}  \alpha^{j-r}   \right) + O\left( \frac{1}{ \ln \Delta }  \right)  	 \nonumber \\
		&\le &  \left(     1- \frac{\alpha \mathrm{Keep}_{i}}{2}   \right)   \left(  k^{2 (k-1-r) }  \ln \Delta +  \sum_{j =  r +1  }^{ k-1 }  { j \choose r}   \frac{k^{2(k-1-j)} K^{j-r}} { (\ln \Delta)^{ j-r -1 }  }     \right) +O\left( \frac{1}{ \ln \Delta }  \right)  \nonumber
\end{eqnarray}
\begin{eqnarray}
		&\le & \left(     1- \frac{\alpha \mathrm{Keep}_{i}}{2}   \right)  \left(  k^{2(k-1-r)} \ln \Delta       +  k^{2(k-1-(r+1))} (r+1) K \right) + O\left( \frac{1}{ \ln \Delta }  \right)  \nonumber  \\
& \le &  k^{2(k-1-r) }  \ln \Delta  - K  \left(   \frac{    \mathrm{Keep}_i      k^{ 2(k-1-r) }   }{ 2}   - k^{ 2(k-1-(r+1)) } (r+1) \right) +  O\left( \frac{1}{ \ln \Delta }  \right)   \label{twenty_six_correction} \\
& \le &  k^{2(k-1-r) }  \ln \Delta,\label{final_decreasing_trick}
\end{eqnarray}
for sufficiently large $\Delta$, concluding the proof. Note that in deriving~\eqref{bounding_errors_work} we used the inductive hypothesis and that $L_i \ge (\ln \Delta)^{20(k-1)}$  to obtain:
\begin{align*}
\frac{1}{L_{i+1}^r} \left(\sum_{j=r}^{k-1}  { j \choose r}  \alpha^{j-r} \frac{T_{i,j} }{ L_i^{j-r} }\right)^{2/3}  & = \left(\frac{1}{L_{i+1}^r} \right)^{1/3 }  \left(\sum_{j=r}^{k-1}  { j \choose r}  \alpha^{j-r}  R_{i,j} \frac{ L_i^r }{ L_{i+1}^r }\right)^{2/3}  = o\left( \frac{ 1}{ \ln \Delta } \right), \\
4 k^{ 2(k-r) } \alpha^{-r+1}   \left( \frac{ L_i}{L_{i+1} } \right)^{r} \ln \Delta        \sum_{\ell=1}^{k-1}\frac{T_{i,\ell} }{ L_i^{2\ell} (\ln \Delta)^{2\ell} }       & = 4 k^{ 2(k-r) } \ln \Delta \cdot \alpha^{-r+1}   \left(\frac{L_i}{ L_{i+1}}\right)^r \sum_{ \ell =1}^{k-1} \frac{R_{i,\ell} }{ L_i^{\ell} (\ln \Delta)^{2\ell}  } \\
																											& = o\left( \frac{ 1}{ \ln \Delta } \right).
\end{align*}
In deriving~\eqref{talep1} we used the facts that $\mathrm{Keep}_i = \Omega(1)$, $ L_i, T_{i,r} \ge (\ln \Delta)^{20(k-1)}$.  In deriving~\eqref{final_decreasing_trick}  we used  the fact that   $\frac{    \mathrm{Keep}_i      k^{ 2(k-1-r) }   }{ 2}   - k^{ 2(k-1-(r+1)) } (r+1) > 0$,  since $k \ge 3$ and, according to~\eqref{constant_bound}, $\mathrm{Keep}_i$ is sufficiently close to $1$. 

\subsection{Proof of Lemma~\ref{pain} }
\label{pain_lemma_proof}

We proceed by induction.  The base cases are easy to verify since  $R_{1,r}' = 0$  for every $ r \in [k-2]$ and $R_{1,k-1}' = \frac{  \ln \Delta}{ (1+\delta)^{k-1}} $.

We first focus on the case $r = k-1$. We assume that the claim is true for $i-1$ and consider $i$. Note that the inductive hypothesis, the facts that  $L_{j} \ge (\ln \Delta)^{20(k-1) }  $ for every $1 < j < i$  and  $\mathrm{Keep}_j \ge C$,  imply:
\begin{align}\label{bounding_S}
4 k^{ 2(k-r) } \alpha^{-r+1}   \left( \frac{ L_i}{L_{i+1} } \right)^{r} \ln \Delta        \sum_{\ell=1}^{k-1}\frac{T_{i,\ell} }{ L_i^{2\ell} (\ln \Delta)^{2\ell} }  \le  \frac{1}{ (\ln \Delta)^{10(k-1) }  },
\end{align}
for sufficiently large $\Delta$, for every $r \in [k-1]$ and $1 <j \le i$. Therefore, recalling~\eqref{Lprime_def},~\eqref{Tprime_def} and using~\eqref{bounding_S}, we have:
\begin{eqnarray*}
R_{i,k-1}'  & \le&R_{i-1,k-1}'(1- \alpha \mathrm{Keep}_{i-1})^{k-1}  +  \frac{1}{ (\ln \Delta)^{10(k-1)} }  \\
		& \le& R_{i-2,k-1}'(1-\alpha \mathrm{Keep}_{i-2})^{k-1}  (1-\alpha \mathrm{Keep}_{i-1} )^{k-1} +  \frac{ (1-\alpha\mathrm{Keep}_{i-1})^{k-1}}{ (\ln \Delta)^{10(k-1)}  }  + \frac{1}{ (\ln \Delta)^{10(k-1)} }  \\
		&\le&R_{i-2,k-1}'(1-\alpha C)^{2(k-1)} +  \frac{ (1-\alpha C)^{k-1}}{ (\ln \Delta)^{10(k-1)}  }  + \frac{1}{ (\ln \Delta)^{10(k-1)} }  \\
		& \le & \ldots \nonumber \\
		& \le& (1-\alpha C)^{(i-1)(k-1)} R_{1,k-1}' + \frac{ 1}{  (\ln \Delta)^{10(k-1) } }  \sum_{\ell =0 }^{i-1} (1- \alpha C)^{(k-1) \ell} \nonumber  \\
		 & \le &(1-\alpha C)^{(i-1)(k-1)}  \frac{\ln \Delta}{ (1+\delta)^{k-1} }  +  \frac{1}{ (\ln \Delta)^{ 5(k-1) }  }  \nonumber \\
		& \le&   (1-\alpha C)^{(i-1)(k-1)}  \frac{   \ln \Delta  }{ (1+\delta - \frac{\delta }{k^{99} } )^{k-1} },
\end{eqnarray*}
for sufficiently large $\Delta$, concluding the proof for the case $r = k-1$.

We now focus on $r \in [k-2]$. We first observe that
\begin{eqnarray}
R_{2,r}'     & \le &   \sum_{j =  r }^{ k-1 } \left(  \frac{T_{1,j}'}{(L_{1}')^r  }  \cdot \left(   \mathrm{Keep}_{1} \left(  1- \alpha \mathrm{Keep}_{1} \right)   \right)^{ r } { j \choose r}  \left(\frac{ \alpha \mathrm{Keep}_{1}}{L_{1}' }\right)^{ j-r}  \right) +\frac{1}{ (\ln \Delta)^{10(k-1)} }   \nonumber    \\
		& =&  R_{1,k-1}'   \cdot \left(   \mathrm{Keep}_{1} \left(  1- \alpha \mathrm{Keep}_{1} \right)   \right)^{ r } { k-1 \choose r}  \left( \alpha \mathrm{Keep}_{1}\right)^{ k-1-r} +\frac{1}{ (\ln \Delta)^{10(k-1)} }    \nonumber    \\
		& \le  & \frac{ (\ln \Delta)^{ -(k-2-r) }  }{ (1+\delta)^{k-1}  }  K^{k-1-r} {k-1  \choose  r} +\frac{1}{ (\ln \Delta)^{10(k-1)} }  \nonumber  \\
		& \le  & \frac{ ( 1 + \frac{\delta }{k^{100} }  )^{k-1-r }     (\ln \Delta)^{ -(k-2-r) }  }{ (1+ \delta - \frac{\delta}{k^{99} } )^{k-1}  }  K^{k-1-r} \prod_{p = r}^{ k-2}  (p+1) \nonumber,
\end{eqnarray}
concluding the proof of the base cases.

Assume that the claim holds for all  pairs $(r',i')$, where $r'  \in \{r, \ldots, k-1\}$ and  $i' \le i-1 $. It suffices to prove that it also holds for any pair $(r,i)$, where $i > 2$ and $r \in [k-2]$. To see this, observe that 
\begin{eqnarray}
 R_{i,r}' & \le &  	 \sum_{j =  r }^{ k-1 } \left(  \frac{T_{i-1,j}'}{(L_{i}')^r  }  \cdot \left(   \mathrm{Keep}_{i-1} \left(  1- \alpha \mathrm{Keep}_{i-1} \right)   \right)^{ r } { j \choose r}  \left(\frac{ \alpha \mathrm{Keep}_{i-1}}{L_{i-1}' }\right)^{ j-r}  \right) +\frac{1}{ (\ln \Delta)^{10(k-1)} }    \nonumber \\
	  & = &  	 \sum_{j =  r }^{ k-1 } \left(  \frac{T_{i-1,j}'}{(L_{i-1}')^j }  \cdot   \mathrm{Keep}_{i-1}^{j-r}    \left(  1- \alpha \mathrm{Keep}_{i-1} \right)^{ r } { j \choose r}  \alpha^{j-r}  \right) +\frac{1}{ (\ln \Delta)^{10(k-1)} }    \nonumber  \\
	   & = &  \left(  1- \alpha \mathrm{Keep}_{i-1}    \right)^{ r }   \sum_{j =  r }^{ k-1 } \left(   R_{i-1,j}'   { j \choose r}  (\alpha \mathrm{Keep}_{i-1} )^{j-r}   \right) +\frac{1}{ (\ln \Delta)^{10(k-1)} }    \nonumber  \\
	   & \le &  \left(  1- \alpha C    \right)^{ r }   \sum_{j =  r }^{ k-1 } \left(   R_{i-1,j}'   { j \choose r}  \alpha^{j-r}   \right ) +\frac{1}{ (\ln \Delta)^{10(k-1)} }    \nonumber   \\
	        & \le &  \left(  1- \alpha C  \right)^{ r } R_{i-1,r}'    \nonumber \\
	        &  &+ \sum_{j =r+1}^{k-1 } { j \choose r}  K^{j-r}  (\ln  \Delta)^{1- (j-r)  }  (1-\alpha C)^{ j(i-2)+r}    \frac{ (1 + \frac{ \delta}{ k^{100}} )^{k-1-j }  }{ (1+\delta- \frac{\delta }{k^{99} })^{k-1 }  C^{k-1-j } }  \prod_{p=j}^{ k-2}(p+1)     \nonumber\\
	    &   & \hphantom{\left(  1- \alpha C \right)^{ 2r }asdfasfdsadfasdfasdfasdfsadfasdfadfasdfasfdfasdf}		+\frac{1}{ (\ln \Delta)^{10(k-1)} }    \nonumber   \\
	      & \le &\left(  1- \alpha C \right)^{ 2r }  R_{i-2,r}'   \nonumber \\
	     && + \sum_{j =r+1}^{k-1 } { j \choose r}  K^{j-r}  (\ln  \Delta)^{1- (j-r)  }  (1-\alpha C)^{ j(i-2)+r}    \frac{ (1 + \frac{ \delta}{ k^{100}} )^{k-1-j }  }{ (1+\delta- \frac{\delta }{k^{99} })^{k-1 }  C^{k-1-j } }  \prod_{p=j}^{ k-2}(p+1)   \nonumber	\\
	     &  &+  \sum_{j =r+1}^{k-1 } { j \choose r}  K^{j-r}  (\ln  \Delta)^{1- (j-r)  }  (1-\alpha C)^{ j(i-3)+2r}    \frac{ (1 + \frac{ \delta}{ k^{100}} )^{k-1-j }  }{ (1+\delta- \frac{\delta }{k^{99} })^{k-1 }   C^{k-1-j }}  \prod_{p=j}^{ k-2}(p+1)     \nonumber \\
	     &  &  +\frac{1 + (1-\alpha C)^r}{ (\ln \Delta)^{10(k-1)} } \label{again} 
\end{eqnarray}
\begin{eqnarray}    	    
	       \hphantom{ R_{i,r}' } 	      & \le& \left(  1- \alpha C \right)^{ 3r }  R_{i-3,r}' \nonumber  \\
             && + \sum_{j =r+1}^{k-1 } { j \choose r}  K^{j-r}  (\ln  \Delta)^{1- (j-r)  }  (1-\alpha C)^{ j(i-2)+r}    \frac{ (1 + \frac{ \delta}{ k^{100}} )^{k-1-j }  }{ (1+\delta- \frac{\delta }{k^{99} })^{k-1 }  C^{k-1-j } }  \prod_{p=j}^{ k-2}(p+1)   \nonumber	\\
	     &  &+  \sum_{j =r+1}^{k-1 } { j \choose r}  K^{j-r}  (\ln  \Delta)^{1- (j-r)  }  (1-\alpha C)^{ j(i-3)+2r}    \frac{ (1 + \frac{ \delta}{ k^{100}} )^{k-1-j }  }{ (1+\delta- \frac{\delta }{k^{99} })^{k-1 }  C^{k-1-j } }  \prod_{p=j}^{ k-2}(p+1)     \nonumber\\
	      &  &+  \sum_{j =r+1}^{k-1 } { j \choose r}  K^{j-r}  (\ln  \Delta)^{1- (j-r)  }  (1-\alpha C)^{ j(i-4)+3r}    \frac{ (1 + \frac{ \delta}{ k^{100}} )^{k-1-j }  }{ (1+\delta- \frac{\delta }{k^{99} })^{k-1 }  C^{k-1-j } }  \prod_{p=j}^{ k-2}(p+1)     \nonumber\\
	      &  & +\frac{1 + (1-\alpha C)^{r} +  (1-\alpha C)^{2r} }{ (\ln \Delta)^{10(k-1)} }	\label{again_again} \\
	    	      &\le & \ldots 	\nonumber \\
	      & \le&\left(  1- \alpha C \right)^{ (i-1)r }  R_{1,r}' \nonumber \\
	      & & +   \sum_{ j = r+1 }^{k-1 } { j \choose r }  K^{ j-r}  (\ln \Delta)^{1 - (j-r) } \prod_{p = j }^{k-2 } (p+1)  \frac{( 1 + \frac{\delta}{ k^{100} })^{k-1-j }  }{ (1 +\delta- \frac{\delta }{k^{99} })^{ k-1}  C^{k-1-j } }   \sum_{ \ell = 1 }^{i-1  } (1-\alpha C)^{j(i-\ell -1) + \ell r }       \nonumber \\
	      &  & +\frac{ \sum_{\ell =0}^{i-2 } (1 - \alpha C)^{r \ell}  }{ (\ln \Delta)^{10(k-1)} } \  \label{again_again_again} \\
	       & \le&    \sum_{ j = r+1 }^{k-1 } { j \choose r }  K^{ j-r}  (\ln \Delta)^{1 - (j-r) } \prod_{p = j }^{k-2 } (p+1)  \frac{( 1 + \frac{\delta}{ k^{100} })^{k-1-j }  }{ (1 +\delta- \frac{\delta }{k^{99} })^{ k-1}  C^{k-1-j}   }   \sum_{ \ell = 1 }^{i-1  } (1-\alpha C)^{j(i-\ell -1) + \ell r }    \nonumber \\
	      & &  + O \left( \frac{1}{ (\ln \Delta)^{5(k-1) } }  \right)   \label{tricksss} \\
	      & = & \sum_{ j = r+1 }^{k-1 } { j \choose r }  K^{ j-r}  (\ln \Delta)^{1 - (j-r) } \prod_{p = j }^{k-2 } (p+1)  \frac{( 1 + \frac{\delta}{ k^{100} })^{k-1-j }  }{ (1 +\delta- \frac{\delta }{k^{99} })^{ k-1}  C^{k-1-j } }   \sum_{ \ell = 1 }^{i-1  } (1-\alpha C)^{ (i-1)r + (i-\ell-1)(j-r)  }   \nonumber\\
	       & &  + O \left( \frac{1}{ (\ln \Delta)^{5(k-1) } }  \right)   \label{pre_error}\\
	       & \le &         \frac{\left(  1- \alpha C \right)^{ (i-1)r } }{ (1+\delta- \frac{\delta }{k^{99} })^{k-1} }  		\sum_{j=r+1 }^{k-1 } { j \choose r } K^{  j-r}   (\ln \Delta)^{1 -( j-r) } \prod_{p =j}^{k-2 }(p+1)  \frac{(1+ \frac{\delta }{k^{100}} )^{k-1-j}}{ C^{k-1-j } }   \sum_{\ell \ge 0 } (1-\alpha C)^{  \ell (j-r)  }   \label{error_avoidance} 		\\	 
		   & = &         \frac{\left(  1- \alpha C \right)^{ (i-1)r } }{ (1+\delta- \frac{\delta }{k^{99} })^{k-1} } 	\left( 		\sum_{j=r+1 }^{k-1 } { j \choose r } K^{  j-r}   (\ln \Delta)^{1 -( j-r) } \frac{\prod_{p =j}^{k-2 }(p+1)}{ C^{k-1-j } }   \frac{(1+  \frac{\delta }{k^{100} })^{k-1-j}}{ 1 - (1 - \alpha C )^{j-r} }    	\right)	\label{finally_two}		\\	 
	       	&\le &		 (1 - \alpha C )^{r (i-1)} \ln \Delta   \cdot     \frac{   (1+  \frac{\delta}{k^{100}} )^{k-1-r}  }{ (1+\delta- \frac{\delta }{k^{99} })^{k-1} C^{k-1-r }   } \prod_{p=r}^{ k-2}(p+1),		\label{final_finally_yes}
\end{eqnarray}
for sufficiently large $\Delta$, concluding the proof.  Note that in order to get~\eqref{again} we  upper bound $R_{i-1,r}'$ in the same way we upper bounded $R_{i,r}'$. We keep using the same steps to bound $R_{i-2,r}', R_{i-3,r}', \ldots$ until we  get~\eqref{again_again_again}. In deriving~\eqref{tricksss} 
we used that $R_{1,r} = 0 $ for every $r \in [k-2]$. In going from~\eqref{pre_error} to~\eqref{error_avoidance} we start the summation in the last term    from $\ell = 0$ instead from $\ell =1$ in order to subsume the  $O ( 1/ (\ln \Delta)^{5(k-1) }  )$ error term.   Finally, in going from~\eqref{finally_two} to~\eqref{final_finally_yes} 
we multiply the term that corresponds to $j= r+1$ in the summation by $(1+  \frac{\delta}{k^{100}} )  $ in order to subsume the terms of the summation that correspond to $j > r+1$.

\section{A sufficient pseudo-random property for coloring}\label{properties}

In this section we present the proof of Theorem~\ref{main_property}. To do so, we build on ideas of   Alon, Krivelevich and Sudakov~\cite{alon1999list} and show that the random hypergraph $H( k,n, d/{ n \choose k-1 })$ asymptotically almost surely admits a few useful  features.

The first lemma  we prove states that all  subgraphs of $H(k,n,d/{n \choose k-1 } )$ with not too many vertices are sparse and, therefore, of small degeneracy.
\begin{lemma}\label{simple_properties}
For every constant $k \ge 2 $,  there exists $d_{k} > 0$ such that for any  constant $d \ge d_{k}$,  the random hypergraph $H(k,n,d/ {n \choose  k-1 } )$ has the following property asymptotically almost surely: Every $s \le  n  d^{-\frac{1}{k-1} } $  vertices of $H$ span fewer than $s \left(\frac{ d }{  (\ln d)^2 }\right)^{ \frac{1}{k-1} }   $ hyperedges. Therefore, any subhypergraph of $H$ induced by a subset $V_0 \subset V$ of size $|V_0 |  \le n  d^{-\frac{1}{k-1} }$,
is $k \left(\frac{ d  }{ (\ln d)^2 } \right)^{\frac{1}{k-1} } $-degenerate.
\end{lemma}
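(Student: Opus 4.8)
The plan is a routine first--moment (union bound) computation for the sparsity claim, followed by a one--line deduction of the degeneracy bound. Throughout I write $S:=nd^{-1/(k-1)}$, $c:=\bigl(d/(\ln d)^2\bigr)^{1/(k-1)}$ and $p:=d/\binom{n}{k-1}$, and I note that $c\to\infty$ as $d\to\infty$ --- this is the one feature of the threshold on which the whole argument hinges.

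\textbf{First moment.} For each $s$ with $k\le s\le S$ let $X_s$ count the $s$--subsets of $V$ spanning at least $m_s:=\lceil sc\rceil$ hyperedges; since edge counts are integers, the asserted property is exactly $\sum_s X_s=0$. By linearity, $\ex[X_s]\le\binom{n}{s}\binom{\binom{s}{k}}{m_s}p^{m_s}$. Plugging in $\binom{n}{s}\le(en/s)^s$, $\binom{\binom{s}{k}}{m_s}\le\bigl(es^{k-1}/(k!\,c)\bigr)^{m_s}$ (using $\binom{s}{k}\le s^k/k!$ and $m_s\ge sc$), and $p\le C_kd/n^{k-1}$ for a constant $C_k=C_k(k)$ and all large $n$, one is left with a base $A:=\frac{eC_kd\,s^{k-1}}{k!\,c\,n^{k-1}}$ which, because $s\le S$ forces $s^{k-1}/n^{k-1}\le 1/d$, satisfies $A\le\frac{eC_k}{k!\,c}<1$ for $d$ large. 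Since $A<1$ one may replace $m_s$ by $sc$, and collecting the exponents of $n$ and $s$ yields the clean bound
\[
\ex[X_s]\ \le\ \bigl[\,B\,(s/n)^{\gamma}\,\bigr]^{s},\qquad \gamma:=(k-1)c-1>0,\quad B:=e\bigl(\tfrac{eC_kd}{k!\,c}\bigr)^{c}.
\]

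\textbf{Summing over $s$.} I choose $d_k$ large enough that, for $d\ge d_k$, both $\gamma>1$ and
\[
\beta\ :=\ B\,(S/n)^{\gamma}\ =\ e\bigl(\tfrac{eC_k}{k!\,c}\bigr)^{c}c\,(\ln d)^{2/(k-1)}\ \le\ \tfrac12,
\]
where the middle identity uses $d^{1/(k-1)}=c(\ln d)^{2/(k-1)}$ (i.e.\ $c^{k-1}=d/(\ln d)^2$) and the final bound holds because $(eC_k/(k!\,c))^c$ decays like $c^{-c}$ and dominates the polynomial factor. Now $h(s):=s\ln\!\bigl(B(s/n)^\gamma\bigr)$ has $h''(s)=\gamma/s>0$, so on $[k,S]$ the real function $h$ is convex and therefore maximized at an endpoint; both endpoint values are driven to $-\infty$, and in fact $S\,e^{h(k)}=\Theta\!\bigl(n^{1-\gamma k}\bigr)\to0$ (using $\gamma k>1$) while $S\,e^{h(S)}\le S\,2^{-S}\to0$ (using $S=\Theta(n)$). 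Hence $\sum_{s=k}^{\lfloor S\rfloor}\ex[X_s]\le S\bigl(e^{h(k)}+e^{h(S)}\bigr)\to0$, and Markov's inequality gives $\Pr[\exists\,s:X_s\ge1]\to0$, which is the first assertion.

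\textbf{Degeneracy, and the main obstacle.} For the degeneracy statement I argue directly from the sparsity property: given $V_0\subseteq V$ with $|V_0|\le S$, every nonempty $W\subseteq V_0$ spans fewer than $|W|\,c$ hyperedges, so $\sum_{v\in W}\deg_{H[W]}(v)=k\,|E(H[W])|<k|W|c$, whence some vertex of $W$ has $H[W]$--degree below $kc$; as this holds for all $W\subseteq V_0$, $H[V_0]$ is $kc=k(d/(\ln d)^2)^{1/(k-1)}$--degenerate. The only genuinely delicate point is making the first--moment sum uniform in $n$: for each fixed $s$ the expectation $\ex[X_s]$ is only polynomially small in $n$, whereas the number of admissible $s$ grows linearly, so the hard part is to keep the factor $(s/n)^{(k-1)c-1}$ intact (rather than bounding it immediately by its value at $s=S$) and then exploit the convexity of $h$ to pin the sum to its endpoints. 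Everything else is elementary; the one place where the locally--sparse threshold really enters is the inequality $A<1$, i.e.\ the fact that $c\to\infty$, which is exactly the gain afforded by the extra $(\ln d)^2$ over the naive density $\Theta(d^{1/(k-1)})$.
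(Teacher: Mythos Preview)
Your proof is correct and follows essentially the same first--moment/union--bound route as the paper: bound $\ex[X_s]\le\binom{n}{s}\binom{\binom{s}{k}}{m_s}p^{m_s}$, show the sum over $s\le nd^{-1/(k-1)}$ is $o(1)$, and then deduce degeneracy from the average--degree bound $k|E(H[W])|/|W|<kc$. The only cosmetic difference is in how the sum is controlled: the paper observes that sets of size $s<c^{1/(k-1)}$ cannot violate the sparsity condition (since $\binom{s}{k}<sc$) and then argues that every remaining summand is at most $n^{-(k-1)c/2}$, whereas you start the sum at $s=k$ and use convexity of $s\mapsto s\ln\bigl(B(s/n)^\gamma\bigr)$ to push the maximum to the endpoints; both devices are standard and yield the same conclusion.
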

\begin{proof}
Given the statement about the sparsity of any subhypergraph of $H$ incudced by a set of $s \le n d^{-\frac{1}{k-1} } $ vertices, the claim about its degeneracy  follows from the fact that its average (and, therefore its minimum) degree is at most
\begin{align*}
k \cdot \frac{ s \left(\frac{ d }{  (\ln d)^2 }\right)^{ \frac{1}{k-1} }    }{ s    } =  k \cdot  \left(\frac{ d }{  (\ln d)^2 }\right)^{ \frac{1}{k-1} }.
\end{align*}
So we are left with proving the statement regarding sparsity.

Letting $r  =  \left( \frac{  d}{ (  \ln  d)^2 } \right)^{ \frac{1}{k-1} }  $, we see that the  probability that there exists a subset $V_0 \subset V$ which violates the  statement of the lemma is at most
\begin{eqnarray}
\sum_{i=r^{ \frac{1}{k-1} } }^{  n  d^{-\frac{1}{k-1} } }  {n \choose i } {  {i \choose k } \choose r i  } \left( \frac{ d}{ { n   \choose k-1} }  \right)^{r i }  & \le & \sum_{ i=r^{ \frac{1}{k-1} } }^{ n  d^{-\frac{1}{k-1} }} \left[ \frac{ \mathrm{e} n}{ i }  \left( \frac{  \mathrm{e} i^{k-1} }{ r} \right)^r        \left(\frac{d }{ {n \choose k-1 }} \right)^{r }  \right]^i    \label{assertion_one}  \\
&= & \sum_{ i=r^{ \frac{1}{k-1} } }^{ n  d^{-\frac{1}{k-1} }} \left[ \frac{ \mathrm{e} n}{ i } \cdot \left( \frac{ \mathrm{e} i^{k-1} d }{r {n \choose k-1 }  }  \right)^{\frac{1}{k-1 }  }  \cdot \left( \frac{ \mathrm{e} i^{k-1} d }{r {n \choose k-1 }  }  \right)^{r- \frac{1}{k-1 }  } \right]^i  \nonumber \\
&\le & \sum_{i=r^{ \frac{1}{k-1} } }^{ n  d^{-\frac{1}{k-1} }} \left[  \mathrm{e}^{1 + \frac{1}{k-1} } (k-1)   \left(\frac{d}{r} \right)^{ \frac{1}{k-1} }   \left( \frac{ \mathrm{e} i^{k-1} d }{r {n \choose k-1 }  }  \right)^{r- \frac{1}{k-1 }  }  \right]^i  \nonumber  \\  
&\le & \sum_{i=r^{ \frac{1}{k-1} } }^{ n  d^{-\frac{1}{k-1} }} \left[  \mathrm{e}^{1 + \frac{1}{k-1} } (k-1)   \left(\frac{d}{r} \right)^{ \frac{1}{k-1} }   \left( \frac{ \mathrm{e}  d (k-1)^{k-1}}{r    } \cdot  \left(\frac{i  }{n }\right)^{k-1}   \right)^{r- \frac{1}{k-1 }  }   \right]^i  \nonumber  \\  
& =& \sum_{i=r^{ \frac{1}{k-1} } }^{ n  d^{-\frac{1}{k-1} }} \left[  \mathrm{e}^{1 + \frac{1}{k-1} } (k-1)   \left(\frac{d}{r} \right)^{ \frac{1}{k-1} }   \left( \frac{ \mathrm{e}  d (k-1)^{k-1}}{r    } \cdot  \left(\frac{i  }{n }\right)^{k-1}   \right)^{r- \frac{1}{k-1 }  }   \right]^i  \nonumber  
\end{eqnarray}
\begin{eqnarray}
& = & \sum_{i=r^{ \frac{1}{k-1} } }^{ n  d^{-\frac{1}{k-1} }} \left[  \mathrm{e}^{r+1} (k-1)^{r(k-1) }  d^r   r^{-r} \left(\frac{i  }{n }  \right)^{r(k-1) -1  }   \right]^i \nonumber \\
& =&  \sum_{i=r^{ \frac{1}{k-1} } }^{ n  d^{-\frac{1}{k-1} }} \left[  \mathrm{e} \left( \mathrm{e}  (k-1)^{(k-1) }  d   r^{-1} \right)^r \left(\frac{i  }{n }  \right)^{r(k-1) -1  }   \right]^i \nonumber \\
&= & \sum_{i=r^{ \frac{1}{k-1} } }^{ n  d^{-\frac{1}{k-1} }} \left[ A(k,d) \left(\frac{i  }{n }  \right)^{r(k-1) -1  }   \right]^i, \label{magka_ti_trexei}
\end{eqnarray}
where $A(k,d) =   \mathrm{e} \left( \mathrm{e}  (k-1)^{(k-1) }  d   r^{-1} \right)^r $. Note that in the lefthand side of~\eqref{assertion_one}  we used the fact that any subset of vertices of size $s < r^{ \frac{1}{ k-1} } $ cannot violate the assertion of the lemma, since it can span at most ${s \choose k } \le  s^k = s^{k-1}\cdot s < r s$ hyperedges. Moreover,throughout our derivation we used  that  for any pair of positive integers $\alpha, \beta$,  we have  $\left( \frac{ \alpha}{ \beta }  \right)^{\beta }  \le {\alpha \choose \beta }  < \left(  \frac{\alpha \cdot \mathrm{e}  }{ \beta } \right)^{\beta }   $.

We claim now that the function:
\begin{align}
F(i, d, k, n) =  \left[  A(k,d) \left(\frac{i  }{n }  \right)^{r(k-1) -1  }   \right]^i
\end{align}
is decreasing in $i$ when $i \in [r^{\frac{1}{k-1}}, n d^{-\frac{1}{k-1}} ]$ . To prove this, we examine the partial derivative of 
\begin{align*}
\ln(F(i,d,k,n) ) =  i \left(  \ln A(k,d) +   \left(r(k-1) - 1 \right)  \ln  \frac{i}{n} \right)
\end{align*}
with respect to $i$. In particular, we have that:
\begin{align*}
\frac{\partial }{ \partial i } ( \ln F(i,r,k,n) )   &= \ln A(k, d)  +  (r(k-1)-1) \left( \ln \frac{  i}{   n } + 1 \right) \nonumber \\
							       & \le  \ln  A(k,d) +  (r(k-1)-1) \left( - \frac{1}{k-1}  \ln d + 1 \right) \nonumber \\
							       & =  1 + r  \left( 1+ (k-1) \ln (k-1) + \ln \frac{d}{r} \right) + r \left( - \ln d + (k-1) \right) +\left( \frac{1}{k-1} \ln d -1 \right) \nonumber \\
							       & =  \frac{1}{k-1} \ln d  + r \left( 1+ (k-1) \ln (k-1) + (k-1) - \ln r  \right) \nonumber \\
							       & =  \frac{1}{k-1} \ln d  - \left( \frac{d }{ (\ln d)^2 } \right)^{ \frac{1}{k-1} }\left( \frac{1}{k-1} \ln \left( \frac{ d}{ ( \ln d )^2  }  \right)       - 1 - (k-1) ( 1 + \ln (k-1) ) \right)
\end{align*}
which is negative for sufficiently large $d$. This concludes the proof of our statement regarding the monotonicity of $F(i,d,k,n)$.

The latter monotonicity statement implies that the dominating term in the sum of~\eqref{magka_ti_trexei} is the term corresponding to $i =r^{ \frac{1}{k-1} } $. Therefore,~\eqref{magka_ti_trexei} implies that
\begin{align*}
\sum_{i=r^{ \frac{1}{k-1} } }^{  n  d^{-\frac{1}{k-1} } }  {n \choose i } {  {i \choose k } \choose r i  } \left( \frac{ d}{ { n   \choose k-1} }  \right)^{r i } < n \cdot  \left(  A(k,d)  \left(  \frac{ r^{ \frac{1}{k-1}}     }{ n } \right)^{r(k-1) - 1 }  \right)^{r^{\frac{1}{k-1}} }  = o(1)
\end{align*}
for sufficiently large $d$, concluding the proof.

\end{proof}

Next we  show that, that for any constant $c$, the number of vertices of  $H( k,n, d/{ n \choose k-1 }  )$ that have  degree $c$ essentially behaves as a Poisson random variable with mean $d$.

\begin{lemma}\label{degree_dist} 
For constants $c \ge1$, $k \ge 2$ and $d$ sufficiently large,  let $X_c$ denote the number of vertices of degree $c$ in $H(k,n,  d/ {n \choose k -1 }   )$. Then, asymptotically almost surely,
\begin{align*}
X_c  \le \frac{ d^{ c} \mathrm{e}^{- d}  }{ c! } n \left( 1 + O\left(  \frac{ \log n}{ \sqrt{n} }  \right) \right). 
\end{align*}
\end{lemma}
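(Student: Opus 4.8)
The plan is a second-moment computation. Write $X_c = \sum_{v \in V} \mathbf{1}\{d_H(v) = c\}$, where $d_H(v)$ denotes the degree of $v$. Each of the $N := \binom{n-1}{k-1}$ potential hyperedges through a fixed vertex $v$ is present independently with probability $p := d/\binom{n}{k-1}$, so $d_H(v) \sim \mathrm{Bin}(N,p)$. First I would estimate $\ex[X_c]$. Using $Np = \tfrac{n-k+1}{n}\,d = d\,(1+O(1/n))$ together with the routine expansions $\binom{N}{c}p^c = \tfrac{(Np)^c}{c!}\bigl(1+O(1/N)\bigr)$ and $(1-p)^{N-c} = e^{-Np}\bigl(1+O(Np^2)\bigr)$, all error terms being $O(1/n)$ (note $p = \Theta(n^{-(k-1)})$), one gets
\[
\Pr[d_H(v) = c] = \frac{d^c e^{-d}}{c!}\Bigl(1 + O\bigl(1/n\bigr)\Bigr),
\qquad\text{hence}\qquad
\ex[X_c] = \frac{d^c e^{-d}}{c!}\,n\,\Bigl(1+O(1/n)\Bigr).
\]

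The core of the argument is to show $\mathrm{Var}(X_c) = O(n)$. The diagonal contributes $\sum_v \mathrm{Var}\bigl(\mathbf{1}\{d_H(v)=c\}\bigr) \le n$. For a pair $u \ne v$ I would partition the potential hyperedges into those containing both $u$ and $v$ (there are $m := \binom{n-2}{k-2}$ of them), those containing $u$ but not $v$ (there are $N' := \binom{n-2}{k-1}$, with $N = N'+m$), and symmetrically for $v$; letting $A, B_u, B_v$ be the numbers of present edges in these three classes --- mutually independent binomials --- we have $d_H(u) = A + B_u$ and $d_H(v) = A + B_v$, so
\[
\Pr[d_H(u)=c,\ d_H(v)=c] \;=\; \sum_{a=0}^{c}\Pr[A=a]\,\Pr[B_u=c-a]\,\Pr[B_v=c-a].
\]
The point is that $mp = O(1/n)$ because $\binom{n-2}{k-2}/\binom{n}{k-1} = O(1/n)$: thus the $a \ge 1$ terms sum to at most $\sum_{a\ge 1}(mp)^a = O(1/n)$, while in the $a=0$ term the three correction factors $(1-p)^m$, $\binom{N'}{c}/\binom{N}{c}$ and $(1-p)^{N'-N}$ are each $1+O(1/n)$, so it equals $\Pr[d_H(u)=c]\,\Pr[d_H(v)=c]\bigl(1+O(1/n)\bigr)$. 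Since $\Pr[d_H(v)=c]$ is a positive constant depending only on $d$ and $c$, this gives $\mathrm{Cov}\bigl(\mathbf{1}\{d_H(u)=c\},\mathbf{1}\{d_H(v)=c\}\bigr) = O(1/n)$ for every such pair, and summing over the fewer than $n^2$ pairs yields $\mathrm{Var}(X_c) = O(n)$.

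Finally, Chebyshev's inequality with deviation $\sqrt{n}\log n$ gives
\[
\Pr\bigl[X_c \ge \ex[X_c] + \sqrt{n}\log n\bigr] \;\le\; \frac{\mathrm{Var}(X_c)}{n\log^2 n} \;=\; O\!\left(\frac{1}{\log^2 n}\right) \;=\; o(1).
\]
So, asymptotically almost surely, $X_c \le \ex[X_c] + \sqrt{n}\log n$; since $\ex[X_c] = \frac{d^c e^{-d}}{c!}\,n\,(1+O(1/n))$ and $\frac{d^c e^{-d}}{c!}$ is a positive constant, we have $\sqrt{n}\log n = \frac{d^c e^{-d}}{c!}\,n\cdot O(\log n/\sqrt{n})$, and the claimed bound follows. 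I expect the only genuinely delicate step to be the covariance estimate above --- specifically, verifying that the shared-edge variable $A$ contributes only lower-order terms and that passing from the true marginal $\mathrm{Bin}(N,p)$ to the ``$u$-only'' marginal $\mathrm{Bin}(N',p)$ costs merely a $1+O(1/n)$ factor; everything else is routine.
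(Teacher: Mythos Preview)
Your proposal is correct and follows essentially the same approach as the paper: a second-moment computation, estimating $\ex[X_c]$ via the binomial expansion, bounding the pairwise covariance by conditioning on the number of hyperedges shared by the two vertices, and then applying Chebyshev with deviation $\sqrt{n}\log n$. Your treatment of the covariance --- isolating the $a=0$ term and bounding the $a\ge 1$ tail by $\sum_{a\ge 1}(mp)^a = O(1/n)$ --- is in fact a bit cleaner than the paper's version, but the underlying argument is the same.
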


\begin{proof}
The upper bound on the expectation follows from standard ideas for estimation of the degree distribution of random graphs (see for example the proof of Theorem 3.3 in~\cite{frieze2016introduction}  for the case $k=2$). Assume that the vertices of $H(k,n,d/\binom{n}{k-1})$ are labeled $1,2,...,n$. Then,

\begin{align}
\mathbb{E}[X_{c}] &= n \Pr[\mathrm{deg}(1)=c]  \nonumber \\
&= n \binom{\binom{n-1}{k-1}}{c} \left(\frac{d}{\binom{n}{k-1}}\right)^{c} \left(1-\frac{d}{\binom{n}{k-1}}\right)^{\binom{n-1}{k-1}-c} \nonumber  \\
&\le n \frac{\binom{n-1}{k-1}^{c}}{c!} \left(\frac{d}{\binom{n}{k-1}}\right)^{c} \exp\left(- \left(\binom{n-1}{k-1}-c\right) \frac{d}{\binom{n}{k-1}}\right) \label{first_ineq_rand}  \\
&\le n \frac{d^{c}}{c!} \exp\left(- d \left(1 - \frac{k-1}{n} \right) + \frac{d c }{ \binom{n}{k-1}}     \right) \label{second_ineq_rand}   \\
&= n \frac{d^{c}e^{-d}}{c!}  \exp\left(\frac{d(k-1)}{n}  + \frac{d c }{ \binom{n}{k-1}}\right) \nonumber \\
& \le n \frac{d^{c}e^{-d}}{c!} \left(1+O\left(\frac{1}{n}\right)\right) \nonumber.
\end{align}
Note that in deriving~\eqref{first_ineq_rand} we used the fact that for every pair of positive integers $\alpha, \beta$ we have ${\alpha \choose \beta } \le \frac{ \alpha^{\beta} }{ \beta! } $. In deriving~\eqref{second_ineq_rand} we used the fact that
\begin{align*}
\frac{\binom{n-1}{k-1} }{ \binom{n}{k-1} }  = \frac{  \frac{ (n-1)! }{( n-1-(k-1))! (k-1)!   }   }{ \frac{ n! }{( n-(k-1))! (k-1)!   }} = \frac{n-(k-1) }{n} = 1 - \frac{k-1}{n}.
\end{align*}

To show concentration of $X_{c}$ around its expectation, we will use Chebyshev's inequality. In order to do so, we need to estimate the second moment of $X_{c}$ and, in particular, the joint probability $\Pr[ \mathrm{deg}(1)=\mathrm{deg}(2)=c]$. Let $N = \binom{n-1}{k-1}$ be the number of potential hyperedges containing a specific vertex, and $M = \binom{n-2}{k-2}$ be the number of potential hyperedges containing two specific vertices. Let $p=\frac{d}{\binom{n}{k-1}}$.

The joint probability can be written by conditioning on the number of hyperedges $l$ that contain both vertex 1 and vertex 2:

\begin{equation}
\Pr[\mathrm{deg}(1)=c, \mathrm{deg}(2)=c] = \sum_{l=0}^{c} \binom{M}{l} p^l (1-p)^{M-l} \left[ \binom{N-M}{c-l} p^{c-l} (1-p)^{(N-M)-(c-l)} \right]^2.
\end{equation}

We split this summation into the term for $l=0$ and the terms for $l \ge 1$.

\textbf{Case $l=0$:}
In this case, vertices 1 and 2 share no hyperedges. The corresponding term is:
$$
T_0 = 1 \cdot (1-p)^{M} \left[ \binom{N-M}{c} p^{c} (1-p)^{N-M-c} \right]^2.
$$
Notice that $\Pr[\mathrm{deg}(1)=c] = \binom{N}{c} p^c (1-p)^{N-c}$. Since $M = \Theta(n^{k-2})$ and $N = \Theta(n^{k-1})$, the ratio $\frac{\binom{N-M}{c}}{\binom{N}{c}}$ is $1+O(1/n)$. Similarly, $(1-p)^{-M} = 1+O(1/n)$. Therefore, we can express $T_0$ as:
$$
T_0 = (\Pr[\mathrm{deg}(1)=c])^2 \left(1 + O\left(\frac{1}{n}\right)\right).
$$
Crucially, the constant factor here is exactly 1, with an error term of order $O(1/n)$.

\textbf{Case $l \ge 1$:}
For $l \ge 1$, we observe that $\binom{M}{l} \le M^l$ and $\binom{N-M}{c-l} \le N^{c-l}$. The term $T_l$ in the summation is bounded by:
\begin{align*}
T_l & \le M^l p^l \cdot \left(  {N-M \choose c }  p^{c-l} \right]^2 \cdot (1-p)^{2(N-c) -M  + \ell} \\
    & = M^l p^l (1-p)^{-M+\ell }  \cdot \left( \left(1+O\left(\frac{1}{n}\right) \right) { N \choose c}  p^c (1-p)^{N-c}  \right)^2										\\
      &  = O(n^{(k-2)l}) \cdot O(n^{-(k-1)l}) \left(1 + O\left( \frac{1}{ n} \right) \right) \cdot (\Pr[\mathrm{deg}(1)=c])^2 \\
       & = O(n^{-l}) (\Pr[\mathrm{deg}(1)=c])^2.
\end{align*}
Summing over $l \ge 1$, the contribution is dominated by $l=1$, which is $O(1/n)$ relative to the squared probability. Combining these cases, we obtain:
\begin{align}
\Pr[\mathrm{deg}(1)=c, \mathrm{deg}(2)=c] &= T_0 + \sum_{l=1}^c T_l \nonumber \\
&= (\Pr[\mathrm{deg}(1)=c])^2 \left(1 + O\left(\frac{1}{n}\right)\right) + O\left(\frac{1}{n}\right)(\Pr[\mathrm{deg}(1)=c])^2 \nonumber \\
&= \Pr[\mathrm{deg}(1)=c] \Pr[\mathrm{deg}(2)=c] \left(1 + O\left(\frac{1}{n}\right)\right). \label{joint_pair_approx}
\end{align}

Now, letting $I_{j}$ denote the indicator random variable which equals 1 if vertex $j$ has degree $c$ and 0 otherwise, we calculate the variance:
\begin{align}
\mathrm{Var}[X_{c}] &= \mathbb{E}[X_{c}^{2}] - (\mathbb{E}[X_{c}])^{2}  \nonumber \\
& =  \mathbb{E}\left[ \left( \sum_{j=1}^{n} I_j \right)^2\right]  - (\mathbb{E}[X_{c}])^{2}   \nonumber \\
& = \mathbb{E} \left[ \sum_{i=1}^n \sum_{j=1}^n I_i I_j \right]  - \left(\mathbb{E}\left[ \sum_{i=1}^n I_i  \right] \right)^{2}  \nonumber  \\
&= \sum_{i=1}^{n} \sum_{j=1}^{n} (\Pr[\mathrm{deg}(i)=c, \mathrm{deg}(j)=c] - \Pr[\mathrm{deg}(i)=c]\Pr[\mathrm{deg}(j)=c]) \nonumber \\
&\le \sum_{i=1}^n \sum_{j=1}^n \Pr[\mathrm{deg}(i)=c]\Pr[\mathrm{deg}(j)=c] \left(1 + O\left(\frac{1}{n}\right)\right) - \Pr[\mathrm{deg}(i)=c]\Pr[\mathrm{deg}(j)=c]    \nonumber\\  
& + \sum_{i=1}^n \Pr[\mathrm{deg}(i)=c ] - (\Pr[ \mathrm{deg}(i) = c ])^2   \label{jello_above}  \\
& \le O\left(  \frac{1}{n} \right)\sum_{i=1}^n \sum_{j=1}^n \Pr[\mathrm{deg}(i)=c]\Pr[\mathrm{deg}(j)=c]   + \sum_{i=1}^n \Pr[ \mathrm{deg}(i) = c ]  \nonumber    \\
&\ =  O\left(  \frac{1}{n} \right)\left(\mathbb{E}[X_c]\right)^2  +   \mathbb{E}[X_c] \nonumber   \\
&\le A n \nonumber
\end{align}
for some constant $A=A(c,d)$, since $\mathbb{E}[X_c] = O(n)$. Note than in deriving~\eqref{jello_above} we used~\eqref{joint_pair_approx}.

Finally, applying Chebyshev's inequality, we obtain that for any $t>0$,
$$
\Pr[|X_{c}-\mathbb{E}[X_{c}]| \ge t\sqrt{n}] \le \frac{\mathrm{Var}[X_c]}{t^2 n} \le \frac{A}{t^{2}}.
$$
The proof is concluded by choosing $t=\log n$, which implies $X_c \le \frac{d^c e^{-d}}{c!} n (1+o(1))$ asymptotically almost surely.
\end{proof}

 Lemma~\ref{degree_dist} implies the following useful corollary.
\begin{corollary}\label{degree_dist_corollary}
For any constants $\delta  \in (0,1), k \ge 2, d > 0$,    let   $X = X(\delta,  k, d) $ denote the random variable equal to the number of vertices in $H(k,n, \ d/ {n \choose k -1 }   )$ whose degree is in $[ (1+\delta) d, 3 (k-1)^{k-1} d ]$. There exists a constant $d_{\delta}  > 0$ such that  if $d \ge d_{\delta}$ then, asymptotically almost surely,  $X  \le  \frac{ n }{ d^2}.$
\end{corollary}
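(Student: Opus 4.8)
The plan is to deduce the corollary from Lemma~\ref{degree_dist} by a union bound over the (constantly many) relevant degree values, followed by an elementary Chernoff estimate for the upper tail of a Poisson random variable.

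First I would write $X = \sum_{c=\lceil (1+\delta)d\rceil}^{\lfloor 3(k-1)^{k-1}d\rfloor} X_c$, with $X_c$ as in Lemma~\ref{degree_dist}. Since $d$ and $k$ are constants, the number of summands is $O(d)=O(1)$, so a union bound shows that, asymptotically almost surely, the conclusion of Lemma~\ref{degree_dist} holds simultaneously for every $c$ in this range. Absorbing the finitely many implied constants into a single one, we obtain
\[
X \;\le\; n\!\left(1 + O\!\left(\frac{\log n}{\sqrt{n}}\right)\right)\sum_{c=\lceil (1+\delta)d\rceil}^{\lfloor 3(k-1)^{k-1}d\rfloor}\frac{d^{c}\mathrm{e}^{-d}}{c!}\;\le\; n\!\left(1 + O\!\left(\frac{\log n}{\sqrt{n}}\right)\right)\Pr\big[\mathrm{Po}(d)\ge (1+\delta)d\big],
\]
where $\mathrm{Po}(d)$ is a Poisson random variable of mean $d$; extending the sum to all $c\ge(1+\delta)d$ only increases it.

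Next I would bound the Poisson upper tail. A standard Chernoff bound gives $\Pr[\mathrm{Po}(d)\ge (1+\delta)d]\le \mathrm{e}^{-d\, h(\delta)}$ with $h(\delta):=(1+\delta)\ln(1+\delta)-\delta>0$ for every $\delta\in(0,1)$. Hence there is a constant $d_{\delta}>0$ such that for all $d\ge d_{\delta}$ one has $\mathrm{e}^{-d\,h(\delta)}\le \tfrac{1}{2d^{2}}$. Plugging this into the display and using $1+O(\log n/\sqrt{n})\le 2$ for $n$ large, we get $X\le 2n\cdot\tfrac{1}{2d^{2}} = n/d^{2}$ asymptotically almost surely, which is exactly the claim.

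There is no genuine obstacle in this argument: the only points that need care are that the union bound is over $O(1)$ events (because $d,k$ are fixed constants), so it costs nothing, and that the error term $O(\log n/\sqrt{n})$ in Lemma~\ref{degree_dist} may hide a constant depending on $c$ and $d$ — harmless, since only finitely many values of $c$ occur and we may take the worst one. The one substantive step is simply recognizing the sum as a Poisson upper-tail probability and applying the standard Chernoff estimate to choose $d_{\delta}$.
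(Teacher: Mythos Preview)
Your argument is correct and follows essentially the same route as the paper: write $X=\sum_c X_c$, apply Lemma~\ref{degree_dist} simultaneously to the $O(1)$ relevant values of $c$ via a union bound, and then show $\sum_c d^c \mathrm{e}^{-d}/c!$ is at most $O(d^{-2})$ for $d$ large. The only cosmetic difference is that the paper bounds the last sum by applying Stirling's approximation termwise, whereas you identify it as a Poisson upper tail and invoke the standard Chernoff estimate; both yield the same conclusion.
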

\begin{proof} Let $X_{r}$ denote the number of vertices of degree $r$ in $H(k,n,d/\binom{n}{k-1})$. Since $k,d$ are constants, using Lemma 5.2 and Stirling's approximation we see that, asymptotically almost surely: \begin{align*} \sum_{r=(1+\delta)d}^{3(k-1)^{k-1}d}X_{r} &\le n\left(1+O\left(\frac{\log n}{\sqrt{n}}\right)\right) \sum_{r=(1+\delta)d}^{3(k-1)^{k-1}d} \frac{d^{r}e^{-d}}{r!} \\ &\le n(1+o(1)) \sum_{r=(1+\delta)d}^{3(k-1)^{k-1}d} \frac{d^{r}e^{-d}}{\sqrt{2\pi r}(r/e)^{r}} \\ &\le n \sum_{r=(1+\delta)d}^{\infty} \left( \frac{de}{r} \right)^r e^{-d} \end{align*} We observe that the terms in the sum are maximized at the lower bound $r = (1+\delta)d$. Substituting this value into the general term yields a geometric bound: \begin{align*} \left( \frac{de}{(1+\delta)d} \right)^{(1+\delta)d} e^{-d} &= \left( \frac{e}{1+\delta} \right)^{(1+\delta)d} e^{-d} \\ &= \left( \frac{e^{1+\delta}}{(1+\delta)^{1+\delta}} \right)^d e^{-d} \\ &= \left( \frac{e^{\delta}}{(1+\delta)^{1+\delta}} \right)^d.\end{align*} Using the standard inequality $(1+\delta)^{1+\delta} > e^{\delta}$ for $\delta > 0$, we set $\gamma = \frac{e^{\delta}}{(1+\delta)^{1+\delta}} < 1$. The sum is thus dominated by a geometric series with ratio less than 1, bounded by: $$ n \cdot B \cdot \gamma^d $$ for some constant $B$. Since $\gamma < 1$, for sufficiently large $d$ we have $B \gamma^d \le \frac{1}{d^2}$, and thus: $$ \sum_{r=(1+\delta)d}^{3(k-1)^{k-1}d}X_{r} \le \frac{n}{d^2}. $$ \end{proof}

Using Lemma~\ref{simple_properties} and Corollary~\ref{degree_dist_corollary} we show that, asymptotically almost surely,  only a small fraction of vertices of $H(k ,n, d/{n \choose k-1}  )$ have degree that significantly exceeds its average degree.
\begin{lemma}\label{part_b_random}
For every constants $k \ge 2 $ and $\delta  \in (0,1) $,  there exists $d_{k,\delta} > 0$ such that for any  constant $d \ge d_{k,\delta}$,  all but at most $\frac{2n } {  d^2 }$ vertices of   the random hypergraph $H(k,n,d/ {n \choose  k-1 } )$  have degree at most $(1+\delta) d$, asymptotically almost surely.
\end{lemma}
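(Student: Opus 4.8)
The plan is to split $\{v:\deg(v)>(1+\delta)d\}$ into the \emph{moderate} range $\deg(v)\in[(1+\delta)d,\,3(k-1)^{k-1}d]$ and the \emph{heavy} range $\deg(v)>3(k-1)^{k-1}d$, and to show that a.a.s.\ each range contributes at most $n/d^2$ vertices. The moderate range is handled for free: Corollary~\ref{degree_dist_corollary} states precisely that a.a.s.\ at most $n/d^2$ vertices of $H(k,n,d/{n \choose k-1})$ have degree in $[(1+\delta)d,\,3(k-1)^{k-1}d]$. So all the real work is in the heavy range.

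Write $Y:=|\{v:\deg(v)>3(k-1)^{k-1}d\}|$; the goal is $\Pr[Y>n/d^2]=o(1)$. First I would bound $\ex[Y]$. For a fixed $v$ the degree $\deg(v)$ is $\mathrm{Bin}\!\left({n-1 \choose k-1},\,\tfrac{d}{{n \choose k-1}}\right)$, which has mean $d\,\tfrac{n-k+1}{n}\le d$, so Lemma~\ref{binomial_bound} gives
\[
\Pr[\deg(v)>3(k-1)^{k-1}d]\;\le\;\frac{d^{\,3(k-1)^{k-1}d}}{\left(3(k-1)^{k-1}d\right)!}\;\le\;\left(\frac{e}{3(k-1)^{k-1}}\right)^{3(k-1)^{k-1}d}\;\le\;\left(\frac{e}{3}\right)^{3d},
\]
where the last two inequalities use Stirling together with the fact that $x\mapsto x(1-\ln x)$ is decreasing for $x\ge e$ and $3(k-1)^{k-1}\ge 3$. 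Since $(e/3)^{3}<1$, this is at most $d^{-4}$ once $d\ge d_{k,\delta}$, and hence $\ex[Y]\le n/d^4$.

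Next I would obtain concentration of $Y$ via a second–moment estimate, in the spirit of the proof of Lemma~\ref{degree_dist}. For distinct $u,v$, partition the potential edges meeting $\{u,v\}$ into the three \emph{disjoint}, hence independent, families of edges through $u$ but not $v$, through $v$ but not $u$, and through both, and let $A_u,A_v,B_{uv}$ be the corresponding edge counts, so that $\deg(u)=A_u+B_{uv}$, $\deg(v)=A_v+B_{uv}$, and $\ex[B_{uv}]=\tfrac{{n-2 \choose k-2}}{{n \choose k-1}}\,d=O(d/n)$. On the event $\{B_{uv}=0\}$ we have $\{\deg u>t\}=\{A_u>t\}$ and $\{\deg v>t\}=\{A_v>t\}$, which are independent and respectively dominated by $\{\deg u>t\}$, $\{\deg v>t\}$; hence, with $t=3(k-1)^{k-1}d$,
\[
\Pr[\deg u>t,\ \deg v>t]\;\le\;\Pr[B_{uv}\ge 1]+\Pr[\deg u>t]\,\Pr[\deg v>t]\;\le\;\Pr[\deg u>t]\,\Pr[\deg v>t]+O(d/n).
\]
Summing over all ordered pairs $u\ne v$ and adding the diagonal term $\ex[Y]$ yields $\ex[Y^2]\le \ex[Y]+\ex[Y]^2+O(dn)$, i.e.\ $\mathrm{Var}[Y]=O(dn)$. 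Since $\ex[Y]\le n/d^4\le n/(2d^2)$, Chebyshev's inequality gives $\Pr[Y>n/d^2]\le \mathrm{Var}[Y]\big/(n/(2d^2))^2=O(d^5/n)=o(1)$. Combined with the moderate range, this shows $|\{v:\deg(v)>(1+\delta)d\}|\le 2n/d^2$ a.a.s.

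The main obstacle is exactly this concentration step: $\ex[Y]\le n/d^4$ together with Markov's inequality only yields $\Pr[Y>n/d^2]\le d^{-2}$, a constant bounded away from $0$, not $o(1)$. So one genuinely needs that the degrees of distinct vertices are nearly independent, which is what the edge–splitting above (and the analogous second–moment bound in the proof of Lemma~\ref{degree_dist}) supplies; everything else is a routine first–moment/Stirling computation. A slightly different route for the heavy range would be to first show, by a union bound over vertices using Lemma~\ref{binomial_bound} (each $\Pr[\deg(v)\ge\log n]=o(1/n)$), that a.a.s.\ $\max_v\deg(v)\le\log n$, and then sum the a.a.s.\ upper bounds of Lemma~\ref{degree_dist} over the now bounded range $r\in(3(k-1)^{k-1}d,\log n]$; this requires the error term in Lemma~\ref{degree_dist} to be uniform in $r$ over that range, which again reduces to a second–moment estimate of the same flavour.
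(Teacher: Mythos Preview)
Your proof is correct, but the route for the heavy range differs from the paper's. You both split at the threshold $3(k-1)^{k-1}d$ and invoke Corollary~\ref{degree_dist_corollary} for the moderate range. For the heavy range $Y=|\{v:\deg(v)>3(k-1)^{k-1}d\}|$, the paper argues structurally: if $|Y|>n/d^2$, pick exactly $n/d^2$ heavy vertices $S$; by Lemma~\ref{simple_properties} the induced hypergraph $H[S]$ is sparse, so $\ge \tfrac{2.9(k-1)^{k-1}n}{d}$ hyperedges must cross from $S$ to $V\setminus S$, and a direct union bound over choices of $S$ shows the probability of such a configuration is $o(1)$. You instead establish $\ex[Y]\le n/d^4$ via Lemma~\ref{binomial_bound} and Stirling, and then obtain concentration by a second-moment estimate exploiting that $A_u,A_v,B_{uv}$ are independent and $\ex[B_{uv}]=O(d/n)$, yielding $\mathrm{Var}[Y]=O(dn)$ and hence $\Pr[Y>n/d^2]=O(d^5/n)$ by Chebyshev.

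Your approach is self-contained (it does not rely on the sparsity lemma, Lemma~\ref{simple_properties}) and mirrors the second-moment computation in Lemma~\ref{degree_dist}; the paper's approach avoids any correlation estimate by reusing the already-proved structural lemma. A small stylistic remark: the phrase ``respectively dominated by $\{\deg u>t\},\{\deg v>t\}$'' is slightly loose; what you use is that $A_u$ is stochastically dominated by $\deg(u)$ (fewer Bernoulli trials), so that $\Pr[A_u>t]\Pr[A_v>t]\Pr[B_{uv}=0]\le\Pr[\deg u>t]\Pr[\deg v>t]$.
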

\begin{proof}
Corollary~\ref{degree_dist_corollary} implies that the number of vertices with degree in the interval $[(1+\delta)d, 3 (k-1)^{k-1} d] $ is at most $\frac{n}{d^2}$, for sufficiently large $d$. 

Suppose now  there are more than $ \frac{n}{ d^2} $ vertices with degree at least $3 (k-1)^{k-1} d$. Denote by $S$ a set containing exactly $\frac{n}{d^2} $ such vertices. According to Lemma~\ref{simple_properties}, asymptotically almost surely, the induced subhypergraph $H[S]$ 
has at most 
\begin{align*}
e(H[S] ) \le  \left(\frac{d }{ (\ln d)^2 } \right)^{\frac{1}{k-1} }  |S| =   \frac{ n}{ d^{2 - \frac{1}{ k-1}  }   (\ln d)^{ \frac{2}{k-1 }  }  }    
\end{align*}
 hyperedges. Therefore, the number of hyperedges between the sets of vertices $S$ and $V \setminus S$ is at least 
 \begin{align*}
3 (k-1)^{ k-1} d |S| - k e (H[S]  )   \ge  \frac{2.9   (k-1)^{k-1}  n}{d }  =: N.
 \end{align*}
 for sufficiently large $d$. However, the probability that $H( k,n, d/{n \choose k-1 } )$ contains such a  subhypergraph is at most
\begin{align*}
{n \choose \frac{ n}{ d^2 }  }  { \frac{ n^k}{ d^2 }   \choose  N} \left( \frac{d}{{n \choose k-1 } }  \right)^{ N}  \le   \left( \mathrm{e} d^2 \right)^{\frac{n }{ d^2  }} 
\left( \frac{ n^k \mathrm{e}  }{ d^2 N}  \cdot \frac{d }{  {n \choose k-1 }    }  \right)^{ N}  = o(1),
\end{align*}
for sufficiently large $d$. Note that   in deriving the final equality we used that for any pair of integers $\alpha, \beta$,  we have that ${\alpha \choose \beta }  \ge \left( \frac{ \alpha}{ \beta }  \right)^{\beta } $.  Therefore, asymptotically almost surely there are at most $\frac{n }{ d^2 } $ vertices in $G$ with degree greater than $3 (k-1)^{k-1} d$, concluding the proof.
\end{proof}

Finally, we show that the neighborhood of a typical vertex of $H(k,n, d/{n \choose k-1 }  )$ is locally  tree-like.
\begin{lemma}\label{part_c_random}
For every constants $k\ge 2, \delta  \in (0,1) $, asymptotically almost surely,  the random hypergraph $H(k,n,d/ {n \choose  k-1 } )$  has a subset $U \subseteq V(H)$ of size at most $n^{1-\delta}$ such that the induced hypergraph $H[V \setminus U] $ is of  girth at least $5$. 
\end{lemma}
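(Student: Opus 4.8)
The plan is a first--moment argument, the hypergraph analogue of the standard fact that a sparse random graph has only $O(1)$ short cycles. Recall that $H$ having girth at least $5$ is exactly the absence of $i$-cycles for $i\in\{2,3,4\}$, and that an $i$-cycle is by definition a set of $i$ distinct hyperedges whose union has at most $i(k-1)$ vertices; call such a configuration a \emph{short cycle}. I would let $N$ be the number of short cycles present in $H=H(k,n,d/{n\choose k-1})$, and ultimately take $U$ to be the union of the vertex sets of all of them.

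First I would bound $\ex[N]$. For a fixed $i\in\{2,3,4\}$, the number of $i$-element sets of potential hyperedges whose union spans at most $i(k-1)$ vertices is at most ${n\choose i(k-1)}{{i(k-1)\choose k}\choose i}=O_k\!\left(n^{i(k-1)}\right)$: first pick an $i(k-1)$-subset of $V$ containing the union (padding arbitrarily if the union is smaller, which only overcounts), then pick the $i$ hyperedges inside it. Each such set appears in $H$ with probability $p^i$, where $p=d/{n\choose k-1}$, and since ${n\choose k-1}=\Theta_k(n^{k-1})$ we have $p^i=\Theta_k\!\left(d^i n^{-i(k-1)}\right)$. Summing over $i$,
\[
\ex[N]\ \le\ \sum_{i=2}^{4} {n\choose i(k-1)}{{i(k-1)\choose k}\choose i}\, p^i\ =\ O_k\!\Big(\sum_{i=2}^{4} d^i\Big)\ =\ O_{k,d}(1).
\]
(The dominant contribution is the loose $i$-cycle, which spans exactly $i(k-1)$ vertices and has expected count $\Theta(1)$; any configuration on fewer vertices contributes only $O(1/n)$.)

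Since $\ex[N]=O(1)=o(n^{1-\delta})$, Markov's inequality gives that asymptotically almost surely $N\le n^{1-\delta}/(4(k-1))$. On that event set $U=\bigcup\{\,V(W):W\text{ a short cycle of }H\,\}$; as every short cycle spans at most $4(k-1)$ vertices, $|U|\le 4(k-1)N\le n^{1-\delta}$. Finally, $H[V\setminus U]$ has girth at least $5$: an $i$-cycle of $H[V\setminus U]$ with $2\le i\le 4$ would be $i$ hyperedges contained in $V\setminus U$ spanning at most $i(k-1)$ vertices, i.e.\ a short cycle of $H$, so all of its vertices would belong to $U$ — impossible, since they lie in $V\setminus U$. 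This proves the lemma.

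There is essentially no serious obstacle; the only place requiring (mild) care is the combinatorial count of $i$-tuples of potential hyperedges with small union, and the observation that the extremal term is the loose $i$-cycle, whose expected count is $\Theta(1)$ rather than $o(1)$ — but a constant expectation is already far below $n^{1-\delta}$, so Markov closes the argument with enormous slack. (The same proof works verbatim even if $d$ is allowed to grow with $n$, as long as $d^4=o(n^{1-\delta})$.)
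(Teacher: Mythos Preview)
Your proof is correct and takes essentially the same approach as the paper: bound the expected number of $2$-, $3$-, and $4$-cycles by $O_{k,d}(1)$ via a first-moment count, apply Markov's inequality, and let $U$ be the union of the vertex sets of all short cycles. The only cosmetic difference is that the paper sums over the exact size $s$ of the spanned vertex set, whereas you fix a superset of size $i(k-1)$; both give the same $O(1)$ bound and the rest of the argument is identical.
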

\begin{proof}

Let $Y_2, Y_3, Y_4$, denote  the  number of $2$-, $3$- and $4$-cycles in $H(n,k,  d/ {n \choose k-1 }  )$, respectively. A straightforward calculation reveals that for $i \in \{2,3,4\}$:
\begin{align*}
\ex[Y_i ]   \le   \sum_{ s = 1 }^{ i(k-1)}   {n \choose s}  { { s \choose k-1 }   \choose   i }   \left( \frac{ d}{ {n \choose k-1 }   } \right)^{ i} &  \le  i (k-1)  \left( \frac{ n  \cdot \mathrm{e}}{ i (k-1) }  \right)^{ i (k-1) } \cdot  \left( \frac{{i(k-1) \choose k-1 }  }{ i} \right)^{i}  \cdot  \left( \frac{ d}{ {n \choose k-1 }   } \right)^{ i} \\
															& \le  i (k-1)  \left( \frac{ n  \cdot \mathrm{e}}{ i (k-1) }  \right)^{ i (k-1) }  \cdot \left( 	\frac{(i \cdot \mathrm{e} )^{k-1} }{ i} \right)^i \cdot \left(   \frac{d   (k-1)^{k-1}}{ n^{k-1} }   \right)^{i} \\
															& = i (k-1) \left(   \frac{ \mathrm{e}^{2(k-1) } d  }{ i}  \right)^i = O(1).
\end{align*}
By Markov's inequality this implies that $ Y_2 + Y_3 + Y_4 \le n^{ 1-  \sqrt{\delta }  } $ asymptotically almost surely.  Denote by $U$ the union of all $2$-, $3$- and $4$- cycles in $H$. Then the induced subhypergraph $H[V \setminus U ]$ has girth at least $5$ and, asymptotically almost surely, $|U| \le n^{1-\delta}$.

\end{proof}

We are now ready to prove Theorem~\ref{main_property}.
\begin{proof}[Proof of Theorem~\ref{main_property}] 
Our goal will be to find a subset $U \subset V$ of size $|U| \le n d^{ - \frac{1}{k-1 } } $ that (i) contains all cycles of length at most $4$ and every vertex of degree  more than $(1+\delta) d$; and (ii) such that,  every vertex $v $ in $ V \setminus U $ has at most $9 k^2 \left( \frac{ d }{ (\ln d)^2} \right)^{ \frac{1}{k-1} } = o \left(\left(  \frac{ d }{ \ln d}\right)^{ \frac{1}{k-1} }  \right)  $ neighbors in $U$.  Note that in this case, according to Lemma~\ref{simple_properties}, $H[U]$ is $k \left(\frac{ d  }{ (\ln d)^2 } \right)^{\frac{1}{k-1} } $-degenerate, concluding the proof assuming $d$ is sufficiently large. A similar idea has been used in~\cite{alon1997concentration,alon1999list,luczak1991chromatic}.

Towards that end, let $U_1$ be the set of vertices of degree more than $(1+\delta) d$, and $U_2 $ the set of vertices that are contained in a $2$-,$3$- or a $4$-cycle. Notice that $U_1, U_2,$ can be found in polynomial time and, according to Lemmas~\ref{part_b_random} and~\ref{part_c_random}, the size of $U_0:= | U_1 \cup U_2 |$ is at most $\frac{3n}{ d^2}$ for sufficiently large $n$ and $d$.

We now start with $U := U_0 $ and as long as there exists a vertex $v \in V \setminus U$ having at least $9 k^2 \left( \frac{ d }{ (\ln d)^2} \right)^{ \frac{1}{k-1} } $ neighbors in $U$ we do the following. Let $S_v = \{u_1, u_2, \ldots, u_{  N }  \}$ be the neighbors of $v$ in $U$. We choose an   arbitrary hyperedge   $h$  that contains $v$ and $u_1$ and update $U $ and $S_v$  by defining $U:= U \cup h $ and  $S_v := S_v \setminus  h $. We keep repeating this operation until $S_v$  is empty.

This process terminates with $|U | < n d^{- \frac{1}{ k-1}  } $ because, otherwise, we would get a subset $U \subset V$ of size $|U | =n d^{- \frac{1}{ k-1}  } $ spanning  more than   
\begin{align*}
 \frac{1}{k}  \left( \frac{ n}{ d^{ \frac{1}{k-1} }  }   - |U_0|    \right)  \times  9 k^2 \left( \frac{  d}{ (\ln d)^2 } \right)^{ \frac{1}{k-1} }   \times \frac{1}{k } >  \frac{ n}{d^{ \frac{1}{k-1} }  } \times \left( \frac{d}{(\ln d)^2 }  \right)^{ \frac{1}{k-1} } 
\end{align*}
 hyperedges, for sufficiently large $d$. According to Lemma~\ref{simple_properties} however,  $H$ does not contain any such  set asymptotically almost surely.

\end{proof}

\section{Acknowledgements}

The author is grateful to Dimitris Achlioptas, Irit Dinur and anonymous reviewers for detailed comments and feedback.

\bibliographystyle{plain}   

\bibliography{kolmo}

\end{document}